\newcolumntype{g}{>{$}c<{$}}
\newtheorem{theorem}{\textbf{Theorem}}[section]
\newtheorem{lemma}{\textbf{Lemma}}[section]
\newtheorem{proposition}{\textbf{Proposition}}[section]
\newtheorem{definition}{\textbf{Definition}}[section]
\newtheorem{corollary}{\textbf{Corollary}}[section]
\newtheorem{remark}{\textbf{Remark}}[section]
\newcommand{\cF}{\mathcal{F}}
\newcommand{\R}{\mathbb{R}}
\newcommand{\E}{\mathbb{E}}
\newcommand{\N}{\mathbb{N}}
\newcommand{\Px}{\mathbb{P}}
\newcommand{\sgn}{\mathrm{sgn}}
\newcommand{\dd}{\textup{d}}
\newcommand{\phis}{\varphi_\textup{s}}
\newcommand{\phic}{\varphi_\textup{c}}
\newcommand{\ios}{\iota_\textup{s}}
\newcommand{\ioc}{\iota_\textup{c}}
\begin{document}

\title{Dynamic optimal execution in a mixed-market-impact Hawkes price model
%Dynamic optimal execution in a mesoscopic price model including self-excited market orders
}
%\title{DRAFT\\ (version inachev\'ee, ne pas diffuser SVP)}
\author{Aur\'elien Alfonsi, Pierre Blanc\thanks{Universit\'e Paris-Est, CERMICS, Projet MATHRISK
    ENPC-INRIA-UMLV, 6 et 8 avenue Blaise Pascal, 77455 Marne La Vall\'ee, Cedex
    2, France, e-mails : alfonsi@cermics.enpc.fr, blancp@cermics.enpc.fr. P. Blanc is grateful to Fondation Natixis for his Ph.D. grant. This research also benefited
    from the support of the ``Chaire Risques Financiers'', Fondation du Risque.
}} 

\date{\today}
\maketitle

 \begin{abstract}
We study a linear price impact model including other liquidity takers, whose flow of orders follows a Hawkes process. The optimal execution problem is solved explicitly in this context, and the closed-formula optimal strategy describes in particular how one should react to the orders of other traders. This result enables us to discuss the viability of the market. It is shown that Poissonian arrivals of orders lead to quite robust Price Manipulation Strategies in the sense of Huberman and Stanzl~\cite{HS}. Instead, a particular set of conditions on the Hawkes model balances the self-excitation of the order flow with the resilience of the price, excludes Price Manipulation Strategies and gives some market stability.

{\bf Keywords:} Market Impact Model, Optimal Execution, Hawkes Processes, Market Microstructure, High-frequency Trading, Price Manipulations. 

{\bf AMS (2010)}: 91G99, 91B24, 91B26, 60G55, 49J15. 

{\bf JEL}: C02, C61, C62. 

\end{abstract}

%\tableofcontents

\vspace{20mm}

\section{Introduction}

When modeling the price of an asset, we typically distinguish at least three different time scales. At the low-frequency level, the price can often be well approximated by a diffusive process. At the other end, when dealing with very high frequencies, some key features of the Limit Order Book (LOB) dynamics have to be modeled. In between, price impact models consider an intra-day mesoscopic time scale, somewhere between seconds and hours. They usually ignore most of the LOB events (limit orders, cancellations, market orders, etc.) and focus on describing the price impact of the transactions. Their goal is to be more tractable than high-frequency models and to bring quantitative results on practical issues such as optimal execution strategies. 
The usual setup is well-described in Gatheral~\cite{Gatheral}, who defines the price process $S$ as
 \begin{equation}
S_t \ = \ S_0 + \int_0^t f( \dot x_s) G(t - s) \textup{d}s + \int_0^t \sigma dZ_s,
\nonumber
\end{equation}
where $\dot x_s$ is the rate of trading of the liquidating agent at time $s < t$, $f(v)$ represents the instantaneous price
impact of an agent trading at speed $v$, $G$ is called a \enquote{decay kernel} and $Z$ is a noise process. The quantity $f(v) G(+\infty)$ is usually called the \enquote{permanent impact}, $f(v) G(0)$ the \enquote{immediate impact} and $f(v) [G(0^+) - G(+\infty)]$ the \enquote{transient impact}.
The pioneering price impact models of Bertsimas and Lo~\cite{BL} and Almgren and Chriss~\cite{AC} consider a linear impact, with an immediate and a permanent part (which corresponds to $f(v) = \alpha v, \ G(0)>0, \ G(0^+) = G(+\infty) > 0$ with the previous notations). These models ignore the transient part of the impact which is due to the resilience of the market and cannot be neglected when trading frequently. For that purpose, Obizhaeva and Wang~\cite{OW} have considered a model that includes in addition a linear transient impact that decays exponentially (i.e. $f(v) = \alpha v, \ G(u)=\lambda + (1-\lambda) \exp(-\rho u), \ 0\leq\lambda\leq1,\ \rho>0$). However, empirical evidence on market data shows that the price impact is not linear but rather concave, see e.g. Potters and Bouchaud~\cite{PB}, Eisler et al.~\cite{EBK}, Mastromatteo, T\'oth and Bouchaud~\cite{MTB}, Donier~\cite{Donier}
and more recently, Farmer, Gerig, Lillo and Waelbroeck~\cite{FGLW}.
 Extensions or alternatives to the Obizhaeva and Wang model that include non-linear price impact 
have been proposed  by Alfonsi, Fruth and Schied~\cite{AFS}, Predoiu, Shaikhet and Shreve~\cite{PSS}, Gatheral~\cite{Gatheral} and Gu\'eant~\cite{Gueant} to mention a few. Similarly, the exponential decay of the transient impact is not truly observed on market data, and one should consider more general decay kernels. Alfonsi, Schied and Slynko~\cite{ASS} and Gatheral, Schied and Slynko~\cite{GSS} consider the extension of the Obizhaeva and Wang model when the transient impact has a general decay kernel.  
%Another simplification made by these models is that they generally assume that there is only one large trader, and %basically ignore the market orders issued by other participants. 
Another simplification made by these models is that they generally assume that when the liquidating trader is passive, the price moves according to a continuous martingale, that sums up the impact of all the orders issued by other participants. 
However, if one wants to use these models at a higher frequency, they would naturally wonder how these orders (at least the largest ones) can be taken into account in the strategy, and if the martingale hypothesis for the price can be relaxed. This is one of the contributions of the present paper. 

On the other hand, high-frequency price models aim at reproducing some statistical observations made on market data such as the autocorrelation in the signs of trades, the volatility clustering effect, the high-frequency resilience of the price, etc., and to obtain low-frequency asymptotics that are consistent with continuous diffusions. At very high frequencies, one then has to describe LOB dynamics, or a part of it. Such models have been proposed by  Abergel and Jedidi~\cite{AbergelJedidi}, Huang, Lehalle and Rosenbaum~\cite{HLR}, Cont and de Larrard~\cite{ContLarrard}, Gar\`eche et al.~\cite{GDKB}, among others. However, as stressed in~\cite{ContLarrard}, LOB events are much more frequent than price moves. Thus, it may be relevant to model the price at the slightly lower frequency of midpoint price changes. For example, Robert and Rosenbaum~\cite{RobertRosenbaum} have proposed a model based on a diffusion with uncertainty zones that trigger the price changes. Recently, Bacry et al.~\cite{BDHM} presented a tick-by-tick price model based on Hawkes processes, that reproduces well some empirical facts of market data. This model has then been enriched by Bacry and Muzy~\cite{BacryMuzy} to describe jointly the order flow and the price moves. In fact, there is a very recent and active literature that focuses on the use of  mutually exciting Hawkes processes in high-frequency price models. Without being exhaustive, we mention here the works of Da Fonseca and Zaatour~\cite{FonsecaZaatour}, Zheng, Roueff and Abergel~\cite{ZRA}, Filimonov and Sornette~\cite{FilimonovSornette} and Hardiman, Bercot and Bouchaud~\cite{HBB}. Asymptotic and low-frequency behaviour of such models has been investigated recently by Bacry et al.~\cite{BDHM2} and Jaisson and Rosenbaum~\cite{JaissonRosenbaum}. 

The present paper is a contribution to this also mutually exciting literature. Its main goal is to make a bridge between high-frequency price models and optimal execution frameworks. On the one hand, Hawkes processes seem to be rich enough to describe satisfactorily the flow of market orders. On the other hand, price impact models are tractable and well-designed to calculate trading costs. The aim of our model is to grasp these two features. Thus, we consider an Obizhaeva and Wang framework where market buy and sell orders issued by other traders are modeled through Hawkes processes. This enables us to make quantitative calculations and to solve the optimal execution problem explicitly. We obtain a necessary and sufficient condition on the parameters of the Hawkes model to rule out Price Manipulation Strategies that can be seen as high-frequency arbitrages. Interestingly, we also show that modeling the order flow with a Poisson process necessarily leads to those arbitrages. 

The paper is organized as follows. In Section~\ref{section:model_setup}, we set up the model and present a general criterion to exclude Price Manipulation Strategies. 
Section~\ref{section:mr} summarizes our main results.
Section~\ref{section:opt_strat} gives the solution of the optimal execution problem along with several comments and insights on the optimal strategy.
Eventually, we analyze the existence of Price Manipulation Strategies in our model in Section~\ref{section:PMS} and give the conditions under which they are impossible.
Cumbersome explicit formulas and technical proofs are gathered in the Appendix.

\section{Model setup and the optimal execution problem}\label{section:model_setup}
\subsection{General price model}
We start by describing the price model itself, without considering the execution problem. We consider a single asset and denote by $P_t$ its price at time $t$. We assume that we can write it as the sum of a \enquote{fundamental price} component $S_t$ and a \enquote{mesoscopic price deviation} $D_t$:
\begin{equation}\label{defP}
P_t  \ = \underset{\text{fundamental price}}{\underbrace{S_t}}
\ + \ \underset{\text{mesoscopic price deviation}}{\underbrace{D_t}.}
\end{equation}
Typically, these quantities are respectively related to the permanent and the transient impact of the market orders. We now specify the model and consider the framework of Obizhaeva and Wang~\cite{OW} where these impacts are linear. Let $N_t$ be the sum of the signed volumes of past market orders on the book between time~$0$ and time~$t$. By convention, a buy order is counted positively in~$N$ while a sell order makes~$N$ decrease, and we assume besides that $N$ is a  c\`adl\`ag (right continuous with left limits) process.  We assume that an order modifies the price proportionally to its size, which would correspond to a block-shaped limit order book. A proportion $\nu \in [0, 1]$ of the price impact is permanent, while the remaining proportion $1-\nu$ is transient with an exponential decay of speed $\rho > 0$. This mean-reversion effect can be seen as the feedback of market makers, who affect the price using limit orders and cancellations. Namely, we consider the following dynamics for~$S$ and $D$:
\begin{align*}
\textup{d}S_t  & \quad = \quad  \frac \nu q \ \underset{\text{market orders}}{\underbrace{\textup{d}N_t}} 		\\
%\nonumber
\textup{d}D_t  & \quad = \quad \underset{\text{market resilience}}{\underbrace{- \rho \ D_t \ \textup{d}t}} 
\ + \ \frac{1 - \nu}q \underset{\text{market orders}}{\underbrace{\textup{d}N_t},} 	%	\nonumber
\end{align*}
with $q>0$. 
One should note that in this model, the variations in the fundamental value of the asset are revealed in its price through the process $S$. Indeed, we assume that the impact of each incoming market order, modeled through the process $N$, contains of proportion $\nu$ of \enquote{real} or \enquote{exogenous} information, and that the remaining proportion $1-\nu$ is of endogenous origin and will vanish over time.
\begin{remark}\label{rmk:unrealistic}
This model assumes a linear price impact with an exponential resilience. As mentioned in the introduction, these assumptions are challenged by empirical facts, and it would be for sure interesting and relevant to enrich the model by considering a non linear price impact and a more general decay of the impact. However, the new feature of the model with respect to the literature on optimal execution is to add a flow of market orders issued by other traders. This is why we afford to make these simplifying assumptions that give analytical tractability, which is important to calculate the optimal execution strategy in real time.  Thus, the model is meant to constitute a first step in dynamic optimal execution with the price driven by point processes, and we plan to confront it to market data in a future work.
\end{remark}
As usual, we consider $(\Omega,\cF,\Px)$ a probability space where $\Px$ weights the probability of the market events. We assume that the process $(N_t)_{t\ge 0}$ has bounded variation and is square integrable, i.e. $\sup_{s\in [0,t]}\E[N_s^2]<\infty$ for any $t\ge 0$, and we define $(\cF_t)_{t \geq 0}$ the natural filtration of $N$, $\cF_t = \sigma(N_s, s\leq t)$ for $t \geq 0$. We will specify in Section~\ref{section:MIH} which dynamics we consider for~$N$ in this paper.

\subsection{Optimal execution framework}
We now consider a particular trader who wants to buy or sell a given quantity of assets on the time interval~$[0,T]$. Through the paper, we will call this trader the ``strategic trader'' to make the distinction between his market orders and all the other market orders, that are described by~$N$. We will denote by $X_t$ the number of assets owned by the strategic trader at time~$t$. We assume that the process is $(\cF_t)$-adapted, with bounded variation and c\`agl\`ad (left continuous with right limits) which means that the strategic trader observes all the information available on the market, and that he can react instantly to the market orders issued by other traders. Besides, a  strategy that liquidates $x_0$ assets on~$[0,T]$ should satisfy $X_0=x_0$ and $X_{T+}=0$: $x_0>0$ (resp. $x_0<0$) corresponds to to a sell (resp. buy) program. 
\begin{definition}\label{def:liq_strat}
A liquidating strategy~$X$ for the position $x_0\in\R$ on $[0,T]$ is admissible if it is  $(\mathcal{F}_t)$-adapted, c\`agl\`ad, square integrable, with bounded variation and such that $X_0 = x_0$ and $X_{T^+} = 0$, a.s. 
\end{definition}
\begin{remark}
An admissible strategy $X$ has a countable set $\mathcal{D}_X$ of times of discontinuity on $[0,T]$, and can have a non-zero continuous part $X^\textup{c}_t = X_t - \underset{\tau \in \mathcal{D}_X \cap [0,t)}{\sum} (X_{\tau^+}-X_\tau), \ t \in [0,T]$.
\end{remark}
One then has to specify how the strategic trader modifies the price, as well as the cost induced by his trading strategy. Again, we will consider the Obizhaeva and Wang model~\cite{OW} with the same price impact as above. However, we let the possibility that the proportion $\epsilon \in [0,1]$ of permanent impact of the strategic trader could be different from the one of the other traders, which we note $\nu \in [0,1]$. 
Of course, a reasonable choice would be to set $\epsilon = \nu$ to consider all orders equally, but the model allows for more generality.
%For instance, one could justify that $0 \leq \epsilon<\nu$ should hold if the strategic trader does not hold any \enquote{real} economic information when he liquidates his position. 
We then assume the following dynamics
\begin{align}
\textup{d}S_t  & \quad = \quad  \frac 1 q \left( \nu  \textup{d}N_t +\epsilon \textup{d}X_t \right), 	\label{dynS}	\\
%\nonumber
\textup{d}D_t  & \quad = - \rho \ D_t \ \textup{d}t
 +  \frac 1 q \left( (1-\nu)  \textup{d}N_t + (1-\epsilon) \textup{d}X_t \right). \label{dynD}		%	\nonumber
\end{align}
With the assumptions on~$N$ and $X$, the price processes $P$, $S$ and $D$ have left and right limits. More precisely, in case of discontinuity at time~$t$,~\eqref{dynS} and~\eqref{dynD} have to be read here as follows
\begin{align*}
S_t-S_{t-}  & =  \frac \nu q   (N_t-N_{t-}), \  S_{t+}-S_{t}   =  \frac \epsilon q (X_{t+}-X_t),	\\
%\nonumber
D_t-D_{t-}  &= \frac{1-\nu} q   (N_t-N_{t-}) , \  D_{t+}-D_{t}   = \frac{1-\epsilon} q (X_{t+}-X_t).
\end{align*}
For the sake of tractability only, we make the assumption of a block-shaped Limit Order Book. 
Thus (see~\cite{OW}), when the strategic trader places at  time $t$ an order of size~$v\in\R$ ($v>0$ for a buy order and $v<0$ for a sell order), it has the following cost
$$\pi_t(v) \quad = \quad \int_0^v \left[ P_t+ \frac{1}q y \right] \ \textup{d}y \quad = \quad \underset{\text{cost at the current price}}{\underbrace{P_t \ v}}  \ + \ \underset{\text{impact cost}}{\underbrace{\frac {v^2} {2q}}}. $$
Since $P_{t+}=P_t+\frac{v}{q}$, this cost amounts to trade all the assets at the average price $(P_t+P_{t+})/2$. We stress here that if an order has just occurred, i.e. $N_t-N_{t-}\not =0$, the value of~$P_t$ is different from $P_{t-}$ and takes into account the price impact of this order.  
Therefore, the cost of an admissible strategy~$X$ is given by
\begin{align}\label{CostX}
C(X)&=\int_{[0,T)} P_u \ \textup{d}X_u 
\ + \ \frac1{2q} \underset{\tau \in \mathcal{D}_X \cap [0,T)}{\sum} (\Delta X_\tau)^2
 \ - \ P_T X_T \ + \ \frac1{2q} \ X_T^2 \\
&=\int_{[0,T)} P_u \ \textup{d}X^c_u 
\ + \  \underset{\tau \in \mathcal{D}_X \cap [0,T)}{\sum} P_{\tau} (\Delta X_\tau) + \ \frac1{2q} \underset{\tau \in \mathcal{D}_X \cap [0,T)}{\sum} (\Delta X_\tau)^2
 \ - \ P_T X_T \ + \ \frac1{2q} \ X_T^2, \nonumber
\end{align}
since at time~$T$ all the remaining assets have to be liquidated. Here, the sum brings on the countable times of discontinuity $\mathcal{D}_X$ of~$X$, and the jumps $\Delta X_\tau=X_{\tau+}-X_\tau \neq 0$ for $\tau \in \mathcal{D}_X$. We note that all the terms involved in the cost function are integrable, thanks to the assumption on the square integrability of~$X$ and~$N$.
\begin{remark}\label{rk:add_liq_cost}
With the initial market price $P_0$ taken as a reference, $-P_0 \times x_0$ is the mark-to-market liquidation cost. Thus, $C(X)+P_0 \times x_0$ can be seen as an additional liquidity cost of it is positive. If it is negative, its absolute value can be seen as the gain associated to the strategy~$X$.
\end{remark}
\begin{remark}\label{Rk_Cost}
The cost defined by~\eqref{CostX} in the price model~\eqref{defP}, \eqref{dynS} and~\eqref{dynD} is a deterministic function of $(X_t)_{t\in[0,T]}$, $(N_t)_{t\in[0,T]}$, $S_0$, $D_0$ and the parameters~$q$, $\nu$, and $\epsilon$. In this remark, we denote by $C(X,N,S_0,D_0,q)$ this function when  $\nu$ and $\epsilon$ are given. From~\eqref{dynS}, \eqref{dynD} and~\eqref{CostX}, we have the straightforward property
\begin{equation}\label{CostX_symm}
C(X,N,S_0,D_0,q)=C(-X,-N,-S_0,-D_0,q). 
\end{equation}
Observing  that
$qC(X)=\int_{[0,T)} qP_u  \textup{d}X_u
 +  \frac{1}{2} \underset{0\leq\tau<T}{\sum} (\Delta X_\tau )^2   -  (qP_T) X_T  +   \frac 12 (X_T)^2$, and remarking that
 $qS$ and $qD$ satisfy~\eqref{dynS} and~\eqref{dynD} with $q=1$, we also get
\begin{equation}\label{CostX_symm2}
qC(X,N,S_0,D_0,q)=C(X,N,qS_0,qD_0,1). 
\end{equation}
\end{remark}
\begin{remark}\label{notations_ladlag}
Since $X$ is a c\`agl\`ad process and $N$ is a c\`adl\`ag process, we will have to work with  l\`adl\`ag (with finite right-hand and left-hand limits) processes. When $Z$ is a l\`adl\`ag process, we set
$\Delta^- Z_t = Z_{t} - Z_{t-}$ and $\Delta^+ Z_t = Z_{t^+} - Z_t$ the left and right jumps of~$Z$, and $Z^\text{c}_t = Z_t - \sum_{0 \leq \tau < t} \Delta^+ Z_\tau - \sum_{0 < \tau \leq t} \Delta^- Z_\tau $ the continuous part of $Z$. We also set $\Delta Z_t = Z_{t+} - Z_{t-}$ and use the shorthand notation $\dd Z_t=\dd Z^c_t+\Delta Z_t$. If $\dd Z_t=\dd \tilde{Z}_t$ for some other l\`adl\`ag process~$\tilde{Z}$, this means that $\dd Z^c_t=\dd \tilde{Z}^c_t$ and $\Delta Z_t=\Delta \tilde{Z}_t$. In particular, when $Z$ is c\`adl\`ag and $\tilde{Z}$ is c\`agl\`ad, this means that  $Z_{t} - Z_{t-}=\tilde{Z}_{t^+} - \tilde{Z}_t$ at the jump times. 
\end{remark}

Then, the optimal execution problem consists in finding an admissible strategy~$X$ that minimizes the expected cost~$\E[C(X)]$ for a given initial position~$x_0\in \R$. This problem for $x_0=0$ is directly related to the existence of Price Manipulation Strategies as defined below.
\begin{definition}\label{def:PMS}
A Price Manipulation Strategy (PMS) in the sense of Huberman and Stanzl~\cite{HS} is an admissible strategy $X$ such that  $X_0 = X_{T^+} = 0$ a.s. for some $T>0$ and $\E[C(X)]<0$.
\end{definition}

We have the following result that gives a necessary and sufficient condition to exclude PMS.
\begin{theorem}\label{Thm_Pmg} The model does not admit PMS if, and only if the process $P$ is a $(\mathcal{F}_t)$-martingale when $X \equiv 0$. In this case, the optimal strategy $X^\textup{OW}$ is the same as in the Obizhaeva and Wang~\cite{OW} model. It is given by
\begin{align}\label{opt_strat_Pmg}
 \Delta X^\textup{OW}_0  =  -  \frac{ x_0}{2+\rho T}, \quad 
 \Delta X^\textup{OW}_T  =  - \frac {x_0} {2+\rho T} , \quad 
 \textup{d} X^\textup{OW}_t  = -  \rho \frac { x_0} {2 + \rho T} \ \textup{d}t \ \text{ for } t\in (0,T),
\end{align}
and has the expected cost  $\mathbb{E}[C(X^\textup{OW})]= - P_0  x_0  + \left[ \frac {1-\epsilon} {2 + \rho (T-t)}  +  \frac\epsilon2 \right] x_0^2/q$.
\end{theorem}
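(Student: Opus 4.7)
The natural approach is to exploit the linearity of the price dynamics~\eqref{dynS}--\eqref{dynD} and decompose the problem into an exogenous part (driven only by the flow $N$) and an endogenous part (driven only by the strategy $X$), reducing the analysis to the deterministic Obizhaeva--Wang problem plus a correction term.

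First, let $P^{(0)}$ denote the price process obtained by setting $X\equiv 0$, so that $P^{(0)}$ depends only on $N$, $S_0$, $D_0$. By linearity, $P=P^{(0)}+\tilde P$, where $\tilde P$ is a deterministic path-functional of $X$ alone (solving~\eqref{dynS}--\eqref{dynD} with $N\equiv 0$ and zero initial conditions). Substituting this decomposition in~\eqref{CostX} splits the cost as $C(X)=C_N(X)+C_\textup{OW}(X)$, where
$$C_N(X)=\int_{[0,T)} P^{(0)}_u\,\dd X_u - P^{(0)}_T X_T,$$
and $C_\textup{OW}(X)$ (collecting all terms involving $\tilde P$, the quadratic jump contributions, and the closing term $X_T^2/(2q)$) is exactly the cost of $X$ in the pure Obizhaeva--Wang model without exogenous flow. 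The latter is a deterministic quadratic functional of $X$ whose minimum over admissible liquidations with $X_0=x_0$, $X_{T^+}=0$ is attained at~\eqref{opt_strat_Pmg}, with value $\bigl[\tfrac{1-\epsilon}{2+\rho T}+\tfrac{\epsilon}{2}\bigr]x_0^2/q$, by the classical analysis of~\cite{OW}.

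Next, I would evaluate $\mathbb{E}[C_N(X)]$ by integration by parts. Using the l\`adl\`ag conventions of Remark~\ref{notations_ladlag} together with $X_0=x_0$ and $X_{T^+}=0$ (the boundary contribution at $T$ cancels the $-P^{(0)}_TX_T$ piece), one obtains
$$C_N(X)=-x_0 P_0-\int_{(0,T]} X_u\,\dd P^{(0)}_u.$$
The square-integrability assumptions on $N$ and $X$ ensure that if $P^{(0)}$ is a true $(\mathcal{F}_t)$-martingale, then $\mathbb{E}\bigl[\int_{(0,T]} X_u\,\dd P^{(0)}_u\bigr]=0$, whence $\mathbb{E}[C(X)]=-x_0 P_0+C_\textup{OW}(X)$. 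Minimizing in $X$ then reduces exactly to the Obizhaeva--Wang problem, yielding the optimal strategy~\eqref{opt_strat_Pmg} and the announced expected cost; in particular, $\mathbb{E}[C(X)]\ge 0$ whenever $x_0=0$, ruling out PMS.

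For the converse, I would argue by contradiction: if $P^{(0)}$ fails to be a martingale, pick $0\le s<t\le T$ and $A\in\mathcal{F}_s$ with $\beta:=\mathbb{E}[\mathbf{1}_A(P^{(0)}_t-P^{(0)}_s)]\neq 0$, and consider the admissible round-trip $X^\alpha_u=\alpha\mathbf{1}_A\mathbf{1}_{(s,t]}(u)$ with $x_0=0$. Then $\mathbb{E}[C_\textup{OW}(X^\alpha)]=\alpha^2 c$ for some $c\ge 0$ (quadratic in $\alpha\mathbf{1}_A$), while $\mathbb{E}[C_N(X^\alpha)]=-\alpha\beta$ is linear in $\alpha$. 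Choosing the sign of $\alpha$ opposite to that of $\beta$ with small enough $|\alpha|$ yields $\mathbb{E}[C(X^\alpha)]<0$, a PMS. The main obstacle is the integration-by-parts step: one must treat the mixed c\`agl\`ad/c\`adl\`ag jump structure of $X$ and $P^{(0)}$ carefully, and verify that $\int X\,\dd P^{(0)}$ is a true martingale, not merely a local one, under the square-integrability hypotheses.
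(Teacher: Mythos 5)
Your proposal is correct and follows essentially the same route as the paper's proof in Appendix~\ref{appendix:proof_Thm_Pmg}: decompose $P=P^N+P^X$ into the $N$-driven and $X$-driven parts, split the cost into $-\int X\,\dd P^N$ (plus the boundary term $-P_0x_0$) and the deterministic Obizhaeva--Wang cost, kill the first term by the martingale property, and for the converse build a round-trip supported on $(s,t]$ measurable with respect to $\mathcal{F}_s$ and scale it so the linear (negative) term dominates the quadratic one. The only cosmetic difference is that the paper uses the bet $X_u=\E[P^N_t-P^N_s\,|\,\mathcal{F}_s]\mathbf{1}_{(s,t]}(u)$ rather than an indicator $\mathbf{1}_A$, which is an equivalent choice.
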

This theorem is proved in Appendix~\ref{appendix:proof_Thm_Pmg}.
Similar results are standard in financial mathematics, but to the best of our knowledge, it has not yet been formulated as such in the literature in a context with price impact and with respect to the notion of Price Manipulation Strategies. In usual optimal execution frameworks, the unaffected price is assumed \textit{a priori} to be a martingale, which is not the case here.
Note that if $P$ is a martingale, the optimal strategy is very robust in the sense that it does not depend on~$N$, and is therefore the same as the one in the Obizhaeva and Wang model~\cite{OW} that corresponds to $N\equiv 0$ and $D_0=0$. In fact, it does not depend either on~$\epsilon$ and $\nu$, and only depends on~$\rho$.

Theorem~\ref{Thm_Pmg} indicates that suitable models for the order flow~$N$ should be such that $P$ is, roughly speaking, close to a martingale when the strategic trader is absent, so that arbitrage opportunities are short-lived and not too visible.
% However, this result is obtained in an idealized framework where the market can be precisely described by the  price model~\eqref{defP}, \eqref{dynS} %and~\eqref{dynD} and where the strategic trader knows the model parameters.  If this model can be well fitted to market data, one may expect that the %process $P$ to be, roughly speaking, nearly a martingale.
 This raises at least three questions. Which ``simple'' processes $N$ can lead to a martingale price~$P$? Can we characterize the optimal strategy when $P$ is not a martingale? In particular, in the latter case, how does the optimal strategy take the market orders issued by other participants into account?
%Last, when $P$ is not a martingale, can we find strategies that still have a negative expected cost if the strategic trader has a wrong estimation of the %parameters?
 In this paper, we study these questions when $N$ follows a Hawkes process.

\begin{remark}\label{model_plus_mg}
The model can be generalized by adding a c\`adl\`ag $(\cF_t)$-martingale $S^0$ to the price process $P$, i.e. if we replace~\eqref{defP} by
$P_t = S_t + D_t + S^0_t$, with $S^0_0 = 0$. This does not change the optimal execution problem since, using an integration by parts, $S^0$ adds the following term to the cost 
\begin{eqnarray}
\int_{[0,T)} S^0_t \ \textup{d}X_t \ - \ S_T^0 X_T
& = & S_T^0 X_T \ - \ S_0^0 X_0 \ - \ \int_{[0,T)} X_t \ \textup{d}S^0_t 
\ - \ S_T^0 X_T
\nonumber \\
& = & - \ \int_{[0,T)} X_t \ \textup{d}S^0_t,
\nonumber
\end{eqnarray}
which has a zero expected value from the martingale property. Let us note that there is no covariation between the processes~$X$ and $S_0$ since they do not jump simultaneously and $X$ has bounded variations. 
\end{remark}
\begin{remark}\label{Rk_N_mg}
Similarly, when $N$ is a c\`adl\`ag $(\cF_t)$-martingale and $X$ is an admissible liquidating strategy for $X_0=x_0$, we have 
$$\E[C(X)]=\E \left[\int_{[0,T)} D_u \ \textup{d}X_u 
\ + \ \frac{1-\epsilon}{2q} \underset{0\leq\tau<T}{\sum} (\Delta X_\tau)^2
 \ - \ D_T X_T \ + \ \frac{1-\epsilon}{2q} \ X_T^2 \right] +  \frac \epsilon {2q} x_0^2, $$
since $x_0^2=\int_{[0,T+]}\dd [(X_t-X_0)^2]=2\int_{[0,T)}(X_u-X_0) \dd X_u +   \underset{0\leq\tau<T}{\sum} (\Delta X_\tau )^2 - 2(X_T-X_0)X_T +X_T^2$. When $\epsilon\in [0,1)$, we set $X^\epsilon_t=(1-\epsilon)X_t$ and get
\begin{equation}\label{cout_N_mg}
\E[C(X)]=\frac{1}{q(1-\epsilon)}\E \left[\int_{[0,T)} qD_u  \textup{d}(X^\epsilon_u) 
 +  \frac{1}{2} \underset{0\leq\tau<T}{\sum} (\Delta X^\epsilon_\tau)^2
  -  qD_T X^\epsilon_T  +  \frac{1}{2}  (X^\epsilon_T)^2 \right] +  \frac \epsilon {2q} x_0^2.
\end{equation}
Therefore, $X$ is optimal if, and only if $X^\epsilon$ is optimal in the model with $\epsilon = \nu = 0$, $q=1$ and an incoming flow of market orders equal to $(1-\nu)N$. 
\end{remark}

\subsection{The MIH model}\label{section:MIH}

\subsubsection{Definitions and notations}
We introduce the MIH (Mixed-market-Impact Hawkes) price model, where
\begin{equation}
N_t = N_t^+ - N_t^-,
\nonumber
\end{equation}
the process $(N^+, N^-)$ being a symmetric two-dimensional marked Hawkes process 
of intensity $(\kappa^+, \kappa^-)$. The process  $(N^+,N^-,\kappa^+,\kappa^-)$ is c\`adl\`ag and jumps when $N$ jumps.
We note $n^+(\textup{d}t, \textup{d}v)$ and $n^-(\textup{d}t, \textup{d}v)$ the Poisson measures on $\mathbb{R}^+ \times \mathbb{R}^+$ associated to $N^+$ and $N^-$ respectively, where the variable $v$ stands for the amplitudes of the jumps, i.e. the volumes of incoming market orders. We restrain to the case of i.i.d. unpredictable marks of common law $\mu$ on $\mathbb{R}^+$, i.e. for any $A \in \mathcal{B}(\mathbb{R}^+)$ and
$t \geq 0$,
\begin{equation}
\kappa^\pm_t \ \mu(A) = \lim_{h \rightarrow 0^+} \frac1h \E[n^\pm([t,t+h],A) | \mathcal{F}_t ],
\nonumber
\end{equation}
where $\mathcal{F}_t = \sigma\left(N^+_u, N^-_u, u \leq t\right) = \sigma\left(N_u, u \leq t\right)$ as defined earlier. In other words, at time $t$, the  conditional instantaneous jump intensity of $N^\pm$ is given by $\kappa^\pm_t$, and the amplitudes of the jumps are i.i.d. variables of law $\mu$ which are independent from the past, i.e. from $\mathcal{F}_{t^-}$. We also define
$$m_k = \int_{\mathbb{R}^+} v^k  \mu(\textup{d}v), \ k \in \N,$$
assuming moreover that $m_2<\infty$.
We choose the Hawkes kernel to be the exponential $t \mapsto \exp(-\beta t), \ \beta \geq 0$, so that $(N^+,N^-,\kappa^+,\kappa^-)$ is Markovian. %% It is thus possible to consider the process starting from time $0$ instead of $-\infty$, and 
Thus, we set
\small
\begin{equation}
\begin{pmatrix}
\kappa^+_t \\
\kappa^-_t
\end{pmatrix}
=
\begin{pmatrix}
\kappa_\infty \\
\kappa_\infty
\end{pmatrix}
 + \left[
\begin{pmatrix}
\kappa^+_0 \\
\kappa^-_0
\end{pmatrix}
-
\begin{pmatrix}
\kappa_\infty \\
\kappa_\infty
\end{pmatrix}
\right] \exp(-\beta t)
+ \int_0^t \int_{\left(\mathbb{R}^+\right)^2}
\exp(-\beta (t-u))
\begin{pmatrix}
\phis(v^+/m_1) & \phic(v^-/m_1) \\
\phic(v^+/m_1) & \phis(v^-/m_1) 
\end{pmatrix}
.
\begin{pmatrix}
n^+(\textup{d}u, \textup{d}v^+) \\
n^-(\textup{d}u, \textup{d}v^-) 
\end{pmatrix},
\nonumber
\end{equation}
\normalsize
where $\kappa_\infty \geq 0$ is the  common baseline intensity of $N^+$ and $N^-$, and $\phis, \phic: \mathbb{R}^+ \rightarrow \mathbb{R}^+$ are measurable positive functions that satisfy 
\begin{equation}
\ios := \int_{\mathbb{R}^+} \phis(v/m_1)  \mu(\textup{d}v) <\infty
\quad , \quad
\ioc := \int_{\mathbb{R}^+} \phic(v/m_1)  \mu(\textup{d}v) <\infty.
\nonumber
\end{equation}
We assume besides that
$$
\int_{\mathbb{R}^+} \phis^2(v/m_1)  \mu(\textup{d}v) <\infty, \
\int_{\mathbb{R}^+} \phic^2(v/m_1)  \mu(\textup{d}v) <\infty
$$
to have $\sup_{s \in [0,t]} \E[N_s^2]< \infty$, and we note that this property is automatically satisfied when $\phis$ and $\phic$ have a sublinear growth since we have assumed $m_2<\infty$. From the modeling point ov view, we may expect that the functions $\phis$ and $\phic$ are nondecreasing: the larger an order is, the more other orders it should trigger. However, we do not need this monotonicity assumption in the mathematical analysis.

%
%The Hawkes model extends the previous one. It assumes that $N_t = N_t^+ - N_t^-$, where $(N_t^+)_{t \in [0,T]}$ and $(N_t^-)_{t \in [0,T]}$ are two self and mutually-exciting marked jump processes, with the same common law $\mu$ for the jump amplitudes. Their respective intensities $\kappa^+_t$ and $\kappa^-_t$ are now assumed to be c\`adl\`ag processes which follow the Markovian marked Hawkes dynamics:
% It assumes that $N_t = N_t^+ - N_t^-$, where $(N_t^+)_{t \in [0,T]}$ and $(N_t^-)_{t \in [0,T]}$ are two self and mutually-exciting jump processes, with the same common law $\mu$ for the jump amplitudes. Their respective intensities $\kappa^+_t$ and $\kappa^-_t$ are now assumed to be c\`adl\`ag processes which follow the Markovian Hawkes dynamics:
%
%\begin{equation}
%\textup{d}\kappa^+_t  \ = \ - \beta \ (\kappa^+_t - \kappa_{\infty}) \ \textup{d}t
%\ + \ \iota_\textup{s} \ \textup{d}J^+_t + \ \iota_\textup{c} \ \textup{d}J^-_t 
%\quad , \quad
%\textup{d}\kappa^-_t  \ = \ - \beta \ (\kappa^-_t - \kappa_{\infty}) \ \textup{d}t
%\ + \ \iota_\textup{c} \ \textup{d}J^+_t + \ \iota_\textup{s} \ \textup{d}J^-_t
%\label{dyn_kappas}
%\end{equation}
%
Equivalently, in this Markovian setting, the intensities $\kappa^+_t$ and $\kappa^-_t$ follow the dynamics
\begin{align}
\textup{d}\kappa^+_t  &= \ - \beta \ (\kappa^+_t - \kappa_{\infty}) \ \textup{d}t
\ + \ \phis(\textup{d}N^+_t/m_1) + \ \phic(\textup{d}N^-_t/m_1),
\nonumber \\
\textup{d}\kappa^-_t  &= \ - \beta \ (\kappa^-_t - \kappa_{\infty}) \ \textup{d}t
\ + \ \phic(\textup{d}N^+_t/m_1) + \ \phis(\textup{d}N^-_t/m_1),
\label{dyn_kappas}
\end{align}
where formally,
$\int_0^t \phis(\textup{d}N^+_u/m_1) = 
\int_0^t \int_{\mathbb{R}^+} \phis(v/m_1) \ n^+(\textup{d}u, \textup{d}v)$ for $t \geq 0$.
%with $\iota_\textup{c},\iota_\textup{s} ,\beta,\kappa_\infty \geq 0$, and where $J^+$ (resp. $J^-$) is a c\`adl\`ag jump process that counts the number of buy market orders (resp. sell market orders) from environment traders since time $0$, and the two processes cannot jump simultaneously. In other words, $J = J^+-J^-$ jumps of $\pm 1$ when $N$ jumps and we have $\Delta J_t = \text{sign}(\Delta N_t)$.
%
As pointed out in Hardiman, Bercot and Bouchaud~\cite{HBB} and Bacry and Muzy~\cite{BacryMuzy} for instance (in a slightly different context since in our framework, $N$ models market orders only), a power-law Hawkes kernel is more in accordance with market data than an exponential one. It is possible in principle to approximate a completely monotone decaying kernel with a multi-exponential one while preserving a Markovian framework, at the cost of increasing the dimension of the state space, see for example Alfonsi and Schied~\cite{AS_SICON}. This investigation is left for future research. 

Note that $N^+$ and $N^-$ boil down to independent composed Poisson processes in the case $\beta = 0, \ \phis = \phic \equiv 0$. 
The meaning of the parameters is rather clear: $\kappa^+$ and $\kappa^-$ are mean reverting processes, and $\iota_\textup{s}$ and $\iota_\textup{c}$ respectively describe how a market buy  order increases the instantaneous probability of buy (resp. sell) orders. More precisely, $\iota_\textup{s}$ encodes both the splitting of meta-orders, and the fact that participants tend to follow market trends (which is called the herding effect).
%Empirically, it is found by T\`oth et al.~\cite{TPLF} that the main contribution comes from the splitting effect. 
On the other hand, $\iota_\textup{c}$ describes opportunistic traders that sell (resp. buy) after a sudden rise (resp. fall) of the price. The functions $\phis$ and $\phic$ allow respectively the self and cross-excitations in the order flow to depend on the volumes of the orders. For instance, for constant functions $\phis \equiv \ios$ and $\phic \equiv  \ioc$, the model boils down to the standard Hawkes model where 
$\kappa^\pm$ makes jumps of constant size when $N^\pm$ jumps.

Hawkes processes have been recently used in the literature to model the price. In particular, Bacry et al.~\cite{BDHM} consider a similar model where $N$ models all price moves, with $\nu=1$, $\iota_\textup{s}=0$ and deterministic jumps (i.e. $\mu$ is a Dirac mass). More recently, Bacry and Muzy~\cite{BacryMuzy} have proposed an four-dimensional Hawkes process to model the market buy and sell orders together with the up and down events on the price. In contrast, the model that we study here determines the price impact of an order in function of its size.  For the reader who is not accustomed to Hawkes processes, we point the original paper~\cite{HawkesMarks}, the paper by Embrechts et al.~\cite{ELL} for an overview of multivariate marked Hawkes processes and the book of Daley and Vere-Jones~\cite{DVJ} for a more detailed account.

\begin{remark}
As one can see in Equation~\eqref{dyn_kappas}, the orders of the strategic trader do not impact the jump rates $\kappa^+$ and $\kappa^-$ (there is no $\textup{d}X_t$ term), as opposed to the market orders issued by other traders. The first reason for this modeling choice is tractability. However, it is found empirically by T\`oth et al.~\cite{TPLF} that the main contribution to the self-excitation of the order flow comes from the splitting effect. Each individual trader tends to post several orders of the same nature (buy or sell) in a row, which creates auto-correlation in the signs of trades, and this effect is significantly stronger than the mutual excitation between different traders. Thus, it is an acceptable approximation to neglect the excitation coming from the orders of the strategic trader. Of course, it would be nice to find in the future a tractable model that gives a unified framework for the mutual excitation that considers equally all the market orders. 
\end{remark}

\subsubsection{Stationarity and low-frequency asymptotics of the MIH model}\label{section:low-freq_MIH}
Up to now, we have presented the MIH model without assuming stationarity. In most models featuring Hawkes processes, stationarity is an \textit{a priori} assumption, but here, we do not need it to derive the optimal strategy. However, if one wishes to use the MIH model with constant parameters on a large time period, it may be reasonable to consider parameters that satisfy stationarity. This is why we present here a few results that are standard in the literature of Hawkes processes. 

We consider the MIH model when the strategic trader is absent, i.e. $X\equiv 0$.
\begin{proposition}\label{prop_kappa_stat}
The process $(\kappa^+_t,\kappa^-_t)$ converges to a stationary law if, and only if $\ios + \ioc <\beta$.
\end{proposition}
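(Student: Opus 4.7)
The plan is to reduce the question to the sign of $\beta-\ios-\ioc$, which is the subcriticality threshold of the symmetric two-dimensional marked Hawkes process: the sufficiency direction is then classical ergodicity of subcritical Hawkes processes, while the converse combines a linear ODE for the first moments with the branching (cluster) interpretation to cover the case where moments could be infinite.

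First I would apply the compensation formula to~\eqref{dyn_kappas}: since $n^\pm(\textup{d}u,\textup{d}v)-\kappa^\pm_u\,\textup{d}u\,\mu(\textup{d}v)$ is a martingale measure and $\int\phis(v/m_1)\mu(\textup{d}v)=\ios$, $\int\phic(v/m_1)\mu(\textup{d}v)=\ioc$, the means $\mu^\pm_t:=\E[\kappa^\pm_t]$ satisfy
$$\dot{\mu}^\pm_t=-\beta(\mu^\pm_t-\kappa_\infty)+\ios\mu^\pm_t+\ioc\mu^\mp_t.$$
By symmetry, $M_t:=\mu^+_t+\mu^-_t$ solves the scalar ODE $\dot{M}_t=-(\beta-\ios-\ioc)M_t+2\beta\kappa_\infty$, which admits a finite non-negative equilibrium if and only if $\beta>\ios+\ioc$.

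For the necessity direction, assume $(\kappa^+_t,\kappa^-_t)$ converges in law to some $\pi$ on $(\R^+)^2$, so that $\pi$ is invariant for the Markov process. If $\bar{M}:=\int(x^++x^-)\,\pi(\textup{d}x)$ is finite, starting at $\pi$ gives $M_t\equiv\bar{M}$, so the ODE above forces $(\beta-\ios-\ioc)\bar{M}=2\beta\kappa_\infty$; non-negativity of $\bar{M}$ together with $\kappa_\infty>0$ then imposes $\ios+\ioc<\beta$. When $\bar M=+\infty$, I would invoke the Hawkes--Oakes cluster representation of the marked bivariate process: the expected total progeny of one immigrant is the Neumann series $\sum_{n\ge 0} H^n \mathbf{1}$ with branching matrix $H=\beta^{-1}\bigl(\begin{smallmatrix}\ios & \ioc\\ \ioc & \ios\end{smallmatrix}\bigr)$ of spectral radius $(\ios+\ioc)/\beta$, which diverges whenever $\ios+\ioc\ge\beta$; hence any cluster starting after time $0$ is a.s.\ infinite, $\kappa^\pm_t\to+\infty$ in probability, and convergence to a proper law is impossible.

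For sufficiency, when $\ios+\ioc<\beta$ the matrix $H$ is subcritical and the Hawkes--Oakes construction based on immigrants of rate $\kappa_\infty$ on the whole real line produces a unique stationary version of $(N^+,N^-,\kappa^+,\kappa^-)$; coupling this stationary process with the one started from the deterministic initial condition $(\kappa^+_0,\kappa^-_0)$ via the same post-$0$ immigrants and marks shows that the two intensities differ by a transient term that decays because the associated offspring process is subcritical, yielding convergence in law. A self-contained alternative is a Foster--Lyapunov argument with $V(x^+,x^-)=x^++x^-$, whose infinitesimal generator satisfies $\mathcal{L}V=-(\beta-\ios-\ioc)V+2\beta\kappa_\infty$; combined with a minorization on compact sets (guaranteed by the positive baseline $\kappa_\infty>0$), this gives exponential ergodicity via Meyn--Tweedie.

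The main obstacle is the necessity side when the candidate invariant measure $\pi$ could have infinite first moment: the ODE on $(\mu^+_t,\mu^-_t)$ does not close that case on its own, and the cleanest resolution is to appeal to the branching representation (or, equivalently, to analyse the Laplace functional of the cluster size) to rule out critical and supercritical regimes.
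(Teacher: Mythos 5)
Your proposal follows essentially the same route as the paper: sufficiency is delegated to the classical subcriticality criterion for marked Hawkes processes (spectral radius of $\beta^{-1}\bigl(\begin{smallmatrix}\ios & \ioc\\ \ioc & \ios\end{smallmatrix}\bigr)$ below one), and necessity rests on the linear ODE for $\E[\kappa^+_t+\kappa^-_t]$, exactly as in the paper's proof. The one genuine addition is your treatment of the case where a putative invariant law has infinite first moment via the cluster representation --- the paper's converse simply notes that the mean diverges and "the process cannot be stationary", which strictly speaking does not by itself exclude convergence in law to a heavy-tailed limit, so your extra step closes a small gap rather than changing the argument.
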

\begin{proof}
We can apply the results of the existing literature on marked Hawkes processes with unpredictable marks (for instance Hawkes and Oakes~\cite{HawkesOakes}, Br\'emaud ans Massouli\'e~\cite{BremaudMassoulie} or Daley and Vere-Jones~\cite{DVJ}) to obtain that $(\kappa^+_t,\kappa^-_t)$ converges to a stationary law if the largest eigenvalue of
\begin{equation}
\int_{\mathbb{R}^+ \times \mathbb{R}^+} \exp(-\beta t) \
\begin{pmatrix}
\phis(v/m_1) & \phic(v/m_1) \\
\phic(v/m_1) & \phis(v/m_1)
\end{pmatrix}
\ \textup{d}t \ \mu(\textup{d}v) =  \frac{1}{\beta} \begin{pmatrix}
\ios & \ioc \\
\ioc & \ios
\end{pmatrix}
\nonumber
\end{equation}
is strictly below unity. Conversely, if $\ios + \ioc \ge \beta$, we have $$\frac{\dd}{\dd t} 
\E[\kappa^+_t + \kappa^-_t]=2\beta \kappa_\infty + (\ios + \ioc - \beta)\E[\kappa^+_t + \kappa^-_t] \ge 2\beta \kappa_\infty$$
and the process cannot be stationary. 
\end{proof}

We now  study the low-frequency asymptotics of the price process $P$ in the MIH model. We consider the sequence $P^{(n)}_t = P_{nt}/\sqrt{n}$ for $n \geq 1$. We have
$P^{(n)}_t = S^{(n)}_t + D^{(n)}_t$, where we also set $S^{(n)}_t = S_{nt}/\sqrt{n}$ and $D^{(n)}_t = D_{nt}/\sqrt{n}$. To study the behaviour of $D^{(n)}$, we need the following lemma. 
\begin{lemma}\label{lem:convegence_espDt2}
When $\ios+\ioc<\beta$, the expectation $\E[D_t^2]$ converges to a finite positive value as $t\rightarrow +\infty$. 
\end{lemma}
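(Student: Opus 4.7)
The joint process $(D_t,\kappa^+_t,\kappa^-_t)$ is a time-homogeneous Markov process, and my plan is to derive closed linear ODE systems for its first and second moments from the dynamics~\eqref{dynD} and~\eqref{dyn_kappas}, and to show that each system converges as $t\to\infty$ under the hypothesis $\ios+\ioc<\beta$.

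Taking expectations in~\eqref{dyn_kappas} gives a closed $2\times 2$ linear system for $(\E[\kappa^+_t],\E[\kappa^-_t])$ that is diagonalized by sum and difference: $\E[\kappa^+_t+\kappa^-_t]$ converges to $2\beta\kappa_\infty/(\beta-\ios-\ioc)$ at exponential rate $\beta-\ios-\ioc>0$, while $\E[\kappa^+_t-\kappa^-_t]$ decays to $0$ at rate $\beta-\ios+\ioc>0$ (note that $\ios-\ioc\le\ios+\ioc<\beta$). Since the $\cF_t$-predictable intensity of $N$ is $m_1(\kappa^+_t-\kappa^-_t)$, the explicit solution $D_t=D_0 e^{-\rho t}+\tfrac{1-\nu}{q}\int_0^t e^{-\rho(t-s)}\dd N_s$ of~\eqref{dynD} then yields convergence of $\E[D_t]$.

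For the second moments, Ito's formula applied to $(D_t,\kappa^+_t,\kappa^-_t)$ yields a closed linear ODE system on the six quantities $\E[D_t^2]$, $\E[D_t\kappa^{\pm}_t]$, $\E[(\kappa^{\pm}_t)^2]$, $\E[\kappa^+_t\kappa^-_t]$, using the compensator of the marked Hawkes process and $\E[\dd[N,N]_t]/\dd t=m_2\E[\kappa^+_t+\kappa^-_t]$. For instance,
$$\frac{\dd}{\dd t}\E[D_t^2]=-2\rho\,\E[D_t^2]+\frac{2(1-\nu)m_1}{q}\bigl(\E[D_t\kappa^+_t]-\E[D_t\kappa^-_t]\bigr)+\Bigl(\frac{1-\nu}{q}\Bigr)^{\!2}\!m_2\,\E[\kappa^+_t+\kappa^-_t].$$
This $6\times 6$ system is block-triangular: Block~A gathers the three second moments of $\kappa^{\pm}$ alone; Block~B consists of $\E[D_t\kappa^{\pm}_t]$, driven by Block~A and first moments; and Block~C contains $\E[D_t^2]$, driven by Block~B and by $\E[\kappa^+_t+\kappa^-_t]$. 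For each block I check stability directly: using sum/difference in Block~A reveals the eigenvalues $-2(\beta-\ios)\pm 2\ioc$ and $-2(\beta-\ios)$, all negative under $\ios+\ioc<\beta$; Block~B gives $-(\rho+\beta-\ios)\pm\ioc$, both negative since $\rho>0$; and Block~C has the single eigenvalue $-2\rho<0$. All bocks being stable with bounded forcing, every second moment converges, so does $\E[D_t^2]$, and the limit is strictly positive because $D_t$ is non-degenerate in the stationary regime of the Hawkes intensities.

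The delicate point is the Ito expansion for the cross terms $\E[D_t\kappa^{\pm}_t]$ and $\E[\kappa^+_t\kappa^-_t]$, in which one must carefully track the quadratic-covariation contributions $\int v\,\phis(v/m_1)\,\mu(\dd v)$, $\int v\,\phic(v/m_1)\,\mu(\dd v)$ and $\int\phis(v/m_1)\phic(v/m_1)\,\mu(\dd v)$ coming from simultaneous jumps; the block-triangular decomposition then turns the stability verification into three small, explicit eigenvalue computations.
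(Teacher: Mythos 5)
Your proof is correct and follows essentially the same route as the paper, which simply computes $\E[\delta_t^2]$, $\E[\delta_t D_t]$ and $\E[D_t^2]$ (a closed triangular system once one passes to the sum/difference variables $\delta_t,\Sigma_t$, with $\E[\Sigma_t]$ from Proposition~\ref{prop_kappa_stat} as bounded forcing) and checks convergence under $\ios+\ioc<\beta$; your eigenvalue computations agree with what that system gives, your six-moment version merely carrying the extra, unneeded quantities $\E[(\kappa^\pm_t)^2]$, $\E[\kappa^+_t\kappa^-_t]$ and $\E[D_t\Sigma_t]$. The one point worth tightening is the strict positivity of the limit: rather than invoking non-degeneracy, note that if $\lim\E[D_t^2]=0$ then Cauchy--Schwarz gives $\lim\E[D_t\delta_t]=0$, and your last ODE would force $\lim\E[\Sigma_t]=0$, contradicting $\lim\E[\Sigma_t]=2\beta\kappa_\infty/(\beta-\ios-\ioc)>0$ (this, like the lemma itself, implicitly assumes $\kappa_\infty>0$, $m_2>0$ and $\nu<1$).
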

The proof of this lemma is rather straightforward. We just have to calculate $\E[\delta_t^2]$, $\E[\delta_t D_t]$ and $\E[D_t^2]$ and check that these expectations converge when  $\ios+\ioc<\beta$.
This result implies that $(D^{(n)}_{t_1},\dots,D^{(n)}_{t_k})$ converges to zero for the $L^2$ norm for any $0\le t_1\le \dots \le t_k$. This gives that the process $D^{(n)}$ converges to zero.

We thus focus on the convergence of $S^{(n)}_t= \frac{\nu}{q} \frac{N^+_{nt}-N^-_{nt}}{\sqrt{n}}$. If the jumps of $N$ are bounded, i.e. $\mu$ has bounded support, and $v \mapsto \phis(v/m_1)$ and $v \mapsto \phic(v/m_1)$ are bounded on the support of $\mu$ (which are reasonable assumptions in practice), a straightforward adaptation of Corollary~1 of Bacry et al.~\cite{BDHM2} gives the convergence in law of $S^{(n)}$ to a non-standard Brownian motion with zero drift.

\section{Main results}\label{section:mr}

Now that the whole framework is set up, we present the main results of the present paper.
 
\begin{itemize}
\item The optimal execution problem can be solved explicitly in the MIH model and the optimal strategy has still a quite simple form, see Theorem~\ref{theo:opt_strat_Hawkes}. Of course, this result relies on the assumptions of linear price impact and exponential decay kernel, which are not in accordance with empirical facts, see for example Potters and Bouchaud~\cite{PB} and Bouchaud et al.~\cite{BGPW}. We mention here that it would be possible to keep an affine structure of the optimal strategy by considering complete monotone decay kernels as in Alfonsi and Schied~\cite{AS_SICON}. However, we believe that the optimal strategy is interesting at least from a qualitative point of view, since it gives clear insights on how to react optimally to observed market orders and on the role of the different parameters of the model.

\item Price Manipulation Strategies necessarily appear when the flow of market orders is Poissonian, and they are rather robust in the sense that they can be implemented without knowing the model parameters. Namely, the strategy which consists in trading instantly a small proportion of the volume of each incoming market order in the opposite direction is profitable on average, see Proposition~\ref{prop:poisson:arbitrage}.
This justifies to consider more elaborate dynamics for the order arrivals.

\item Even in a non-Poissonian MIH setup, Price Manipulation Strategies can arise. Depending on the parameters of the model and on the size of each observed market order, one should either trade instantly in the opposite direction to take market resilience into account, or in the same direction to take advantage of the self-excitation property of Hawkes processes. However, our framework allows for a specific equilibrium to take place, that we call the Mixed-market-Impact Hawkes Martingale (MIHM) model, where PMS disappear.

\item In the MIHM model, one has in particular $\iota_\textup{s} > \iota_\textup{c}, \ \nu < 1$ and $\beta = \rho$, and the self-excitation property of the order flow exactly compensates the price resilience induced by market makers. The resulting price process is a martingale even at high frequencies, and in this case we find that the optimal strategy and cost function are those of  Obizhaeva and Wang~\cite{OW}. The conditions of this model imply that if $\iota_\textup{c} = 0$, the norm $\ios/\beta$ of the Hawkes kernel that symbolizes the endogeneity ratio of the market, see Filimonov and Sornette~\cite{FilimonovSornette}, should be equal to $1-\nu$, i.e. the proportion of market impact which is transient.

\item The fact of reacting to the market orders of other traders with instantaneous market orders can trigger chain reactions and lead to market instability. We show that in the MIH framework, the conditions under which it is profitable for the strategic trader to react instantaneously to other trades are quite equivalent to the existence of PMS. Although the model is clearly a simplified view of the market, it is remarkable to obtain in this case such a clear connection between market stability and free profits. It would be interesting for further reasearch to investigate if this conclusion still holds in a more general model.

\end{itemize}

\section{The optimal strategy}\label{section:opt_strat}

We need to introduce some notations to present the main results on the optimal execution.  Instead of working with $\kappa^+_t$ and $\kappa^-_t$, we will rather use $\delta_t = \kappa^+_t - \kappa^-_t$ and $\Sigma_t = \kappa^+_t + \kappa^-_t$ that satisfy from~\eqref{dyn_kappas}
%
%\begin{equation}
%\textup{d}\delta_t  \ = \ - \beta \ \delta_t \ \textup{d}t
%\ + \ \alpha \ (\textup{d}J^+_t - \textup{d}J^-_t)
%\quad , \quad
%\textup{d}\Sigma_t  \ = \ - \beta \ (\Sigma_t - 2 \kappa_{\infty}) \ \textup{d}t
%\ + \ (\alpha+2 \iota_\textup{c} ) \ (\textup{d}J^+_t + \textup{d}J^-_t),
%\label{dyn_deltasigma}
%\end{equation}
%
\begin{equation}
\textup{d}\delta_t  \ = \ - \beta \ \delta_t \ \textup{d}t
\ + \ \text{d}I_t
\quad , \quad
\textup{d}\Sigma_t  \ = \ - \beta \ (\Sigma_t - 2 \kappa_{\infty}) \ \textup{d}t
\ + \ \text{d}\overline{I}_t,
\label{dyn_deltasigma}
\end{equation}
%
%with $\alpha=\iota_\textup{s} -\iota_\textup{c}$. We note $(\tau_i)_{i \geq 1}$ the ordered random jump times of $J$ (i.e. those of $N$) and set
%$\tau_0 = 0$. For $t \in [0,T]$, we also note $\chi_t = J^+_t + J^-_t$ the total number of jumps that occurred between time $0$ and time $t$.
where
\begin{align}
I_t &= \int_0^t \left[(\phis-\phic)(\textup{d}N^+_u/m_1) - (\phis-\phic)(\textup{d}N^-_u/m_1)\right],
\nonumber \\
\overline{I}_t &= \int_0^t \left[(\phis+\phic)(\textup{d}N^+_u/m_1) + (\phis+\phic)(\textup{d}N^-_u/m_1)\right].
\label{def_II}
\end{align}
The processes $I$ and $\overline{I}$ are c\`adl\`ag processes which describe intensity jumps, and their jump times are those of $N$.
In the standard Hawkes framework where $\phis$ and $\phic$ are constant, one has $\phis \equiv \ios$ and $\phic \equiv \ioc$, and when $N$ jumps,
$I$ jumps of $(\ios-\ioc) \ \sgn(\Delta N_t)$  and $\overline{I}$ of $\ios+\ioc$.

%We note $(\tau_i)_{i \geq 1}$ the ordered random jump times of $J$ (i.e. those of $N$) and set
%$\tau_0 = 0$.
We note $(\tau_i)_{i \geq 1}$ the ordered random jump times of $N$ and set $\tau_0 = 0$.
For $t \in [0,T]$, we also note $\chi_t$ the total number of jumps of $I$ that occurred between time $0$ and time $t$. From~\eqref{dyn_deltasigma}, we have
%
%\begin{equation}
%\delta_t \ = \ \delta_0 \exp(-\beta t) 
%\ + \ \alpha \ \overset{\chi_t}{\underset{l=1}{\sum}} \exp(-\beta(t-\tau_l)) \Delta J_{\tau_l}
%\ = \ \delta_0 \ \exp(-\beta t) + \alpha \exp(-\beta t) \ \Theta_{\chi_t},
%\nonumber
%\end{equation}
%
\begin{equation}
\delta_t \ = \ \delta_0 \exp(-\beta t) 
\ + \ \overset{\chi_t}{\underset{l=1}{\sum}} \exp(-\beta(t-\tau_l)) \Delta I_{\tau_l}
\ = \ \delta_0 \ \exp(-\beta t) + \exp(-\beta t) \ \Theta_{\chi_t},
\nonumber
\end{equation}
where we define $\Theta_0 = 0$ and 
%
%\begin{equation}
%\Theta_i \ = \ \sum_{l=1}^i \exp(\beta \tau_l) \Delta J_{\tau_l}
%\ = \ \underset{0 < \tau \leq \tau_i}{\sum} \exp(\beta \tau) \ \Delta J_\tau, i \ge 1.
%\nonumber
%\end{equation}
%
\begin{equation}
\Theta_i \ = \ \sum_{l=1}^i \exp(\beta \tau_l) \Delta I_{\tau_l}
\ = \ \underset{0 < \tau \leq \tau_i}{\sum} \exp(\beta \tau) \ \Delta I_\tau, \quad  i \ge 1.
\nonumber
\end{equation}
For $i \geq 0$ and $t \in [\tau_i,\tau_{i+1})$, we obtain that 
%$\delta_t \exp(\beta t) = \delta_0 + \alpha \Theta_i$
$\delta_t \exp(\beta t) = \delta_0 + \Theta_i$
 only depends on $t$ through the integer $i = \chi_t$. 
%We also define the useful quantity
%%
%\begin{equation}
%\eta = \beta-\alpha.
%\end{equation}
We introduce the useful quantities
\begin{equation}
\alpha = \ios -\ioc, \quad \eta = \beta-\alpha,
\nonumber
\end{equation}
and the two continuously differentiable functions $\zeta, \omega : \mathbb{R} \rightarrow \mathbb{R}^+$ defined by
\begin{align}
\zeta(0) & = 1 \quad \text{ and } \quad \forall y \neq 0, \quad \zeta(y) \ = \ \frac{1-\exp(-y)}y,
 \label{def_zeta} \\
\zeta'(0) & = -1/2 \quad \text{ and } \quad \forall y \neq 0, \quad \zeta'(y) \ = \ \frac{(1+y)\exp(-y) - 1}{y^2}
\ = \ \frac{\exp(-y)-\zeta(y)}y,
\nonumber \\
\omega(0) & = 1/2 \quad \text{ and } \quad \forall y \neq 0, \quad 
\omega(y) \ = \ \frac{\exp(-y) - 1 + y}{y^2}
 \ = \ \frac{1-\zeta(y)}y,
\label{eqn:def_omega} \\
\omega'(0) & = -1/6 \quad \text{ and } \quad \forall y \neq 0, \quad \omega'(y) \ = \ \frac{2(1-\exp(-y))-y(1+\exp(-y))}{y^3}
\ = \ \frac{2 \zeta(y) - 1 - \exp(-y)}{y^2}.
\nonumber
\end{align}
Both functions non-increasing, diverge to $+\infty$ at negative infinity and vanish at positive infinity. Let us now enounce the main theorem for the optimal execution problem.

%\begin{align} \label{def_zeta}
%\forall y \neq 0, \qquad & \zeta(y) = \ \frac{1-\exp(-y)}y, & \zeta'(y) = \ \frac{(1+y)\exp(-y) - 1}{y^2},
%\nonumber
%\end{align}
%
%and $\zeta(0) = 1, \ \zeta'(0) = -1/2$.
%We have $\zeta' \leq 0$, $\zeta(y) \rightarrow +\infty$ as $y$ goes to $-\infty$ and $\zeta(y) \rightarrow 0$ as $y$ goes to $+\infty$. 
%For $\lambda \in \mathbb{R}$ and $u \geq 0$, one also has $\frac{\textup{d}}{\textup{d}u} [u \zeta(\lambda u)] = \exp(-\lambda u)$ and $[1-\zeta(\lambda u)]/\lambda \rightarrow u/2$ as $\lambda$ vanishes. 
%Eventually, we set for $y \neq 0$
%\begin{equation}
%\omega_0(u) \ = \ \frac u2  \quad \text{ and } \quad \omega_\eta(u) \ = \ \frac{1-\zeta(\eta u)}\eta \quad \text{ %for }\eta \neq 0,
%\label{eqn:def_omega}
%\end{equation}
%which defines a function both continuous in $u$ and $\eta$ since $\zeta(y) = 1 - y/2 + \underset{y\rightarrow0}{o}(y)$.

%In this MIH model, the five state variables of the problem are $X_t$, $D_t$, $S_t$, $\delta_t$ and $\Sigma_t$. To describe the optimal strategy and the value function, we distinguish the cases $\eta \neq 0$ and $\eta = 0$, but the arguments are the same in both cases.  

%\subsection{The optimal strategy}\label{section:opt_strat_Hawkes}
%Graph here (...)

\begin{theorem}\label{theo:opt_strat_Hawkes}
Let $\epsilon\in[0,1)$. The optimal strategy $X^*$ that minimizes the expected cost $\E[C(X)]$ among admissible strategies that liquidate $x_0$ assets is explicit. It is a linear combination of $(x_0,D_0,\delta_0, I, N)$ and can be written as
\begin{equation}
X^* = X^\textup{OW} + X^\textup{trend} + X^\textup{dyn},
\nonumber
\end{equation}
where 
\begin{itemize}
\item $X^\textup{OW}$ is the optimal strategy in the Obizhaeva and Wang~\cite{OW} model, given by~\eqref{opt_strat_Pmg} in Theorem~\ref{Thm_Pmg},
\item $X^\textup{trend}$ is the \enquote{trend strategy}, given by~\eqref{Xtrend_eta}.% (resp.~\eqref{Xtrend_0}) when $\eta \not = 0$ (resp. $\eta=0$),
\item $X^\textup{dyn}$ is the \enquote{dynamic strategy}, given by~\eqref{Xdyn_eta}.% (resp.~\eqref{Xdyn_0}) when $\eta \not = 0$ (resp. $\eta=0$). 
\end{itemize}
The strategy $X^\textup{OW}$ is a linear function of~$x_0$, $X^\textup{trend}$ is a linear function of~$(D_0,\delta_0)$ while $X^\textup{dyn}$ is a linear function of the processes $I$ and $N$. The discontinuity times of $X^\textup{dyn}$ are those of $N$, and if $N$ jumps at time $\tau \in (0,T)$, we have
\begin{equation}
(1-\epsilon) \Delta X^\textup{dyn}_\tau =
\  \frac{1 + \rho (T-\tau)}{2 + \rho (T-\tau)} \left\{\frac {m_1} \rho \ \Delta I_\tau - (1-\nu) \ \Delta N_\tau \right\}
%\ + \ \omega_\eta(T-\tau) \ \Delta I_\tau,
+ \frac{m_1}{2\rho} (\nu \rho-\eta) \frac{\rho (T-\tau)^2 \times \omega(\eta(T-\tau))}{2+\rho (T-\tau)} \ \Delta I_\tau.
\label{reaction_trade}
\end{equation}
%
%where for $u \in [0,T]$,
%%
%\begin{eqnarray}
%\omega_\eta(u) & = &
%\ \frac{m_1}{2\rho} (\nu \rho-\eta) \times \frac{\rho u \times  [1-\zeta(\eta u)]/\eta }{2+\rho u}
%\quad \text{ if } \eta \neq 0,
%\nonumber \\
%\omega_0(u) & = &
%\ \frac{\nu m_1}{4\rho} \times \frac{\rho^2 u^2}{2+\rho u}.
%\nonumber
%\end{eqnarray}
%
%
All explicit formulas are given in Appendix~\ref{appendix:formulas_opt_strat}. The value function of the problem is given by
\begin{eqnarray}
q \times \mathcal{C}(t,x,d,z,\delta,\Sigma) & = &
- q (z+d) x
\ + \ \left[ \frac{1-\epsilon}{2+\rho(T-t)} + \frac\epsilon 2 \right] x^2
\ + \ \frac{\rho (T-t)}{2+\rho(T-t)} \left[ qd - \ \mathcal{G}_\eta(T-t) \ \frac{\delta m_1}\rho \right] x
\nonumber \\
& & \nonumber \\
& & \
- \ \frac1{1-\epsilon} \times \frac{\rho (T-t)/2}{2+\rho(T-t)} 
\left[ qd - \ \mathcal{G}_\eta(T-t) \ \frac{\delta m_1}\rho \right]^2
\ + \ \hat c_\eta (T-t) \ \left(\frac{\delta m_1}\rho\right)^2 \nonumber \\
& & \nonumber \\
& & \ + \ e(T-t) \ \Sigma \ + \ g(T-t),
\nonumber
\end{eqnarray}
where for $u \in [0,T]$,
\begin{eqnarray}
\mathcal{G}_\eta(u) & = & \zeta(\eta u)+ \nu \rho u \ \omega(\eta u),
%\mathcal{G}_\eta(u) & = & \zeta(\eta u)+ \nu \rho [1-\zeta(\eta u)]/\eta
%\quad \text{ if } \eta \neq 0,
\nonumber \\
%\mathcal{G}_0(u) & = & 1+\frac\nu2 \rho u,
%\nonumber \\
\hat c_\eta(u) & = &  \frac1{1-\epsilon} \times (\eta- \nu \rho)^2 \frac{\rho u^3}{8} \omega'(\eta u) \zeta(\eta u).
%\nonumber \\
% \hat c_0(u) & = &  - \ \frac1{1-\epsilon} \times \frac{\nu^2}{48} \ \rho^3 u^3,
% \nonumber \\
% \hat c_\eta(u) & = & - \ \frac1{1-\epsilon} \times \left( 1-\frac{\nu \rho}\eta \right)^2 \times \frac{\rho u \zeta(\eta u)}8
% \times \left[ 1+\exp(-\eta u) -2 \zeta(\eta u) \right]
% \quad \text{ if } \eta \neq 0.
\nonumber
\end{eqnarray}

The functions $e$ and $g$ are the unique solution of the differential equations~\eqref{eqn:e_general} and~\eqref{eqn:g_general} with $e(0)=g(0)=0$.
% When $\eta=0$, the same holds by replacing the functions $\hat c_\eta(u)$ and $[1-\zeta(\eta u)]/\eta$ by their limits when $\eta \rightarrow 0$, respectively $\hat c_0(u)=  -  \frac1{1-\epsilon} \frac{\nu^2}{48} \rho^3 u^3 $ and $u/2$.
\end{theorem}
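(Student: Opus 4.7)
The plan is to solve the optimal execution problem by dynamic programming on the Markovian state $(t, X_t, D_t, S_t, \delta_t, \Sigma_t)$. The value function
\begin{equation*}
\mathcal{C}(t,x,d,z,\delta,\Sigma) \;=\; \inf \, \E\bigl[\,C(X) \mid X_t=x,\,D_t=d,\,S_t=z,\,\delta_t=\delta,\,\Sigma_t=\Sigma\bigr]
\end{equation*}
should satisfy an HJB variational inequality: in the no-trade region, $\partial_t\mathcal{C} + \mathcal{L}\mathcal{C} = 0$, where $\mathcal{L}$ gathers the resilience drift $-\rho d\,\partial_d$, the mean-reversions $-\beta\delta\,\partial_\delta$ and $-\beta(\Sigma-2\kappa_\infty)\,\partial_\Sigma$, and the integro-differential operator induced by the marked Hawkes process; on the trading region, the instantaneous cost $\pi_t(v)$ yields the first-order condition $\partial_x\mathcal{C} + \tfrac{1-\epsilon}{q}\partial_d\mathcal{C} + \tfrac{\epsilon}{q}\partial_z\mathcal{C} + z + d + \tfrac{v}{q} = 0$ on the optimal trade~$v$.

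I would then make a quadratic-affine ansatz, motivated by three structural observations: the cost~\eqref{CostX} is quadratic in $(X,N,S_0,D_0,x_0)$; $S_t$ enters linearly through $\int P\,\dd X - P_T X_T$ and hence appears as $-qzx$ in $q\mathcal{C}$; and the scaling~\eqref{CostX_symm2} forces the $1/q$ prefactor. Writing
\begin{equation*}
q\mathcal{C} = -qzx + A(t)x^2 + B(t)xd + C(t)x\delta + P(t)d^2 + Q(t)d\delta + R(t)\delta^2 + e(T-t)\Sigma + g(T-t)
\end{equation*}
and plugging into the HJB, the jump operator produces integrals against $\mu$ of $\phis$ and $\phic$ that reorganize through~\eqref{def_II} into contributions in $\alpha=\ios-\ioc$ (affecting the $\delta$-coefficients) and $\ios+\ioc$ (affecting $\Sigma$ via $e$); the square terms $\int\phis^2\,\dd\mu$ and $\int\phic^2\,\dd\mu$ get absorbed entirely into $e$ and $g$ by the symmetry $N^+\leftrightarrow N^-$. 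Identifying coefficients yields a closed Riccati-type ODE system in which $A(t)$ decouples and matches the OW coefficient $(1-\epsilon)/(2+\rho(T-t))+\epsilon/2$ from Theorem~\ref{Thm_Pmg}. Completing the square in $qd - \mathcal{G}_\eta(T-t)\,\delta m_1/\rho$ then collapses the $(d,\delta)$-block to the stated form: the effective rate $\eta = \beta-\alpha$ emerges from the ODE governing $\E_t[\delta_s]$, and $\zeta,\omega$ appear as the Laplace-type integrals of the convolved exponentials $e^{-\eta u}$ and $e^{-\rho u}$ that build $\mathcal{G}_\eta$ and $\hat c_\eta$.

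Once $\mathcal{C}$ is known, the optimal strategy comes from the first-order trading condition. By linearity of the state dynamics in $(X,N)$ and the quadratic structure of the cost, this condition decomposes by superposition: the part driven by $x_0$ alone recovers $X^{\mathrm{OW}}$, the deterministic drift back from nonzero $(D_0,\delta_0)$ yields $X^{\mathrm{trend}}$, and the path-dependent response to $(I,N)$ yields $X^{\mathrm{dyn}}$. In particular, the jump formula~\eqref{reaction_trade} is obtained by applying the trade condition at a Hawkes jump time~$\tau$: substituting the quadratic ansatz reduces it to a linear equation for $\Delta X^*_\tau$ in terms of the simultaneous jumps $\Delta N_\tau$ and $\Delta I_\tau$, the two terms on the right-hand side being the gradient components along $d$ (weighted by $1-\nu$, the transient part of the Hawkes impact) and along $\delta$ (weighted by $\nu\rho-\eta$).

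The main obstacles are twofold. First, one must verify that the quadratic-affine ansatz actually closes the HJB: the nonlinearities in $\phis,\phic$ could a priori generate higher-moment terms, but the $N^+\leftrightarrow N^-$ symmetry together with the square-integrability assumptions on $\phis,\phic$ against $\mu$ ensures that every such contribution is swept into $e(T-t)\Sigma + g(T-t)$. Second, a verification argument is needed to show rigorously that the candidate strategy is admissible in the sense of Definition~\ref{def:liq_strat} and attains the infimum. That step applies It\^o's formula for l\`adl\`ag processes (cf. Remark~\ref{notations_ladlag}) to $\mathcal{C}(t,X_t,D_t,S_t,\delta_t,\Sigma_t)$ along an arbitrary admissible strategy and exploits the square-integrability of $N$ together with boundedness of the ODE coefficients on $[0,T]$ to make the Hawkes-compensator martingale terms vanish in expectation.
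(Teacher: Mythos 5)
Your overall architecture (quadratic-affine ansatz, Riccati-type ODE system, completion of the square in $qd-\mathcal{G}_\eta(T-t)\,\delta m_1/\rho$, and a closing verification step) matches the paper's, and your observation that the second moments of $\phis,\phic$ against $\mu$ are swept into $e$ and $g$ is correct. The genuine gap is in the mechanism you propose for extracting the optimal strategy, in particular the jump formula~\eqref{reaction_trade}. The quasi-variational-inequality picture with a trade region, a no-trade region and a first-order condition $\partial_x\mathcal{C}+\tfrac{1-\epsilon}{q}\partial_d\mathcal{C}+\tfrac{\epsilon}{q}\partial_z\mathcal{C}+z+d+\tfrac{v}{q}=0$ determining the optimal trade $v$ does not apply here, because the problem is degenerate along the trading direction: the value function satisfies $\partial_x\mathcal{C}+(1-\epsilon)\partial_d\mathcal{C}+\epsilon\,\partial_z\mathcal{C}+d+z=0$ identically, and more precisely (see~\eqref{cost_if_deltaX}, with $q=1$)
\[
\mathcal{C}(t,x+\Delta x,d+(1-\epsilon)\Delta x,z+\epsilon\Delta x,\delta,\Sigma)-\mathcal{C}(t,x,d,z,\delta,\Sigma)=-(d+z)\,\Delta x-\frac{(\Delta x)^2}{2},
\]
so the cost of any block trade is exactly offset by the change in the value function. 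Hence $v\mapsto \pi_t(v)+\mathcal{C}(\text{post-trade state})$ is constant in $v$: the first-order condition holds for every $v$ and cannot single out $\Delta X^*_\tau$. There is also no no-trade region --- the optimal strategy trades continuously on all of $(0,T)$ and jumps at every jump of $N$.

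What actually pins down the strategy in the paper is a different device: one shows that the drift of the process $\Pi_t(X)$ of~\eqref{def:Pi_t} (running cost plus value function along an arbitrary admissible strategy) is a perfect square, $\dd A^X_t=\tfrac{\rho}{1-\epsilon}\bigl[j(T-t)(D_t-(1-\epsilon)X_t)-D_t+k(T-t)\,\delta_t\bigr]^2\dd t$, which simultaneously yields the submartingale property for every admissible $X$ and identifies the optimal strategy as the one keeping the state on the manifold~\eqref{eqn:martingale_cond_Hawkes} where this square vanishes, a.e.\ on $(0,T)$. The block trades, including~\eqref{reaction_trade}, are then obtained by requiring the post-jump state to return to that manifold, and the continuous part by differentiating the manifold equation --- not from a pointwise optimization over the trade size. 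Without this (or an equivalent argument forcing the drift to be a nonnegative quadratic that the candidate annihilates), your plan cannot produce~\eqref{reaction_trade}. The remainder of your outline, notably the verification step via the l\`adl\`ag decomposition and the martingale property of the compensated jump terms, is sound once this is repaired.
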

 The proof of this theorem is given in Appendix \ref{appendix:proof_opt_ctrl}. Let us mention here that the functions~$e$ and $g$ admit explicit forms by the mean of the exponential integral function, that are very cumbersome. They can be obtained by using a formal calculus software such as Mathematica. Since they do not play any role to determine the optimal strategy and require several pages to be displayed, we do not give these explicit formulas. Note that they are simpler in the case $\eta=0$, for which the explicit formulas are given by Equations~\eqref{val_e_crit} and~\eqref{val_g_crit}.

%It is worth to notice that the optimal strategy~$X^*$ is affine with respect to $x_0$, $D_0$, $\delta_0$, $J$ and~$N$. 
The optimal strategy~$X^*$ is illustrated on Figure~\ref{fig:Hawkes_opt_strat} for two different sets of parameters. It is worth to notice that the strategy is linear with respect to $x_0$, $D_0$, $\delta_0$, $I$ and~$N$. This property is due to the affine structure of the model and the quadratic costs. In particular, the reaction of the optimal strategy to the other trades does not depend on~$x_0$.  The strategy $X^\textup{trend}$ is the part of the strategy which is proportional to $D_0$ and $\delta_0$ and thus takes advantage of temporary price trends that are known at time $0$. The strategy $X^\textup{dyn}$ is proportional to the processes $I$ and $N$ and describes the optimal reactions to observed price jumps. Last, let us stress that having an explicit formula for the optimal strategy is an important feature to use it in practice. Since the strategy reacts to each market order (or at least to those which trigger price moves), its computation time should be significantly lower than the typical duration between two of these orders.

%Also, the optimal strategy can be seen as the sum of the strategy~\eqref{opt_strat_Pmg} which is optimal when $D_0=\delta_0=0$ and $J\equiv N \equiv 0$ and the optimal PMS which is given by Theorem~\ref{theo:opt_strat_Hawkes} when $x_0=0$. 
%Also, the optimal strategy can be seen as the sum of the strategy~\eqref{opt_strat_Pmg} which is optimal when $D_0=\delta_0=0$ and $I\equiv N \equiv 0$ and the optimal PMS which is given by Theorem~\ref{theo:opt_strat_Hawkes} when $x_0=0$. 

%
\begin{figure}[H]
	\centering
\subfigure[$\rho=25$]{
    \includegraphics[height=6cm,width=8cm]{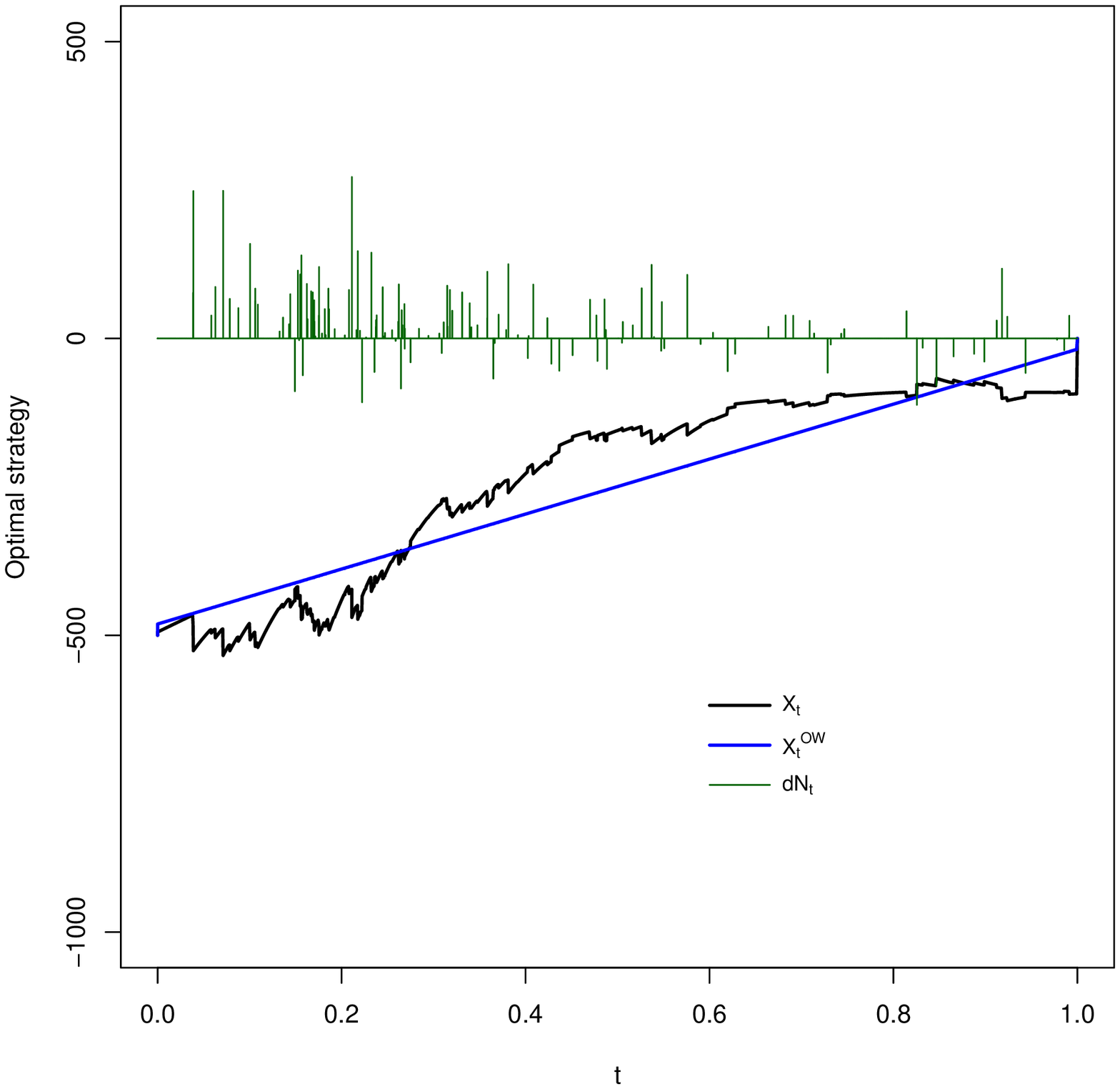}
	\label{fig:Xt_rho}
    }
\subfigure[$\rho=16$]{
    \includegraphics[height=6cm,width=8cm]{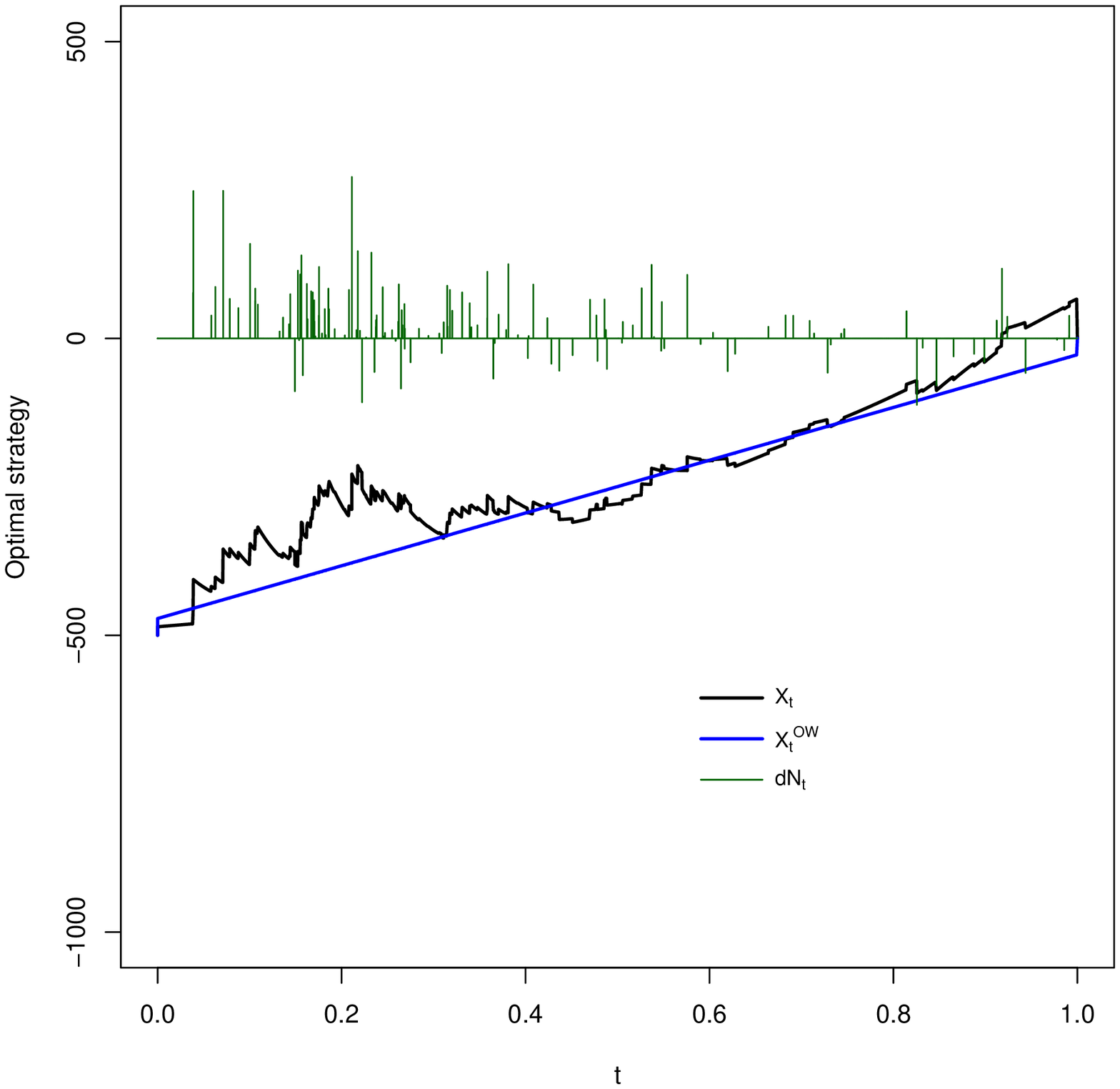}
	\label{fig:Xt_beta} 
    }
	\caption{Optimal strategy in the Hawkes model, in black, for $q=100, \ T=1, \ \beta = 20, \ \ios=16, \ \ioc=2, \ \kappa_\infty = 12, \ \epsilon = 0.3, \ \nu = 0.3, \ D_0=0.1, \ \kappa^+_0= \kappa^-_0=60, \ m_1 = 50, \ X_0 = -500, \ \mu=\text{Exp}(1/m_1), \
\phis(y) = 1.2 \times y^{0.2} + 0.5 \times y^{0.7} + 14.4 \times y, \
\phic(y) = 1.2 \times y^{0.2} + 0.5 \times y^{0.7} + 0.4 \times y$ for all $y>0$.
The strategy of the Obizhaeva and Wang model is given in blue as a benchmark, and the jumps of $(N_t)$ are plotted in green (with the same trajectory for the two graphs). On the left graph, $\ios < \beta<\rho$ and the strategy is based on mean-reversion: each time $N$ jumps, $X$ jumps in the opposite direction. On the right graph, $\rho=\ios < \beta$ and the strategy is trend-following.}
    \label{fig:Hawkes_opt_strat}
\end{figure}

Let us make some comments on the optimal strategy, and more precisely on how the strategic trader reacts to the orders issued by other traders. First, we observe from~\eqref{Xdyn_eta} that the block trades that immediately follow jumps of $N$ are then compensated by the continuous trading rate. When $\phis=\phic$, we have $I\equiv 0$ and these block trades, as given by~\eqref{reaction_trade}, are always opposed in sign to the market orders that they follow. For general functions $\phis$ and $\phic$, the signs of these trades depend on the size of the last preceding jumps of $N$. 
%Namely, in the case where $\eta= \nu \rho$, the strategic trader makes a trade in the opposite direction if 
%$|\textup{d}N_t|> \frac{\alpha m_1}{\rho(1-\nu)}$, but trades in the same way otherwise.
For example, in the case where $\eta= \nu \rho$, the strategic trader makes a trade in the opposite direction if 
$|\textup{d}N_t|> \frac{m_1}{\rho(1-\nu)} (\phis-\phic)(|\textup{d}N_t|/m_1)$, but trades in the same way otherwise.
The same conclusion holds for any parameter value when $T-t \rightarrow 0$ since $\rho (T-t)^2 \times \omega(\eta(T-t))$ vanishes. We now consider the asymptotics when the trading horizon is large: in this case, it is reasonable to assume that $\eta>0$ which is required to get stationary intensities $\kappa^+$ and $\kappa^-$, see Section~\ref{section:low-freq_MIH}. Then, when $T-t  \rightarrow +\infty$, the jump part of $X^\text{dyn}$ given by~\eqref{reaction_trade} can be well approximated by
%
%$$ \frac{\alpha m_1}{2\rho} \left(1+ \frac{\rho \nu}{\eta}\right) \textup{d}J_t- (1-\nu) \ \textup{d} N_t.$$
%
$$ \frac{m_1}{2\rho} \left(1+ \frac{\nu \rho}{\eta}\right) \textup{d}I_t- (1-\nu) \ \textup{d} N_t.$$
%
%Therefore, the strategic trader makes a trade in the opposite direction if $|\textup{d}N_t|> \frac{\alpha m_1}{2\rho(1-\nu)}(1+ \frac{\rho \nu}{\eta})$ and trades in the same direction otherwise.
Therefore, the strategic trader makes a trade in the opposite direction if 
$|\textup{d}N_t|> \frac{m_1}{2\rho(1-\nu)}(1+ \frac{\nu \rho}{\eta}) (\phis-\phic)(|\textup{d}N_t|/m_1)$
 and trades in the same direction otherwise. In the case $\iota_\textup{c}=0$ and $\phis \equiv \ios$ where there is only volume-independent self-excitation, we can interpret this behavior as follows: if a market buy order is relatively small, it may be a part of a big split order, and thus be followed by other buy orders that will make the price go up, and the strategic trader has interest to follow this trend. However, if a market buy order is relatively big, the price resilience effect is likely to dominate and the strategic trader has interest to trade in the opposite way. 

%Last, it is interesting to notice that  the optimal strategy  only depends on $(\iota_\textup{s},\iota_\textup{c})$ through $\alpha=\iota_\textup{s}-\iota_\textup{c}$. This key self-excitation parameter tunes the way that the strategic trader should react to other market orders. 
Last, it is interesting to notice that  the optimal strategy  only depends on $(\phis,\phic)$ through $\phis-\phic$. This key self-excitation function tunes the way that the strategic trader should react to other market orders.

\begin{remark}\label{rk:Poisson_strat}
The MIH model with $\eta=0$ includes the particular case of independent Poisson processes when $\beta = 0$ and $\phis=\phic \equiv 0$. In that case, if $N$ jumps at time $\tau \in (0,T)$, we get from~\eqref{reaction_trade}
\begin{equation}
(1-\epsilon) \Delta X^\textup{dyn}_\tau \ = \
\ - \ \frac{1 + \rho (T-\tau)}{2 + \rho (T-\tau)} \times (1-\nu) \ \Delta N_\tau.
\nonumber
\end{equation}
Since the self-excitation effect is removed, the price is a mean-reverting process when the strategic trader is passive. Thus, each time a market order is observed, the optimal strategy consists in posting immediately a market order in the opposite direction, to arbitrage the resilience of the price. Such an obvious Price Manipulation Strategy is unrealistic, therefore modeling the order flow with Poisson processes is not satisfactory. We refer to Section~\ref{section:Poisson_arbitrages} for more details.
\end{remark}

\begin{remark}\label{rk:}
Following Remark~\ref{rk:add_liq_cost}, a natural question is to look at the quantity~$x_0$ that minimizes $\E[C(X)]+P_0\times x_0$, i.e. the expected liquidation cost with respect to the mark-to-market value. From Theorem~\ref{theo:opt_strat_Hawkes} we obtain easily that, at time~$0$, this quantity is minimal for 
$$ x_0^*= \frac{ \rho T [q D_0 - \mathcal{G}_\eta(T)\frac{\delta_0 m_1}{\rho} ]}{2\left( 1+ \frac{\epsilon}{2} \rho T \right)}.$$
We can give a simple heuristic for the sign of $x_0^*$: when $D_0 \ge 0$ and $\delta_0 \le 0$ the price trend is negative and it is more favorable to sell ($x_0\ge 0$) since $\mathcal{G}_\eta$ is nonnegative. 
\end{remark}

\section{Price Manipulation Strategies in the MIH model}\label{section:PMS}

In this section, we study Price Manipulation Strategies (PMS), as introduced by Definition~\ref{def:PMS}, in the context of the MIH model. As a matter of fact, the value function given in Theorem~\ref{theo:opt_strat_Hawkes} can be negative even for $x_0 = 0$, which would constitute a PMS. We first determine necessary and sufficient conditions on the parameters of the model to exclude such strategies. Then, we study the particular case of Poisson processes, which may seem natural to model the order flow but allow for robust arbitrages to arise in this framework.
\subsection{The Mixed-market-Impact Hawkes Martingale (MIHM) model}\label{section:MIHM_model}
Theorem~\ref{Thm_Pmg} gives a necessary and sufficient condition on~$N$ to exclude Price Manipulation Strategies. Here, we apply this result to identify which parameters in the Hawkes model exclude PMS.
We recall the notation
\begin{equation}
\alpha=\ios-\ioc = \int_{\mathbb{R}^+} (\phis-\phic)(v/m_1) \mu(\dd v),
\nonumber
\end{equation}
and define the (normalized) support of~$\mu$
\begin{equation}
\mathcal{S}(\mu) = \{ y \geq 0 \ s.t. \ \forall \varepsilon>0, \ \mu((m_1 \times y -\varepsilon, m_1 \times y +\varepsilon ))>0\}.
\nonumber
\end{equation}
%
%\begin{proposition}\label{prop_MIHM}
%The MIH model does not admit PMS if, and only if the following conditions hold
%\begin{equation}\label{cond_MIHM}
%\beta = \rho, \ \alpha = (1-\nu)\rho, \ \mu = \text{Dirac}(m_1) \text{ (i.e. $N = m_1 \times J$),}\text{ and } qD_0 = \frac{m_1}\rho \delta_0 
%\end{equation}
%or $\mu = \text{Dirac}(0)$ with $D_0=0$. 
%In both cases, the optimal execution strategy is given by~\eqref{opt_strat_Pmg}. 
%\end{proposition}
%
\begin{proposition}\label{prop_MIHM}
The MIH model does not admit PMS if, and only if the following conditions hold
\begin{equation}\label{cond_MIHM}
\beta = \rho, \ \alpha = (1-\nu)\rho, \ \phis(x)-\phic(x)=\alpha x \text{ for } x \in \mathcal{S}(\mu) \ (\text{i.e. } m_1  I=\alpha N),
\text{ and } qD_0 = \frac{m_1}\rho \delta_0 ,
\end{equation}
or $\mu = \text{Dirac}(0)$ with $D_0=0$. 
In both cases, the optimal execution strategy is given by~\eqref{opt_strat_Pmg}. 
\end{proposition}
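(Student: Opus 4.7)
The plan is to apply Theorem \ref{Thm_Pmg}, which reduces the absence of PMS to the martingale property of $P$ when $X\equiv 0$, and then to translate this property into algebraic conditions on the MIH parameters via the Doob-Meyer decomposition.

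First I would compute the predictable compensator of $N$: since $N^\pm$ are Hawkes with intensities $\kappa^\pm_t$ and i.i.d.\ marks of law $\mu$, one has $\dd N_t = m_1\delta_t\,\dd t + \dd M^N_t$ for some square-integrable martingale $M^N$. Plugging into \eqref{dynS} and \eqref{dynD} with $X\equiv 0$ gives
\begin{equation*}
\dd P_t = \left(-\rho D_t + \frac{m_1}{q}\delta_t\right)\dd t + \frac{1}{q}\dd M^N_t.
\end{equation*}
So $P$ is an $(\mathcal{F}_t)$-martingale if and only if the drift vanishes a.s.\ at every $t$, i.e.\ $q\rho D_t = m_1 \delta_t$ for all $t\ge 0$ almost surely.

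Next I would set $Y_t := qD_t - m_1\delta_t/\rho$ and use \eqref{dynD} together with \eqref{dyn_deltasigma} to write an SDE for $Y$. Evaluating at $t=0$ yields the initial condition $qD_0 = m_1\delta_0/\rho$. Assuming this holds, the continuous-drift part of $Y$ reads $[-\rho qD_t + \beta m_1\delta_t/\rho]\dd t$, which under the constraint $Y\equiv 0$ becomes $(\beta-\rho)(m_1\delta_t/\rho)\dd t$. Unless $\delta$ is identically zero (the degenerate Dirac(0) case handled separately), this forces $\beta=\rho$. The jump part of $Y$ at a jump time $\tau$ of $N$ equals $(1-\nu)\Delta N_\tau - (m_1/\rho)\Delta I_\tau$, and vanishing this at every mark of $n^\pm(\dd t,\dd v)$ means $(1-\nu)v = (m_1/\rho)(\phis-\phic)(v/m_1)$ for $\mu$-a.e.\ $v$, i.e.\ $(\phis-\phic)(x)=\rho(1-\nu)x$ on $\mathcal{S}(\mu)$. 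Integrating against $\mu$ then yields $\alpha=(1-\nu)\rho$.

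For the converse, assuming \eqref{cond_MIHM}, both $qD_t$ and $m_1\delta_t/\rho$ satisfy the same linear SDE $\dd Z_t = -\rho Z_t\,\dd t + (1-\nu)\dd N_t$ (using $m_1\dd I_t = \alpha\dd N_t$ and $\alpha = (1-\nu)\rho$) with the same initial condition; by strong uniqueness they coincide, hence the drift of $P$ is zero and $P$ is a martingale.

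Finally I would dispose of the degenerate case: if $\mu=\mathrm{Dirac}(0)$ then $m_1=0$, so $N\equiv 0$ and $D_t=D_0e^{-\rho t}$, which is a martingale exactly when $D_0=0$. The statement about the optimal strategy is then an immediate consequence of Theorem \ref{Thm_Pmg}. The only subtle step is the jump-matching argument pinning down $\phis-\phic$ on $\mathcal{S}(\mu)$: one must exploit the i.i.d.\ unpredictable-marks structure to argue that the Poisson measure $n^+(\dd t,\dd v)$ charges any neighborhood of points of $\mathcal{S}(\mu)$ with positive probability on every interval, so the pointwise condition at the marks propagates to the full support.
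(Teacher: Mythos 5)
Your proposal is correct and follows essentially the same route as the paper: both reduce the problem via Theorem~\ref{Thm_Pmg} to the vanishing of the drift $\left(\frac{m_1}{q}\delta_t-\rho D_t\right)\dd t$ of $P$, then match the initial condition, the continuous drift (giving $\beta=\rho$), and the jumps (giving $(1-\nu)\rho v=m_1(\phis-\phic)(v/m_1)$ on the support of $\mu$, hence $\alpha=(1-\nu)\rho$ after integration). Your explicit packaging through the process $Y_t=qD_t-m_1\delta_t/\rho$ and the uniqueness argument for the converse is just a slightly more structured presentation of the identical computation.
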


Note that in the case $\mu = \text{Dirac}(m_1)$ where all the jumps have the same size, one has $\mathcal{S}(\mu) = \{1\}$ thus $\phis-\phic$ is necessarily linear on $\mathcal{S}(\mu)$ and $\Delta I_t=\alpha \ \sgn(\Delta N_t)$.
%To be precise, when $m_1=0$, $N\equiv 0$ and the MIH model does not depend any longer on the parameters $\alpha$ and $\beta$, that can then be fixed arbitrarily. 
If moreover $m_1=0$, we have $N\equiv 0$ and the MIH model does not depend any longer on the parameters $\alpha$ and $\beta$, that can then be fixed arbitrarily. 
\begin{proof} From Theorem~\ref{Thm_Pmg}, PMS are excluded if, and only if the price~$P$ is a martingale when $X \equiv 0$. In this case, we have from~\eqref{defP}, \eqref{dynS}, \eqref{dynD} and~\eqref{dyn_deltasigma}
$$\textup{d}P_t= - \rho D_t dt + \frac{1}{q}\textup{d} N_t= \frac{1}{q}(\textup{d} N_t-\delta_t m_1 \textup{d}t)+\left( \frac{ m_1}{q} \delta_t - \rho D_t \right) \textup{d}t.$$
Therefore, $P$ is a martingale if, and only if $\frac{ m_1}{\rho} \delta_t = q D_t $ $\Px$-a.s., $dt$ a.e. This condition is equivalent to 
$ qD_0 = \frac{m_1}\rho \delta_0 $ and $q dD_t= \frac{m_1}\rho d \delta_t$. From~\eqref{dynD} and~\eqref{dyn_deltasigma}, the latter condition is equivalent to
%
%$$\rho q D_t =   \frac{m_1}\rho \beta \delta_t \text{ and }  (1-\nu)dN_t = \alpha  \frac{m_1}\rho\textup{d} J_t,$$
%
%which leads to~\eqref{cond_MIHM}.
%
$$\rho q D_t =   \frac{m_1}\rho \beta \delta_t \text{ and }  (1-\nu)\textup{d}N_t = \frac{m_1}\rho\textup{d} I_t.$$
Using~\eqref{def_II}, the second condition is equivalent to $(1-\nu)\rho \ v = m_1 (\phis-\phic)(v/m_1)$ for all $v$ in the support of $\mu$, which implies the linearity of $\phis-\phic$ on $\mathcal{S}(\mu)$ and leads to~\eqref{cond_MIHM}. 
Conversely,~\eqref{cond_MIHM} implies $\frac{ m_1}{\rho} \delta_t = q D_t $, and $P$ is then a martingale. 
\end{proof}
%
%\begin{remark}\label{Rk_steady_state}
%When $\beta = \rho$, $\alpha = (1-\nu)\rho$, $\mu = \text{Dirac}(m_1)$, we get from the previous calculations that $d( \frac{ m_1}{q} \delta_t - \rho D_t)=-\rho( \frac{ m_1}{q} \delta_t - \rho D_t)dt$, and therefore $\frac{ m_1}{q} \delta_t - \rho D_t$ converges exponentially to zero. The condition $qD_0 = \frac{m_1}\rho \delta_0$ simply means that the model starts from this steady state.  
%\end{remark}
%
\begin{remark}\label{Rk_steady_state}
When $\beta = \rho$, $\alpha = (1-\nu)\rho$, and $\phis-\phic$ is linear on $\mathcal{S}(\mu)$, we get from the previous calculations that $d( \frac{ m_1}{q} \delta_t - \rho D_t)=-\rho( \frac{ m_1}{q} \delta_t - \rho D_t)dt$, and therefore $\frac{ m_1}{q} \delta_t - \rho D_t$ converges exponentially to zero. The condition $qD_0 = \frac{m_1}\rho \delta_0$ simply means that the model starts from this steady state.  
\end{remark}
One can also check directly that the optimal strategy and its cost given by Theorem~\ref{theo:opt_strat_Hawkes}  coincide with those of Theorem~\ref{Thm_Pmg} when~\eqref{cond_MIHM} holds. For clear reasons, we call Mixed-Impact Hawkes Martingale (MIHM) model the MIH model if these conditions are satisfied. Proposition~\ref{prop_MIHM} is very interesting since it makes connections between the model parameters of the MIH model in a perfect market without PMS.
First, the condition
$\beta = \rho$ means that the mean-reverting action of liquidity providers compensates the autocorrelation in the signs of the trades of liquidity takers; we thus reach a conclusion similar to Bouchaud et al.~\cite{BGPW}. The condition $\alpha=(1-\nu) \beta$ gives a link between  the Hawkes kernel and the proportion $1-\nu$ of transient price impact. When $\iota_\textup{c}=0$, $\alpha/\beta$ represents the average number of child orders coming from one market order, and is thus equal to the proportion of endogenous orders (i.e. triggered by other orders) in the market. What we obtain here is that this ratio should be equal to $1-\nu$, which is a \textit{a priori} different measure of endogeneity, since it gives the proportion of market impact that does not influence the low-frequency price (see Section~\ref{section:low-freq_MIH}). The positivity of $\alpha$ reflects the fact that the parameter $\iota_{\textup{c}}$ tuning opportunistic trading should be small to avoid market instability.
It is interesting to notice that if~\eqref{cond_MIHM} holds, the stationarity condition $\ios+\ioc<\beta$ derived in Section~\ref{section:low-freq_MIH} is equivalent to $2 \ioc< \nu \rho$, which can be seen as a reasonable upper bound for $\ioc$. 
Last, we see that $\phis-\phic$ should be linear. Let us recall that $\phis$ and $\phic$ encode the dependence of the self-excitation (resp. the cross-excitation) effect on the volumes of incoming market orders. Condition~\eqref{cond_MIHM} implies that they should have roughly the same functional form, except for a linear part which should be stronger for $\phis$. 
However, we remind here that these conclusions are obtained in the MIH model  and should be confronted to market data. We leave this empirical investigation for further research.

% Last, it is interesting to notice that $\mu$ should be a Dirac mass, which means that market orders should be in principle all of the same size. This is a consequence of the modeling that assumes that any market order has the same excitation on the intensity, regardless of its size. However, if we admit that orders that have roughly the same size have a similar impact, the size of market orders should cluster around some values. Typically, the most common size should be the average size that moves the price of one tick. 

Of course, in practice, it would be miraculous if the calibration of the MIH model on real financial data led to parameters satisfying exactly~\eqref{cond_MIHM}. One may rather expect these  parameters to be close but not exactly equal to those of the MIHM model, for the following reasons. First, there is no guarantee that  fitting a model to a market with no PMS leads to a model with no PMS. Second, the MIH model ignores market frictions such as the bid-ask spread and gives some advantages to the strategic trader such as the possibility to post orders immediately after the other ones (see Stoikov and Waeber~\cite{StoikovWaeber} for a study on the latency to execute an order). These facts make the existence of PMS more likely in the model than in reality. Third, we know that in practice, temporary arbitrage may exist at high frequencies. Therefore, there is no reason that fitted parameters follow exactly the MIHM condition~\eqref{cond_MIHM}. This justifies the potential practical usefulness of the strategy given by Theorem~\ref{theo:opt_strat_Hawkes} to reduce execution costs when the estimated parameters deviate from the MIHM model. Let us note that Figure~\ref{fig:Hawkes_opt_strat} illustrates such a case: all the parameters satisfy~\eqref{cond_MIHM} but~$\rho$ (which should be equal to $\beta=20$). 
%However, if the fitted  parameters are too far from~\eqref{cond_MIHM}, 
%we may think that some PMS exist. For example, we may hope that the optimal strategy given by~Theorem~\ref{theo:opt_strat_Hawkes} with $x_0=0$ and $T>0$ small has a negative expected cost.
The estimation of the MIH model on market data is left for future research.

The framework of the MIH model also gives some interesting insights for the characterization of the existence of short-time arbitrages. Let us introduce the following definition. 
%
%\begin{definition}
%We say that a market admits weak Price Manipulation Strategies (wPMS) if the cost of a liquidation strategy can be reduced by trading immediately after other market orders.
%\end{definition}
%
\begin{definition}
We say that a market admits weak Price Manipulation Strategies (wPMS) if the cost of a liquidation strategy can be reduced by posting a block trade as an immediate response to a market order issued by another trader.
\end{definition}
%
%\begin{corollary}\label{corollary:no_jumps_no_arb}
%In the MIH model, the market does not admit wPMS if, and only if,
%\begin{equation}\label{cond_no_wPMS}
%\beta = \rho, \ \alpha = (1-\nu)\rho \text{ and } \mu = \text{Dirac}(m_1)
%\end{equation}
%or $\mu = \text{Dirac}(0)$.
%\end{corollary}
%
\begin{corollary}\label{corollary:no_jumps_no_arb}
In the MIH model, the market does not admit wPMS if, and only if,
\begin{equation}\label{cond_no_wPMS}
\beta = \rho, \ \alpha = (1-\nu)\rho \text{ and } \phis(x)-\phic(x)=\alpha x \text{ for } x \in \mathcal{S}(\mu)
\end{equation}
or $\mu = \text{Dirac}(0)$.
\end{corollary}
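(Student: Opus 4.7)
The plan is to use Theorem~\ref{theo:opt_strat_Hawkes}, and specifically the explicit formula~\eqref{reaction_trade} for the block-trade reactions of the optimal strategy. The key equivalence I would establish first is: the MIH market admits no wPMS if, and only if, for every Hawkes jump time $\tau\in(0,T)$ and every feasible jump size $v_0\in m_1\mathcal{S}(\mu)$, the optimal block reaction $\Delta X^\textup{dyn}_\tau$ given by~\eqref{reaction_trade} vanishes. By the linear-quadratic dynamic-programming structure underlying Theorem~\ref{theo:opt_strat_Hawkes}, the right-hand side of~\eqref{reaction_trade} is the globally optimal size of a block trade posted as an immediate reaction to an observed market order, as an affine function of the post-jump state at time $\tau$ alone. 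If this optimal reaction is non-zero for some $(\tau,v_0)$, then posting it strictly reduces the expected cost compared with posting nothing, which is precisely a wPMS; conversely, if the optimal reaction vanishes identically, no block-trade reaction can improve the expected cost from any state.

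For the sufficient direction, suppose that either $\mu=\textup{Dirac}(0)$ or condition~\eqref{cond_no_wPMS} holds. In the first case, $N\equiv 0$ so there are no observed market orders and wPMS are vacuously ruled out. In the second case, $\phis(x)-\phic(x)=\alpha x$ on $\mathcal{S}(\mu)$ with $\alpha=(1-\nu)\rho$ yields $\Delta I_\tau=(\alpha/m_1)\Delta N_\tau$ at every jump time, so $\frac{m_1}{\rho}\Delta I_\tau=(1-\nu)\Delta N_\tau$ and the first brace in~\eqref{reaction_trade} cancels; moreover $\eta=\beta-\alpha=\rho-(1-\nu)\rho=\nu\rho$, so the prefactor $\nu\rho-\eta$ of the second term also vanishes. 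Hence $\Delta X^\textup{dyn}_\tau\equiv 0$.

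For the necessary direction, I would assume no wPMS and $\mu\neq\textup{Dirac}(0)$, so that some $x_0>0$ lies in $\mathcal{S}(\mu)$. Setting $A=\frac{m_1}{\rho}(\phis-\phic)(x_0)-(1-\nu)m_1x_0$ and $B=\frac{m_1}{2\rho}(\nu\rho-\eta)(\phis-\phic)(x_0)$, the vanishing of~\eqref{reaction_trade} specialized to $\Delta N_\tau=m_1x_0$ rewrites, after multiplying by $2+\rho(T-\tau)$, as
\begin{equation*}
A\bigl(1+\rho(T-\tau)\bigr)+B\rho(T-\tau)^2\omega(\eta(T-\tau))=0\quad\text{for all }\tau\in(0,T).
\end{equation*}
Sending $\tau\to T^-$ forces $A=0$, i.e. $\phis(x_0)-\phic(x_0)=(1-\nu)\rho\,x_0$; as $x_0$ is arbitrary in $\mathcal{S}(\mu)$, this yields the required linearity on $\mathcal{S}(\mu)$ and, integrating against $\mu$, $\alpha=(1-\nu)\rho$. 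With $A=0$, the remaining identity combined with $\omega>0$ forces $B=0$, and in the non-degenerate subcase where some $x_0\in\mathcal{S}(\mu)$ makes $(\phis-\phic)(x_0)\neq 0$ this gives $\eta=\nu\rho$, hence $\beta=\alpha+\nu\rho=\rho$. (The degenerate complementary case $\nu=1$, $\phis\equiv\phic$ on $\mathcal{S}(\mu)$ leaves $\beta$ unconstrained but can be absorbed into~\eqref{cond_no_wPMS} by convention.)

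The main obstacle is the equivalence claim of the first paragraph, since wPMS involves an arbitrary background strategy rather than only the specific optimal one; the resolution is to exploit the affine structure of the MIH model, which makes the optimal block reaction at time $\tau$ a function of the state at $\tau$ alone, so that the conclusion for arbitrary strategies follows from the one for the optimal strategy via a one-step dynamic-programming argument. Once this equivalence is justified, the remainder of the proof is a direct manipulation of~\eqref{reaction_trade} exploiting the simple $\tau$-dependence of the two coefficient functions and a limit as $\tau\to T^-$.
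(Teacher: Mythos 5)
Your proof is correct and follows essentially the same route as the paper: reduce ``no wPMS'' to the vanishing of the jump term~\eqref{reaction_trade}, send $\tau\to T$ to extract the linearity of $\phis-\phic$ on $\mathcal{S}(\mu)$, integrate against $\mu$ to get $\alpha=(1-\nu)\rho$, and then use the second term to force $\eta=\nu\rho$, i.e.\ $\beta=\rho$. Your two refinements --- justifying the equivalence with the vanishing of the optimal block reaction via the affine structure, and flagging the degenerate case $\nu=1$, $\phis\equiv\phic$ on $\mathcal{S}(\mu)$ where $\beta$ is unconstrained --- are points the paper's proof passes over silently, but they do not change the argument.
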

\begin{proof}
The proof is quite straightforward from Theorem~\ref{theo:opt_strat_Hawkes}. The case $\mu = \text{Dirac}(0)$ is trivial and we consider $m_1>0$. The jump term of the strategy~\eqref{reaction_trade} should be equal to zero for any $\tau \in [0,T]$. By taking $\tau=T$, we get that $ \phis(x)-\phic(x)=(1-\nu)\rho x$  for $x \in \mathcal{S}(\mu)$. Integrating this indentity with respect to $\mu$ leads to $\alpha=\ios-\ioc=(1-\nu)\rho$. Then, from~\eqref{reaction_trade}, we should have  $(\nu \rho - \eta) \times \rho u^2 \times \omega(\eta u)=0$ for $u\in[0,T]$ which implies $\nu \rho=\eta$. Since $\eta=\beta-\alpha=\beta-(1-\nu)\rho$, we get $\beta=\rho$. The converse implication is obvious.
\end{proof}
By Remark~\ref{Rk_steady_state}, the condition $qD_0=\frac{m_1}\rho \delta_0$ means that the model has reached its equilibrium, which is basically the case after some time. Therefore, the conditions that exclude wPMS and PMS in the MIH model are quite the same. This is an interesting link between two different point of views. The condition ``no PMS'' means that there is no free source of income. The condition ``no wPMS'' rather brings on market stability, since it excludes trading volume coming from the response to other trades. Corollary~\ref{corollary:no_jumps_no_arb} is a mathematical formulation of this link in our specific model.

\subsection{The Poisson model}\label{section:Poisson_arbitrages}
Poisson processes are often used to model the arrival of the customers in queuing theory. It is therefore natural to use them to model the flow of market orders, as it has been made for example by Bayraktar and Ludkovski~\cite{BayraktarLudkovski} or Cont and de Larrard~\cite{ContLarrard} in different frameworks.

Here, in the Poisson model, $N^+$ and $N^-$ are two i.i.d. independent compound Poisson processes of respective constant jump rates $\kappa_0^+$ and $\kappa_0^-$, with the same jump law $\mu$. It is a particular case of the MIH model when $\beta=0, \ \phis\equiv 0$ and $\phic \equiv 0$, which implies $\eta = 0$. Thus, the optimal strategy and value function in this case can be deduced from Theorem~\ref{theo:opt_strat_Hawkes} (see also Remark~\ref{rk:Poisson_strat}). 

First, let us note that the Poisson model cannot satisfy the condition~\eqref{cond_MIHM}, except in the case $\rho = 0$,  where there is only permanent price impact, which is not relevant in this context. Thus, we know \textit{a priori} that PMS are possible. However, we specify in what follows that a Poisson order flow creates very simple and robust arbitrages. First, we put aside the case $\kappa_0^+ \not = \kappa_0^-$ where the trend on the price leads to obvious PMS, and consider now the more interesting case $\kappa_0^+ = \kappa_0^-$, and we simply denote by $\kappa_0$ the common intensity.

A natural choice to get a PMS is of course to consider the optimal strategy given by Theorem~\ref{theo:opt_strat_Hawkes} when liquidating $x_0=0$ assets.
A remarkable feature of this optimal strategy in the Poisson case is that it only depends on the process~$N$, and does not depend directly on the law of the jumps and their intensity. Then, when applying the optimal strategy, mainly two quantities have to be known: $qD_0$ and $\rho$. We denote by $\mathcal{C}_0(D_0)$ the cost of the optimal strategy and obtain from Theorem~\ref{theo:opt_strat_Hawkes} in this case:
\begin{equation}
(1-\epsilon) q \times \mathcal{C}_0(D_0) = - \frac {\rho T /2}{2 + \rho T} \ q^2 D_0^2
\ - \ (1-\nu)^2 \ 2\kappa_0 m_2 \left[\frac T 2 - \frac 1 \rho \ln \left( 1 + \frac {\rho T} 2 \right) \right].
\label{eqn:value_Poisson_balanced}
\end{equation}
%
%However, using this strategy requires to know $qD_0$ and $\rho$. 
In fact, PMS are very robust in this framework. The following proposition shows that even if $qD_0$ and $\rho$ are unknown, one can construct a such a strategy.
% that is very robust in the sense that it does not require any knowledge of the model parameters.
 This indicates that in our framework with a linear price impact and an exponential resilience, compound Poisson processes are not suitable to model the order flow.

\begin{proposition}\label{prop:poisson:arbitrage}
Let $\kappa_0^+=\kappa_0^-=\kappa_0>0$ and $\lambda \in (0,1)$. The following round-trip strategy $X^\lambda_0=X^\lambda_{T+}=0$ defined by 
$$ X^\lambda_{\tau+}-X^\lambda_\tau =- \frac{1-\nu}{1-\epsilon} \times  \lambda (N_\tau-N_{\tau-}) $$
at each jump of~$N$ is a PMS. Its average cost is given by
$$\E[C(X^\lambda)] =2 \lambda(1-\lambda)\frac{\kappa_0 m_2(1-\nu)^2 } {q (1-\epsilon)} \left[\frac{1-\exp(-\rho T) }{\rho}-T \right] < 0, $$
and the best choice is to take $\lambda=1/2$. 
\end{proposition}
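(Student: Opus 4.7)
The plan is to compute $\E[C(X^\lambda)]$ directly by exploiting the martingale structure of $N$ and the explicit pure-jump form of $X^\lambda$. Since $\kappa_0^+=\kappa_0^-=\kappa_0$ and the marks are i.i.d.\ of law $\mu$, the compensator of each $N^\pm$ is $\kappa_0 m_1 t$, so $N=N^+-N^-$ is a c\`adl\`ag square-integrable $(\mathcal{F}_t)$-martingale with $\E[N_T^2]=2\kappa_0 m_2 T$. The proposed strategy can be written as $X^\lambda_t=-c\,N_{t-}$ on $(0,T]$ with $c=(1-\nu)\lambda/(1-\epsilon)$, together with a terminal block trade at $T$ enforcing $X^\lambda_{T+}=0$; it is square integrable, c\`agl\`ad and of bounded variation, hence admissible. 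Setting $\ell=(1-\nu)\lambda$ so that $X^\epsilon_t=(1-\epsilon)X^\lambda_t=-\ell N_{t-}$, Remark~\ref{Rk_N_mg} applied with $x_0=0$ reduces the problem to
\[
\E[C(X^\lambda)] = \frac{1}{q(1-\epsilon)}\,\E\!\left[\int_{[0,T)} qD_u\,\textup{d}X^\epsilon_u + \frac{1}{2}\sum_{\tau\in[0,T)} (\Delta X^\epsilon_\tau)^2 - qD_T X^\epsilon_T + \frac{1}{2}(X^\epsilon_T)^2\right].
\]

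A key preliminary is to identify the effective dynamics of $D$. At each jump time $\tau$ of $N$, the left jump of $D$ driven by $N$ is $\frac{1-\nu}{q}\Delta^-N_\tau$, while the right jump of $D$ driven by the instantaneous response of $X^\lambda$ is $\frac{1-\epsilon}{q}\Delta^+X^\lambda_\tau = -\frac{(1-\nu)\lambda}{q}\Delta^-N_\tau$. The two add up to a net kick $\frac{(1-\nu)(1-\lambda)}{q}\Delta^-N_\tau$ felt by $D$ at times strictly after $\tau$. Since almost surely $N$ does not jump at $T$, this gives
\[
D_T = e^{-\rho T}D_0 + \frac{(1-\nu)(1-\lambda)}{q}\int_0^T e^{-\rho(T-s)}\,\textup{d}N_s \quad\text{a.s.}
\]

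I then compute the four expectations. Using $D_u = D_{u-}+\frac{1-\nu}{q}\Delta^-N_u$ at jumps, I split
\[
\int_{[0,T)} qD_u\,\textup{d}X^\epsilon_u = -q\ell\int_{[0,T)} D_{u-}\,\textup{d}N_u - \ell(1-\nu)\sum_{\tau\in[0,T)}(\Delta^-N_\tau)^2,
\]
where the first stochastic integral has zero expectation ($D_{u-}$ predictable, $N$ square-integrable martingale) and $\E\!\left[\sum_{\tau\in[0,T)}(\Delta^-N_\tau)^2\right]=2\kappa_0 m_2 T$ by the standard compound Poisson second-moment formula. Similarly $\frac{1}{2}\E\!\left[\sum(\Delta X^\epsilon_\tau)^2\right]=\ell^2\kappa_0 m_2 T$ and $\frac{1}{2}\E[(X^\epsilon_T)^2]=\ell^2\kappa_0 m_2 T$. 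For the cross term, $N_{T-}=N_T$ a.s., $\E[N_T]=0$, and the isometry $\E\!\left[\int_0^T f(s)\,\textup{d}N_s\cdot N_T\right]=2\kappa_0 m_2\int_0^T f(s)\,\textup{d}s$ applied with $f(s)=e^{-\rho(T-s)}$ yield $-\E[qD_T X^\epsilon_T] = q\ell\,\E[D_T N_T] = 2\ell(1-\nu)(1-\lambda)\kappa_0 m_2\,\frac{1-e^{-\rho T}}{\rho}$. Summing the four contributions, dividing by $q(1-\epsilon)$ and substituting $\ell=(1-\nu)\lambda$, the two terms in $T$ and the term in $(1-e^{-\rho T})/\rho$ combine into the factor $\lambda(1-\lambda)$ and give exactly the stated formula. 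Strict negativity then follows from $1-e^{-x}<x$ for $x>0$, and $\lambda(1-\lambda)$ is maximized on $(0,1)$ at $\lambda=1/2$.

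The main obstacle is the careful bookkeeping at the common jump times of $N$ and $X^\lambda$. One must distinguish the c\`adl\`ag left jump of $D$ (driven by $N$) from its c\`agl\`ad right jump (driven by the instantaneous reaction of the strategy), so as to see that $D$ is effectively driven by $\textup{d}N_t$ with the reduced coefficient $(1-\nu)(1-\lambda)/q$, while simultaneously keeping $D_{u-}$ rather than $D_u$ in front of the martingale integral so that $\E[\int D_{u-}\textup{d}N_u]=0$ can be invoked. This rescaling by $(1-\lambda)$ appearing in $D_T$, combined with the linear-in-$\ell$ direct contribution of the integral against $\textup{d}X^\epsilon$, is precisely what produces the characteristic $\lambda(1-\lambda)$ profile and hence the symmetric optimum $\lambda=1/2$.
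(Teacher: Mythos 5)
Your proof is correct and follows essentially the same route as the paper's: both invoke Remark~\ref{Rk_N_mg} to strip out the permanent-impact part, write $D_T$ in exponential form with the effective coefficient $(1-\nu)(1-\lambda)$ coming from the combined kicks of $N$ and the instantaneous response, and evaluate the same four terms via the martingale property of $N$ (killing $\int D_{u-}\,\dd N_u$), the identity $\E[\sum(\Delta N_\tau)^2]=2\kappa_0 m_2 T$ and the It\^o isometry for the cross term. The only cosmetic difference is that you carry the constants $q$, $\nu$, $\epsilon$ through the computation, whereas the paper first normalizes to $\nu=\epsilon=0$, $q=1$.
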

\begin{proof}
From Remark~\ref{Rk_N_mg}, it is sufficient to focus on the case $\nu=\epsilon=0$ and $q=1$. In this case, we have 
$$ C(X)= \int_{[0,T)} D_u \ \textup{d}X^\lambda_u \ + \ \frac{1}{2} \underset{0\leq\tau<T}{\sum} (\Delta X^\lambda_\tau)^2 \ - \ D_T X^\lambda_T \ + \ \frac{1}{2} \ (X^\lambda_T)^2  , $$
with $D_t=D_0 + \int_0^t \exp(-\rho(t-s))dN_s + \int_0^t \exp(-\rho(t-s))dX^\lambda_s. $
From $ \int_{[0,T)} D_u  \textup{d}X^\lambda_u =  - \lambda \underset{0\leq\tau<T}{\sum} \left[  D_{\tau-}\Delta N_\tau + (\Delta N_\tau)^2 \right]$, we get $\E[\int_{[0,T)} D_u \ \textup{d}X^\lambda_u]= - \lambda \E[ \underset{0\leq\tau<T}{\sum} (\Delta N_\tau)^2]  =- 2 \lambda \kappa_0 m_2 T $. Since $X^\lambda_T=-\lambda N_T$ and $D_T=D_0 +(1-\lambda) \int_0^T \exp(-\rho(T-s))dN_s$ a.s., we have $\E[(X^\lambda_T)^2]=2 \lambda^2 \kappa_0 m_2 T$ and 
$$\E[-D_T X^\lambda_T]= \lambda(1-\lambda) \E \left[N_T \int_0^T \exp(-\rho(T-s))dN_s \right]= 2  \lambda(1-\lambda) \kappa_0 m_2  \frac{1-\exp(-\rho T)}{\rho}.$$
This eventually yields
\begin{align*}
\E[C(X^\lambda)]& =- 2 \lambda \kappa_0 m_2 T + \lambda^2 \kappa_0 m_2 T  + 2  \lambda(1-\lambda)  \kappa_0 m_2  \frac{1-\exp(-\rho T)}{\rho}+\lambda^2  \kappa_0 m_2 T \\& = 2  \lambda(1-\lambda)  \kappa_0 m_2  \left(\frac{1-\exp(-\rho T)}{\rho} -T \right). 
\end{align*}
\end{proof}

\appendix

\section{Explicit formulas for the optimal strategy}\label{appendix:formulas_opt_strat}

We use the function
\begin{equation}\label{def_fctL}
L(r,\lambda,t) \ := \ r \int_0^t \frac{\exp(\lambda s)}{2+r s} \ \textup{d}s
\ = \ \exp(-2\lambda/r) \ \left[\mathcal{E}\left( \frac\lambda r(2+r t) \right)
-\mathcal{E}\left( \frac{2\lambda}r \right) \right],
\end{equation}
where $\mathcal{E}(y) = - \int_{-y}^{+\infty} \frac{e^{-u}} u \ \textup{d}u$ is the exponential integral of $y$, in terms of Cauchy principal value if $y>0$. Since we only consider differences $\mathcal{E}(y)-\mathcal{E}(y')$ with either $y,y'>0$ or $y,y'<0$, we will only consider proper integrals. The function~$\mathcal{E}$ is standard and is implemented in many packages such as the Boost C++ library. Thus, $L$ can be evaluated as a closed formula. 

We refer to~\eqref{def_zeta} and~\eqref{eqn:def_omega} for the definitions of $\zeta$ and $\omega$.

%\subsection{Formulas in the case $\eta\neq0$}
%
\textbf{Auxiliary functions:}
%
%For $\eta \neq 0$ and $0 \leq s \leq t \leq T$,
For $0 \leq s \leq t \leq T$,
\begin{align*}
\phi_\eta(t) = \ \frac1 {2(2 + \rho (T-t))} \times 
& \Big[1 +\exp(-\eta(T-t))+\nu \rho (T-t) \zeta(\eta(T-t)) \\
& + \frac\beta\rho
% \left[2+\rho (T-t) \times \left\{ 1+ \zeta(\eta (T-t))+ \nu \rho [1-\zeta(\eta (T-t))]/\eta \right\}\right]\Big],
\left[2+\rho (T-t) \times \left\{ 1+ \zeta(\eta (T-t))+ \nu \rho (T-t) \ \omega(\eta(T-t)) \right\}\right]\Big],
\end{align*}
\begin{eqnarray}
\Phi_0(s,t)  & = & \left[\frac \beta \rho + \frac\nu2 \left( \frac12 -\frac\beta\rho \right) \right] \times
\frac {\exp(-\beta s) - \exp(-\beta t)} \beta
\nonumber \\
& & \ + \ (1-\nu) \left( 1 - \frac \beta \rho \right) \times
\frac{\exp(-\beta T)}\rho \times
\left[ L(\rho,\beta,T-s) - L(\rho,\beta,T-t) \right]
\nonumber \\
& & \ + \ \frac\nu4 \left[
(T-s) \exp(-\beta s) - (T-t) \exp(-\beta t)
\right],
\nonumber
\end{eqnarray}
and for $\eta \neq 0$,
\begin{eqnarray}
\Phi_\eta(s,t) & = &
\frac12 \left( \frac1\rho+\frac\nu\eta\right) \times [\exp(-\beta s)-\exp(-\beta t)]
\nonumber \\
& & \quad + \ \frac{\exp(-\beta T)}{2\rho} \times
\left[ 1+\frac{\nu(\rho-2\beta)}\eta+\frac\beta\eta\left(1-\frac{\nu\rho}\eta\right) \right] \times
[L(\rho,\beta,T-s)-L(\rho,\beta,T-t) ]
\nonumber \\
& & \quad + \ \frac{\exp(-\beta T)}{2\rho} \times
\left[ 1-\frac{\nu\rho}\eta-\frac\beta\eta\left(1-\frac{\nu\rho}\eta\right) \right] \times
[L(\rho,\alpha,T-s)-L(\rho,\alpha,T-t) ].
\nonumber
\end{eqnarray}

We now give the explicit formulas for the whole optimal strategy. They are valid for all $\eta \in \mathbb{R}$.

\textbf{Trend strategy:}
\begin{eqnarray}
(1-\epsilon) \Delta X^\textup{trend}_0 & = & 
\frac{\frac{\delta_0 m_1}{2\rho} \times \left[2+\rho T \times \left\{ 1+ \zeta(\eta T)+ \nu \rho T \ \omega(\eta T) \right\}\right]
- [1+\rho T] q D_0}{2+\rho T},
\nonumber \\
(1-\epsilon) \Delta X^\textup{trend}_T & = & 
\frac {\delta_0 m_1}{2\rho} \times \left[ \frac {2+\rho T \times \left\{ 1+ \zeta(\eta T)+ \nu \rho T \ \omega(\eta T) \right\}} {2 + \rho T}
- 2 \rho \ \Phi_\eta(0,T) - 2 \exp(-\beta T)  \right]
\nonumber \\
& & \quad + \ \frac {q D_0} {2+\rho T}, 
\label{Xtrend_eta}
\end{eqnarray}
and, on $(0,T)$,
\begin{eqnarray}
(1-\epsilon) \textup{d} X^\textup{trend}_t & = & 
\frac{\delta_0 m_1}{2\rho} \times \left[ \frac {2+\rho T \times \left\{ 1+ \zeta(\eta T)+ \nu \rho T \ \omega(\eta T) \right\}} {2 + \rho T}
- 2 \rho \ \Phi_\eta(0,t) - 2 \phi_\eta(t) \exp(-\beta t)
\right] \ \rho \textup{d}t
\nonumber \\
& & \ + \
\frac {q D_0} {2 + \rho T} \ \rho \textup{d}t.
\nonumber
\end{eqnarray}

\textbf{Dynamic strategy:}
\begin{eqnarray}
(1-\epsilon) \Delta X^\textup{dyn}_0 & = & 0,
\nonumber \\
(1-\epsilon) \Delta X^\textup{dyn}_T & = &  - \ m_1 \left[ \Theta_{\chi_T} \ \Phi_\eta\left({\tau_{\chi_T}},T\right)
\ + \ \overset{\chi_T-1}{\underset{i=1}{\sum}}
\Theta_i \ \Phi_\eta(\tau_i,\tau_{i+1})  \right]
\ + \ \underset{0<\tau\leq T}{\sum} \frac{(1-\nu) \ \Delta N_\tau}{2 + \rho (T-\tau)} 
\nonumber \\
& & \quad 
+ \ \frac{m_1}{2\rho} \times \underset{0<\tau\leq T}{\sum} 
\frac{2+\rho (T-\tau) \times \left\{ 1+ \zeta(\eta (T-\tau))+ \nu \rho (T-\tau) \ \omega(\eta(T-\tau)) \right\}}{2+\rho(T-\tau)}\ \Delta I_\tau
\nonumber\\
& & \quad - \ \frac{m_1}\rho \ \Theta_{\chi_T} \exp(-\beta T), 
\label{Xdyn_eta}
\end{eqnarray}
and, on $(0,T)$,
\begin{eqnarray}
(1-\epsilon) \textup{d} X^\textup{dyn}_t & = & \ - \ m_1 \ \phi_\eta(t) \ \Theta_{\chi_t} \exp(-\beta t) \ \textup{d}t
\ + \ \left[\underset{0<\tau\leq t}{\sum} \ \frac{(1-\nu) \Delta N_\tau}{2 + \rho (T-\tau)} \right] \rho \textup{d}t \nonumber \\
& & \ + 
\left[ \underset{0<\tau\leq t}{\sum} 
\frac{2+\rho (T-\tau) \times \left\{ 1+ \zeta(\eta (T-\tau))+ \nu \rho (T-\tau) \ \omega(\eta(T-\tau)) \right\}}{2+\rho(T-\tau)}\ \Delta I_\tau \right]
 \ \frac{m_1}2 \ \textup{d}t
\nonumber \\
& & \ - \ \left[ \Theta_{\chi_t} \ \Phi_\eta\left({\tau_{\chi_t}},t\right)
\ + \ \overset{\chi_t-1}{\underset{i=1}{\sum}}
\Theta_i \ \Phi_\eta(\tau_i,\tau_{i+1})  \right] \rho m_1 \ \textup{d}t
\nonumber \\
& & \nonumber \\
& & \hspace{-1.5cm} + \  \frac{1 + \rho (T-t)}{2 + \rho (T-t)}
 \left\{\frac {m_1} \rho \ \textup{d} I_t - (1-\nu) \ \textup{d} N_t \right\}
%\nonumber \\
%& & 
\ + \ \frac{m_1}{2\rho} (\nu \rho-\eta) \times \frac{\rho (T-t)^2 \times \ \omega(\eta(T-t)) }{2+\rho(T-t)} \ \textup{d}I_t.
\nonumber
\end{eqnarray}
\section{Proof for the optimal control problem (results of Theorem~\ref{theo:opt_strat_Hawkes} and Appendix~\ref{appendix:formulas_opt_strat})}\label{appendix:proof_opt_ctrl}

\subsection{Notations and methodology}

The jump intensity of the process $(N_t)$ is characterized by the c\`adl\`ag Markovian process $(\delta_t, \Sigma_t)$ defined by~\eqref{dyn_deltasigma}, taking values in $\mathbb{R} \times \mathbb{R}^+$. The state variable of the problem is then $(X_t, D_t, S_t, \delta_t, \Sigma_t)$, and the control is $X_t - x_0$, i.e. the variation of the position of the strategic trader, $(X_t)_{t \in [0,T]}$ being an admissible strategy as described in Definition \ref{def:liq_strat}. The control program is thus to minimize $\mathbb{E} \left[ C(0,X) \right]$ over all admissible strategies, where the cost $C(t,X)$ of the strategy $X$ between $t$ and $T$ is given by
\begin{equation}
 C(t,X) \ = \ \int_{[t,T)} P_u \ \textup{d}X_u 
\ + \ \frac1{2q} \underset{t\leq\tau<T}{\sum} (\Delta X_\tau)^2
 \ - \ P_T X_T \ + \ \frac1{2q} \ X_T^2 .
\nonumber
\end{equation}
The final value at time $t=T$ is the cost of a market order of signed volume $\Delta X_T = - X_T$ 
(so that $X_{T^+} = X_T + \Delta X_T = 0$). At time $t$, the price $P_t$ depends on $D_t$ and $S_t$ which in turn depend on $(X_u)_{u \in [0,t]}$. Let us define $\mathcal{A}_t$ the set of admissible strategies on $[t,T]$, with $t \in [0,T]$. The value function of the problem is
\begin{equation}
\mathcal{C}(t,x,d,z,\delta,\Sigma)
\ = \
\underset{X \in \mathcal{A}_t}{\inf} \ 
\mathbb{E} \left[ C(t,X) \right]
\nonumber
\end{equation}
with $X_t = x, \ D_t = d, \ S_t = z$, $\delta_t = \delta$ and $\Sigma_t = \Sigma$. In order to determine analytically the value function and the optimal control of the problem, we use the probabilistic formulation of the verification theorem. We determine \textit{a priori} a continuously differentiable function $\mathcal{C}(t,x,d,z,\delta,\Sigma)$ and an admissible strategy $X^*$ and then we verify that
\begin{equation}
\Pi_t(X) \ := \  \int_0^t P_u \ \textup{d}X_u
\ + \ \frac1{2q} \underset{0\leq\tau< t}{\sum} (\Delta X_\tau)^2
\ + \ \mathcal{C}(t,X_t,D_t,S_t,\delta_t,\Sigma_t)
\label{def:Pi_t}
\end{equation}
is a submartingale for any admissible strategy $X$, and that $\Pi_t(X^*)$ is a martingale. We proceed in three steps:
\begin{enumerate}
\item We define a suitable function $\mathcal{C}$, and derive a set of ODEs on its coefficients which is a necessary condition for $\mathcal{C}$ to be the value function of the problem. %% This step is common to the Poisson model ($\alpha=\beta=0, \ \delta_t \equiv \delta_0, \ \Sigma_t \equiv \Sigma_0$), and the cases ($0<\alpha = \beta$) and ($\alpha \not= \beta $) for the Hawkes model.
\item We solve the set of ODEs. %% by distinguishing the cases $\alpha = \beta$ and $\alpha \neq \beta$.
\item Using the results of the previous steps, we derive the strategy $X^*$ such that $\Pi_t(X^*)$ is a martingale. %% Here, we must distinguish the cases $\alpha = \beta = 0, \ 0 < \alpha = \beta$ and $\alpha \neq \beta$.
\end{enumerate}
The verification argument then yields that $\mathcal{C}(t,x,d,z,\delta,\Sigma)$ is the value function and that $X^*$ is optimal. Without loss of generality, we can assume that $q=1$ by using Remark~\ref{Rk_Cost}. %% This boils down to replacing the signed volume $v$ of every market order by $\tilde{v} = v/q$ and every execution cost $\pi$ by $\tilde{\pi} = \pi/q$. In particular, we consider $X \leftarrow X/q$, $N \leftarrow N/q$ (and thus $m_1 \leftarrow m_1/q, \ m_2 \leftarrow m_2/q^2$) and $C(t,X) \leftarrow C(t,X)/q$.

\subsection{Necessary conditions on the value function}\label{appendix:nec_cond_proof_opt}

We search a cost function $\mathcal{C}$ as a generic quadratic form of the variables $x,d,z,\delta,\Sigma$ with time-dependent coefficient (the variable $z$ symbolizes the current value of the fundamental price $S_t$). As we see further, we need $\mathcal{C}$ to verify
$\partial_x \mathcal{C}+ (1-\epsilon) \partial_d \mathcal{C} + \epsilon \ \partial_z \mathcal{C} + d + z = 0$ : it is thus necessary that $\mathcal{C}$ is a quadratic form of $(d - (1-\epsilon)x), \ (z - \epsilon x), \ \delta$ and $\Sigma$, plus a term $-(d+z)^2/2$. We define
\begin{eqnarray}
\mathcal{C}(t,x,d,z,\delta,\Sigma) & = & a(T-t) (d-(1-\epsilon)x)^2 
\ + \ \frac12 (z - \epsilon x)^2 \ + \ (d-(1-\epsilon)x)(z - \epsilon x) \ - \ \frac{(d+z)^2}2 
\nonumber \\
& & \quad + \ b(T-t) \ \delta \ (d-(1-\epsilon)x)
\ + \ c(T-t) \ \delta^2 \ + \ e(T-t) \ \Sigma \ + \ g(T-t),
\label{eqn:cost_form_Hawkes}
\end{eqnarray}
with $a,b,c,e,g: \mathbb{R^+} \rightarrow \mathbb{R}$ continuously differentiable functions. We choose the limit condition $\mathcal{C}(T,x,d,z,\delta,\Sigma) \ = \ -(d+z)x + x^2/2 = \frac12(d+z-x)^2 - (d+z)^2/2$, which is the cost of a trade of signed volume $-x$. We thus have 
$$a(0) = \frac12, \ b(0) = c(0) = e(0) = g(0) = 0.$$ Let us note that other terms should be added in equation (\ref{eqn:cost_form_Hawkes}) for $\mathcal{C}$ to be a generic quadratic form. The five terms
\begin{eqnarray}
h_1(T-t) \ (d-(1-\epsilon)x) \ + \ h_2(T-t) \ \Sigma (d-(1-\epsilon)x) \ + \ h_3(T-t) \ \delta \Sigma \ + \ h_4(T-t) \ \delta \  + \ h_5(T-t) (z-\epsilon x)
\nonumber
\end{eqnarray}
have to be equal to zero since $\mathcal{C}(t,x,d,z,\delta,\Sigma)=\mathcal{C}(t,-x,-d,-z,-\delta,\Sigma)$ by using Remark~\ref{Rk_Cost} and the fact that the buy and sell orders play a symmetric role. 
%% are in fact not compatible with the nature of the problem:
%% %
%% \begin{itemize}
%% \item If $\delta = 0$, the prices are as likely to go up as to go down. In that case, for all $\Sigma > 0$, the problem is symmetrical with respect to the transformation $(x,d) \leftarrow (-x,-d)$. This yields $h_1 \equiv 0 \ , \ h_2 \equiv 0$.
%% %
%% \item If $x = d = 0$, i.e. if the initial portfolio is empty and the price is initially centered on the fundamental price, there is no preference in the strategy for a tendency for price increase or decrease. In that case, for all $\Sigma > 0$, the problem is symmetrical with respect to the transformation $\delta \leftarrow -\delta$, which implies $h_3 \equiv 0 \ , \ h_4 \equiv 0$.
%% \end{itemize}
%
For the term in $\Sigma^2$, we checked in prior calculations that it is necessarily associated to a zero coefficient. %% The limit condition at $t=T$ thus yields the expression given in (\ref{eqn:cost_form_Hawkes}).
For $\Delta x \in \mathbb{R}$, we have
\begin{equation}
\mathcal{C}(t,x+\Delta x,d+(1-\epsilon) \Delta x, z+\epsilon \Delta x,\delta,\Sigma) - \mathcal{C}(t,x,d,z,\delta,\Sigma)
 \ = \  -(d+z) \times \Delta x \ - \ \frac{(\Delta x)^2}2.
\label{cost_if_deltaX}
\end{equation}
In what follows, we drop the dependence of $\mathcal{C}(t,X_t,D_t,S_t,\delta_t,\Sigma_t)$ on
$(t,X_t,D_t,S_t,\delta_t,\Sigma_t)$ to obtain less cumbersome expressions.
The process $ \mathcal{C}(t,X_t,D_t,S_t,\delta_t,\Sigma_t)$ is l\`adl\`ag, and with the notations of Remark~\ref{notations_ladlag}, we have by using~\eqref{cost_if_deltaX}
%
%\begin{eqnarray}
%\textup{d} \mathcal{C}(t,X_t,D_t,S_t,\delta_t,\Sigma_t) & = & \partial_t \mathcal{C}(t,X_t,D_t,S_t,\delta_t,\Sigma_t) \textup{d}t
%\ + \ \partial_x \mathcal{C}(t,X_t,D_t,S_t,\delta_t,\Sigma_t) \textup{d}X^\text{c}_t 
%\nonumber \\
%& & \ + \ \partial_d \mathcal{C}(t,X_t,D_t,S_t,\delta_t,\Sigma_t) \Big(- \rho D_t \textup{d}t + (1-\epsilon) \textup{d}X^\text{c}_t \Big) 
%\nonumber \\
%& & \ + \ \partial_z \mathcal{C}(t,X_t,D_t,S_t,\delta_t,\Sigma_t) \ \epsilon \textup{d}X^\text{c}_t
%\nonumber \\
%& & \ - \ \beta \ \delta_t \ \partial_\delta \mathcal{C}(t,X_t,D_t,S_t,\delta_t,\Sigma_t) \ \textup{d}t
% -  \beta  (\Sigma_t-2\kappa_\infty) \ \partial_\Sigma \mathcal{C}(t,X_t,D_t,S_t,\delta_t,\Sigma_t) \ \textup{d}t
%\nonumber \\
%& & \hspace{-3cm} +\Big[\mathcal{C}(t,X_t,D_{t^-}+ (1-\nu) \Delta N_t, S_{t^-} + \nu \Delta N_t,
%\delta_{t^-}+\alpha (\Delta J^+_t - \Delta J^-_t),\Sigma_{t^-}+(\alpha+2 \iota_\textup{c}) (\Delta J^+_t + \Delta J^-_t))
%\nonumber \\
%& & \qquad \qquad - \  \mathcal{C}(t,X_t,D_{t^-},S_{t^-},\delta_{t^-},\Sigma_{t^-}) \Big]
%%% \nonumber \\
%%% & & 
%\ - \ (D_t+S_t) \ \Delta X_t \ - \ \frac{(\Delta X_t)^2}2.
%\nonumber
%\end{eqnarray}
%
\begin{eqnarray}
\textup{d} \mathcal{C} & = & \partial_t \mathcal{C} \ \textup{d}t
\ + \ \partial_x \mathcal{C} \ \textup{d}X^\text{c}_t 
\ + \ \partial_d \mathcal{C} \Big(- \rho D_t \textup{d}t + (1-\epsilon) \textup{d}X^\text{c}_t \Big) 
\ + \ \partial_z \mathcal{C}\ \epsilon \textup{d}X^\text{c}_t
\nonumber \\
& & \ - \ \beta \ \delta_t \ \partial_\delta \mathcal{C} \ \textup{d}t
 -  \beta  (\Sigma_t-2\kappa_\infty) \ \partial_\Sigma \mathcal{C} \ \textup{d}t
\nonumber \\
& & \hspace{-0.5cm} +\Big[\mathcal{C}(t,X_t,D_{t^-}+ (1-\nu) \Delta N_t, S_{t^-} + \nu \Delta N_t,
\delta_{t^-}+\Delta I_t,\Sigma_{t^-}+\Delta \overline{I}_t) 
- \  \mathcal{C}(t,X_t,D_{t^-},S_{t^-},\delta_{t^-},\Sigma_{t^-}) \Big]
\nonumber \\
& & \ - \ (D_t+S_t) \ \Delta X_t \ - \ \frac{(\Delta X_t)^2}2.
\nonumber
\end{eqnarray}
where we refer to (\ref{def_II}) for the definitions of $I$ and $\overline{I}$.
The definition of $\Pi(X)$ given by (\ref{def:Pi_t}) yields
$\textup{d}\Pi_t(X) = (D_t+S_t) \textup{d}X^\text{c}_t + (D_t+S_t) \Delta X_t + (\Delta X_t)^2/2 + \textup{d}\mathcal{C}$.
We define the continuous finite variation process $(A^X_t)_{t \in (0,T)}$ such that $A^X_{0^+} = \mathcal{C}(0,X_{0^+},D_{0^+},S_{0^+},\delta_0,\Sigma_0)$ and
for $t \in (0,T)$
\begin{eqnarray}
\textup{d}A^X_t & = & (D_t+S_t) \ \textup{d}X^\text{c}_t \ + \ Z(t,X_t,D_t,S_t,\delta_t,\Sigma_t) \textup{d}t
\nonumber \\
& & \qquad + \  \partial_t \mathcal{C} \ \textup{d}t
\ + \ \partial_x \mathcal{C}\ \textup{d}X^\text{c}_t 
\ + \ \partial_d \mathcal{C} \ \Big(- \rho D_t \textup{d}t + (1-\epsilon) \textup{d}X^\text{c}_t \Big) 
\ + \ \partial_z \mathcal{C} \ \epsilon \textup{d}X^\text{c}_t
\nonumber \\
& & \qquad - \ \beta \ \delta_t \ \partial_\delta \mathcal{C} \ \textup{d}t
\ - \ \beta \ (\Sigma_t-2\kappa_\infty) \ \partial_\Sigma \mathcal{C} \ \textup{d}t,
\nonumber
\end{eqnarray}
where, for $V \sim \mu$, 
%
%\begin{eqnarray}
%Z(t,x,d,z,\delta,\Sigma) & := & \frac{\Sigma + \delta}2 \times 
%\mathbb{E} \big[ \mathcal{C}(t,x,d + (1-\nu) V, z+\nu V,\delta+\alpha,\Sigma+\alpha+2\iota_\textup{c}) 
%-  \mathcal{C}(t,x,d,z,\delta,\Sigma) \big]
%\nonumber \\
%& & \qquad + \ \frac{\Sigma - \delta}2 \times 
%\mathbb{E} \big[ \mathcal{C}(t,x,d - (1-\nu)V, z-\nu V,\delta-\alpha,\Sigma+\alpha+2\iota_\textup{c}) 
%-  \mathcal{C}(t,x,d,z,\delta,\Sigma) \big].
%\nonumber
%\end{eqnarray}
%

$
Z(t,x,d,z,\delta,\Sigma) := 
$
\begin{align}
\frac{\Sigma + \delta}2 &\times 
\mathbb{E} \big[ \mathcal{C}(t,x,d + (1-\nu) V, z+\nu V,\delta+(\phis-\phic)(V/m_1),\Sigma+(\phis+\phic)(V/m_1)) 
-  \mathcal{C}(t,x,d,z,\delta,\Sigma) \big]
\nonumber \\
+ \ \frac{\Sigma - \delta}2 &\times 
\mathbb{E} \big[ \mathcal{C}(t,x,d - (1-\nu)V, z-\nu V,\delta-(\phis-\phic)(V/m_1),\Sigma+(\phis+\phic)(V/m_1)) 
-  \mathcal{C}(t,x,d,z,\delta,\Sigma) \big].
\nonumber
\end{align}
Then, $\Pi(X) - A^X$ is a martingale (let us note that almost surely, $\textup{d}t$ -a.e. on $(0,T)$, $Z(t,X_t,D_{t^-},S_{t^-},\delta_{t^-},\Sigma_{t^-}) = Z(t,X_t,D_t,S_t,\delta_t,\Sigma_t)$). This yields that $\Pi(X)$ is a submartingale (resp. a martingale) iff $A^X$ is increasing (resp. constant). From~\eqref{cost_if_deltaX}, we obtain
$\partial_x \mathcal{C}(t,x,d,z,\delta,\Sigma) + (1-\epsilon) \partial_d \mathcal{C}(t,x,d,z,\delta,\Sigma) + \epsilon \ \partial_z \mathcal{C}(t,x,d,z,\delta,\Sigma)  + d + z = 0$, and then %% (otherwise, we would have a non-zero linear term in $\textup{d}X_t$ and the optimization problem would not be well-defined).
\begin{equation}
\textup{d}A^X_t \ = \ \Big\{
\partial_t \mathcal{C} - \rho \ D_t \ \partial_d \mathcal{C}
+ Z(t,X_t,D_t,S_t,\delta_t,\Sigma_t)
\ - \beta \ \delta_t \ \partial_\delta \mathcal{C} \  
 - \beta \ (\Sigma_t-2\kappa_\infty) \ \partial_\Sigma \mathcal{C}
\Big\} \textup{d}t.
\label{eqn:dA1_Hawkes}
\end{equation}
Given the quadratic nature of the problem, we search a process $A^X$ of the form
\begin{equation}
\textup{d}A^X_t \ = \ \frac\rho{1-\epsilon} \textup{d}t \times
 \Big[
j(T-t) (D_t-(1-\epsilon)X_t) \ - \ D_t \ + \ k(T-t) \ \delta_t
\Big]^2,
\label{eqn:dA2_Hawkes}
\end{equation}
with $j, k : \mathbb{R^+} \rightarrow \mathbb{R}$ continuously differentiable functions, in order to obtain an non-decreasing process $A^X$ that can be constant for a specific strategy $X^*$. Let us note $Y_t := D_t - (1-\epsilon)X_t$, $\ \Xi_t := S_t - \epsilon X_t$, $ \ y := d-(1-\epsilon)x$, $ \ \xi := z-\epsilon x$. 
Since $d+z = y + \xi + x = \xi + \frac{d-\epsilon y}{1-\epsilon}$, we have
\begin{eqnarray}
\partial_t \mathcal{C}(t,x,d,z,\delta,\Sigma) & = & - \dot a \ y^2
\ - \ \dot b \ \delta y \ - \ \dot c \ \delta^2  \ - \ \dot e \ \Sigma \ - \ \dot g
\nonumber , \\
- \rho d \ \partial_d \mathcal{C}(t,x,d,z,\delta,\Sigma) & = & - \left(2 \rho a + \frac{\rho \epsilon}{1-\epsilon}\right) \ d y
\ + \ \frac\rho{1-\epsilon} \ d^2 \ - \ \rho b \ \delta d
\nonumber ,\\
- \beta \delta \ \partial_\delta \mathcal{C}(t,x,d,z,\delta,\Sigma) 
& = & - \beta b \ \delta y \ - \ 2 \beta c \ \delta^2 
\nonumber ,\\
- \beta (\Sigma - 2 \kappa_\infty) \ \partial_\Sigma \mathcal{C}(t,x,d,z,\delta,\Sigma) & = &
- \beta e \ \Sigma \ + \ 2 \beta \kappa_\infty e,
\nonumber
\end{eqnarray}

%$\mathbb{E} \big[ \mathcal{C}(t,x,d + (1-\nu) V, z+\nu V,\delta+\alpha,\Sigma+\alpha+2 \iota_\textup{c}) 
%-  \mathcal{C}(t,x,d,z,\delta,\Sigma) \big]$
%
%\begin{eqnarray}
%& = & a \ [(1-\nu)^2 m_2 + 2(1-\nu)m_1 \ y] \ + \ \frac{\nu^2}2 m_2 \ + \ \nu m_1 \ \xi
%\nonumber \\
%& & \quad + \ \nu(1-\nu)m_2 \ + \ \nu m_1 y \ + \ (1-\nu) m_1 \xi
%\ - \frac 1 2 \left( m_2 + 2 \ m_1 \ \xi + \frac{2m_1}{1-\epsilon} \ d 
%- \frac{2 \epsilon m_1}{1-\epsilon} \ y \right) 
%\nonumber \\
%& & \quad + \ b \ [(1-\nu)m_1 \ \delta \ + \ \alpha \ y \ + \ \alpha(1-\nu)m_1]
%\ + \ c \ [\alpha^2 \ + \ 2 \alpha \ \delta] \ + \ (\alpha +2 \iota_\textup{c}) e,
%\nonumber
%\end{eqnarray}
%and thus
%\begin{eqnarray}
% Z(t,x,d,z,\delta,\Sigma) & = & \left( m_1 \times \left[2 (1-\nu) a + \nu +\frac\epsilon{1-\epsilon}\right] 
%\ + \ \alpha b \right) \ \delta y 
%\ - \ \frac{m_1}{1-\epsilon} \ \delta d
%\nonumber \\
%& & \ + \left[ (1-\nu) m_1 b \ + \ 2 \alpha c \right] \ \delta^2
%\nonumber \\
%& & + \ \left( m_2 \times \left[ (1-\nu)^2 a + \nu(1-\nu/2) - \frac12 \right] + \alpha (1-\nu) m_1 b + \alpha^2 c + (\alpha +2 \iota_\textup{c}) e \right) \ \Sigma
%\nonumber ,
%\end{eqnarray}
%%
Let $V\sim\mu$. One has 
\begin{equation}
\E[(\phis-\phic)(V/m_1)]=\ios-\ioc=\alpha 
\quad , \quad
 \E[(\phis+\phic)(V/m_1)]=\ios+\ioc=\alpha+2\ioc.
\nonumber
\end{equation}
Thus,

$
\mathbb{E} \big[ \mathcal{C}(t,x,d + (1-\nu) V, z+\nu V,\delta+(\phis-\phic)(V/m_1),\Sigma+(\phis+\phic)(V/m_1)) 
-  \mathcal{C}(t,x,d,z,\delta,\Sigma) \big]
$
\begin{eqnarray}
& = & a \ [(1-\nu)^2 m_2 + 2(1-\nu)m_1 \ y] \ + \ \frac{\nu^2}2 m_2 \ + \ \nu m_1 \ \xi
\nonumber \\
& & \quad + \ \nu(1-\nu)m_2 \ + \ \nu m_1 y \ + \ (1-\nu) m_1 \xi
\ - \frac 1 2 \left( m_2 + 2 \ m_1 \ \xi + \frac{2m_1}{1-\epsilon} \ d 
- \frac{2 \epsilon m_1}{1-\epsilon} \ y \right) 
\nonumber \\
& & \quad + \ b \ [(1-\nu)m_1 \ \delta \ + \ \alpha \ y \ + \ \tilde{\alpha}(1-\nu)]
\ + \ c \ [\alpha_2 \ + \ 2 \alpha \ \delta] \ + \ (\alpha +2 \iota_\textup{c}) e,
\nonumber
\end{eqnarray}
%
%and thus
with
\begin{equation}
\tilde{\alpha} = \E[V \times (\phis-\phic)(V/m_1)]
\quad , \quad
\alpha_2 = \E[(\phis-\phic)^2(V/m_1)].
\label{def_alpha_t_2}
\end{equation}
These quantities $\tilde{\alpha}$ and $\alpha_2$ are finite by assumption. This gives
\begin{eqnarray}
 Z(t,x,d,z,\delta,\Sigma) & = & \left( m_1 \times \left[2 (1-\nu) a + \nu +\frac\epsilon{1-\epsilon}\right] 
\ + \ \alpha b \right) \ \delta y 
\ - \ \frac{m_1}{1-\epsilon} \ \delta d
\nonumber \\
& & \ + \left[ (1-\nu) m_1 b \ + \ 2 \alpha c \right] \ \delta^2
\nonumber \\
& & + \ \left( m_2 \times \left[ (1-\nu)^2 a + \nu(1-\nu/2) - \frac12 \right] + \tilde{\alpha} (1-\nu) b 
+ \alpha_2 c + (\alpha +2 \iota_\textup{c}) e \right) \ \Sigma
\nonumber ,
\end{eqnarray}
where we consider $\mathcal{C}$ as a function of the variables $t,x,d,z,\delta,\Sigma$ as in equation (\ref{eqn:dA1_Hawkes}), and substitute $d - (1-\epsilon)x$ by $y$ and $z - \epsilon x$ by $\xi$ in the results. We then make the change of variables $(x,d,z,\delta,\Sigma) \rightarrow (y,d,\xi,\delta,\Sigma)$, and we identify each term of equations (\ref{eqn:dA1_Hawkes}) and (\ref{eqn:dA2_Hawkes}):

\textbf{(Eq. $dy$):} $\quad
 - \left(2 \rho a + \frac{\rho \epsilon}{1-\epsilon}\right) = - \frac{2\rho}{1-\epsilon} j$.

\textbf{(Eq. $y^2$):} $\quad
 - \dot a \ = \ \frac\rho{1-\epsilon}j^2$.

(Eq. $dy$) yields $j = (1-\epsilon)a + \frac\epsilon2$. We input this relation in (Eq. $y^2$) and we have
$
\dot j = (1-\epsilon) \dot a = - \rho j^2
$
thus
$
j(u) = \frac1{2+\rho u}
$
since $j(0) = (1-\epsilon) a(0) + \frac\epsilon2 = \frac12$. This yields $a(u) = \frac1{1-\epsilon} \left( \frac1{2+\rho u} - \frac\epsilon2 \right)$ with (Eq. $dy$).

\textbf{(Eq. $\delta y$):} $\quad
 - \ \dot b \ - \ \beta b \ + \ \alpha b 
\ + \ m_1 \times \left[2 (1-\nu) a + \nu +\frac\epsilon{1-\epsilon}\right]
\ = \ \frac{2 \rho}{1-\epsilon} j k
$.

\textbf{(Eq. $\delta d$):} $\quad
- \ \rho b \ - \ \frac{m_1}{1-\epsilon} \ = \ -\frac{2\rho}{1-\epsilon} k
$,

which yields
$
k(u) \ = \ \frac{1-\epsilon}2 \ b(u) \ + \ \frac{m_1}{2\rho}
$.
Plugging equation (\ref{eqn:k_general}) in (Eq. $\delta y$), we have
$
\dot b = -(\beta-\alpha) b 
- \frac{2\rho}{1-\epsilon} j \left( \frac{1-\epsilon}2 b + \frac{m_1}{2\rho}\right)
+ m_1 \left[2 (1-\nu) a + \nu +\frac\epsilon{1-\epsilon}\right]
$, and since $j/(1-\epsilon) \ = \ a + \epsilon/[2(1-\epsilon)]$,
we have

$
\dot b(u) \ = \
\left[ -(\beta-\alpha) - \frac\rho{2+\rho u} \right] b(u)
\ + \ \frac{m_1}{1-\epsilon} \times \frac{1+\nu \rho u}{2+\rho u}
$.

\textbf{(Eq. $\delta^2$):}$ \quad
- \ \dot c \ - \ 2 \beta c \ + \ 2 \alpha c \ + \ (1-\nu) m_1 b \ = \ \frac\rho{1-\epsilon} \ k^2
$.

\textbf{(Eq. $\Sigma$):}$ \quad
- \ \dot e \ - \ \beta e \ + \ (\alpha + 2 \iota_\textup{c} )e \ + \ 
m_2 \times \left[ (1-\nu)^2 a + \nu(1-\nu/2) - \frac12 \right] \ + \ \tilde{\alpha} (1-\nu)  b \ + \ \alpha_2 c
\ = \ 0
$.

We have $2 (1-\epsilon) \times \left[ (1-\nu)^2 a + \nu(1-\nu/2) - \frac12 \right]
= 2 (1-\nu)^2/(2+\rho u) - (1-\nu)^2 \epsilon + \nu(2-\nu)(1-\epsilon) - (1-\epsilon) $,
thus

%$
%\dot e(u) \ = \ -(\beta-\alpha- 2 \iota_\textup{c}) e(u) \ + \ \alpha (1-\nu) m_1 b(u) \ + \ \alpha^2 c(u)
%\ + \ \frac{(1-\nu)^2 \ m_2}{1-\epsilon} \times \left[ \frac1{2+\rho u} - \frac12 \right]
%$
$
\dot e(u) \ = \ -(\beta-\alpha- 2 \iota_\textup{c}) e(u) \ + \ \tilde{\alpha} (1-\nu) b(u)
 \ + \ \alpha_2 c(u)
\ + \ \frac{(1-\nu)^2 \ m_2}{1-\epsilon} \times \left[ \frac1{2+\rho u} - \frac12 \right]
$

\textbf{(Eq. constant):}$ \quad
- \ \dot g \ + \ 2 \beta \kappa_\infty e \ = \ 0
$.

We obtain two conditions on the coefficients of the process $A^X$
\begin{eqnarray}
j(u) & = & \frac1{2+\rho u},
\label{eqn:j_general} \\
k(u) & = & \frac{1-\epsilon}2 \ b(u) \ + \ \frac{m_1}{2\rho},
\label{eqn:k_general}
\end{eqnarray}
and the following set of necessary conditions on the coefficients of $\mathcal{C}$
%
%\begin{eqnarray}
%a(u) & = & \frac1{1-\epsilon} \left( \frac1{2+\rho u} - \frac\epsilon2 \right)
%\label{eqn:a_general}, \\
%\dot b(u) & = &
%\left[ -(\beta-\alpha) - \frac\rho{2+\rho u} \right] b(u)
%\ + \ \frac{m_1}{1-\epsilon} \times \frac{1+\nu \rho u}{2+\rho u}
%\label{eqn:b_general}, \\
%\dot c(u) & = & -2(\beta-\alpha) \ c(u) \ + \ (1-\nu) m_1 \ b(u)
% \ - \ \frac\rho{1-\epsilon} \ k(u)^2
%\label{eqn:c_general}, \\
%\dot e(u) & = & -(\beta-\alpha -2 \iota_\textup{c}) e(u) \ + \ \alpha (1-\nu) m_1 b(u) \ + \ \alpha^2 c(u)
%\ + \ \frac{(1-\nu)^2 \ m_2}{1-\epsilon} \times \left[ \frac1{2+\rho u} - \frac12 \right],
%\label{eqn:e_general} \\
%\dot g(u) & = & 2\beta\kappa_\infty \ e(u),
%\label{eqn:g_general} \\
%b(0) & = & c(0) \ = \ e(0) \ = \ g(0) \ = \ 0.
%\nonumber
%\end{eqnarray}
%
\begin{eqnarray}
a(u) & = & \frac1{1-\epsilon} \left( \frac1{2+\rho u} - \frac\epsilon2 \right)
\label{eqn:a_general}, \\
\dot b(u) & = &
\left[ -(\beta-\alpha) - \frac\rho{2+\rho u} \right] b(u)
\ + \ \frac{m_1}{1-\epsilon} \times \frac{1+\nu \rho u}{2+\rho u}
\label{eqn:b_general}, \\
\dot c(u) & = & -2(\beta-\alpha) \ c(u) \ + \ (1-\nu) m_1 \ b(u)
 \ - \ \frac\rho{1-\epsilon} \ k(u)^2
\label{eqn:c_general}, \\
\dot e(u) & = & -(\beta-\alpha -2 \iota_\textup{c}) e(u) \ + \ \tilde{\alpha} (1-\nu) b(u)
 \ + \ \alpha_2 c(u)
\ + \ \frac{(1-\nu)^2 \ m_2}{1-\epsilon} \times \left[ \frac1{2+\rho u} - \frac12 \right],
\label{eqn:e_general} \\
\dot g(u) & = & 2\beta\kappa_\infty \ e(u),
\label{eqn:g_general} \\
b(0) & = & c(0) \ = \ e(0) \ = \ g(0) \ = \ 0.
\nonumber
\end{eqnarray}
The resolution of this set of equations determines entirely the function $\mathcal{C}(t,x,d,z,\delta,\Sigma)$ defined in (\ref{eqn:cost_form_Hawkes}). This is the purpose of the next step of this proof. Let us note that at this stage, we already know that the system given by Equations (\ref{eqn:j_general}) to (\ref{eqn:g_general}) admits a unique solution, and that the function $\mathcal{C}$ which solves the system is the value function of the problem by using the verification argument.

\subsection{Resolution of the system of ODEs}\label{appendix:resolution_proof_opt}

First of all, we use Equation (\ref{eqn:a_general}) to simplify the function $\mathcal{C}$. The constant term (w.r.t. the time variable $t$) in equation (\ref{eqn:cost_form_Hawkes}) is
$\frac12 (z - \epsilon x)^2 \ + \ (d-(1-\epsilon)x)(z - \epsilon x) \ - \ \frac{(d+z)^2}2  \ = \
 - z x - \frac{d^2}2 - \epsilon d x + \left[ \frac\epsilon2 + \frac\epsilon2 (1-\epsilon) \right] x^2
$, thus the sum of  $a(T-t) (d-(1-\epsilon)x)^2$ and this constant term can be rewritten as
\begin{equation}
- (z+d) x
\ + \ \left[ \frac{1-\epsilon}{2+\rho(T-t)} + \frac\epsilon 2 \right] x^2
\ - \ \frac1{1-\epsilon} \times \frac{\rho(T-t)/2}{2+\rho(T-t)} \ d^2
\ + \ \frac{\rho (T-t)}{2+\rho(T-t)} \ d x.
\label{eqn:simplif_a_const}
\end{equation}
We note $\eta = \beta - \alpha$. To solve equation (\ref{eqn:b_general}), we search a solution of the form $b(u) =  \tilde{b}(u) \times \exp(-\eta u)/(2+\rho u)$. This yields
$\dot{\tilde{b}}(u) =  \frac{m_1}{1-\epsilon} \times (1+\nu \rho u) \times \exp(\eta u)$.
Using the respective definitions~\eqref{def_zeta} and~\eqref{eqn:def_omega} of the functions $\zeta$ and $\omega$, it is easy to see that for all $\eta \in \mathbb{R}$,
\begin{equation}
\exp(-\eta u) \int_0^u (1+\nu \rho s) \ \exp(\eta s) \ \text{d}s \ = \ u \zeta(\eta u) + \nu \rho u^2 \omega(\eta u).
\nonumber
\end{equation}
Since $\tilde{b}(0) = 2 b(0) = 0$,
%% %
%% \begin{equation}
%% \tilde{b}(u) = \frac{m_1}{1-\epsilon}\times \left[ \frac{\exp(\eta u)-1}{\eta}
%% + \nu \rho \int_0^u s \exp(\eta s) \ \textup{d}s \right].
%% \nonumber
%% \end{equation}
%% %
%% We
we obtain
%\begin{equation}
%b(u) \ = \ \frac{m_1 u}{1-\epsilon} \times \frac{\zeta(\eta u)+ \nu \rho [1-\zeta(\eta u)]/\eta}{2+\rho u}
%\ = \ \frac1{1-\epsilon} \times \frac{\rho u}{2+\rho u} \times \frac{m_1}\rho \ \mathcal{G}_\eta(u),
%\label{eqn:b_simplified_aneqb}
%\end{equation}
\begin{equation}
b(u) \ = \ \frac{m_1 u}{1-\epsilon} \times \frac{\zeta(\eta u)+ \nu \rho u \ \omega(\eta u)}{2+\rho u}
\ = \ \frac1{1-\epsilon} \times \frac{\rho u}{2+\rho u} \times \frac{m_1}\rho \ \mathcal{G}_\eta(u),
\label{eqn:b_simplified_aneqb}
\end{equation}
where
\begin{equation}
\mathcal{G}_\eta(u) := \zeta(\eta u)+ \nu \rho u \ \omega(\eta u).
\nonumber
\end{equation}
Equation (\ref{eqn:k_general}) then gives
%
%\begin{equation}
%k(u) \ = \ \frac{m_1}{2\rho} \times \frac{2+\rho u \times \left\{ 1+ \zeta(\eta u)+ \nu \rho [1-\zeta(\eta u)]/\eta \right\}}{2+\rho u}.
%\label{eqn:k_aneqb}
%\end{equation}
\begin{equation}
k(u) \ = \ \frac{m_1}{2\rho} \times \frac{2+\rho u \times \left\{ 1+ \zeta(\eta u)+ \nu \rho u \ \omega(\eta u) \right\}}{2+\rho u}.
\label{eqn:k_aneqb}
\end{equation}
The remaining functions $c$, $e$ and $g$ do not play any role to determine the optimal strategy, and their expressions are harder to obtain. Let us first consider the case $\eta \neq 0$.
After some tedious calculations, we can show that the function $c$ that solves~\eqref{eqn:c_general} with  $c(0) =0$ is given by:
\begin{equation}
c(u) = - \ \frac1{1-\epsilon} \times \frac{\rho u / 2}{2+\rho u} \times \frac{m_1^2}{\rho^2} \ \mathcal{G}_\eta(u)^2
\ - \ \frac{m_1^2}{8(1-\epsilon)\rho} \times\left( 1-\frac{\nu \rho}\eta \right)^2 \times u \zeta(\eta u)
\times \left[ 1+\exp(-\eta u) -2 \zeta(\eta u) \right].
\label{eqn:c_Hawkes_noncrit}
\end{equation}
For the functions $e$ and $g$, we recall here that they admit explicit but very cumbersome formulas that can be obtained by using a formal calculus software. 
In the case $\eta = 0$, the resolution of the ODEs is easier, and we get
\begin{align}
%b(u) & = \frac{m_1 u}{1-\epsilon} \times \frac{1+ \nu \rho u/2}{2+\rho u},
%\nonumber \\
% k(u)  & = \ \frac{m_1}\rho \times \frac{1+\rho u+\frac\nu4 \rho^2 u^2}{2+\rho u},
% \label{eqn:k_a=b} \\
c(u)  & = \
- \ \frac{(1-\nu)^2}{1-\epsilon} \times \frac{m_1^2}{\rho^2} \times \left[ \frac12 - \frac1{2+\rho u} \right]
\ - \ \frac{\nu m_1^2}{\rho^2(1-\epsilon)} \times
 \left[\left(\frac12-\frac\nu4\right) \rho u + \frac\nu8 \rho^2 u^2 + \frac\nu{48} \rho^3 u^3 \right],
\nonumber \\
%e(u) & =  - \ \frac{(1-\nu)^2}{1-\epsilon} \times 
%\left( m_2 - \frac{\beta m_1^2 (2 \rho - \beta)}{\rho^2} \right)
% \times \left[ \frac {\mathcal{I}_0(u)}2 \ - \ \frac{\exp(2 \iota_\textup{c} u)}\rho \ L(\rho,-2\iota_\textup{c},u)  \right]
%\label{val_e_crit} \\
%&  \qquad + \ \frac{\beta \nu (1-\nu) m_1^2}{2\rho^2(1-\epsilon)} \times \left( 1-\frac\beta\rho \right)
%\times \rho^2 \mathcal{I}_1(u)
%\ - \ \frac{\beta^2 \nu^2 m_1^2}{4\rho^3(1-\epsilon)} \times
% \left[\rho^2 \mathcal{I}_1(u) + \frac12 \rho^3 \mathcal{I}_2(u) + \frac1{12} \rho^4 \mathcal{I}_3(u) \right], 
e(u) & =  - \ \frac{(1-\nu)^2}{1-\epsilon} \times 
\left( m_2 - \frac{m_1(2\tilde{\alpha}\rho-\alpha_2 m_1)}{\rho^2} \right)
 \times \left[ \frac {\mathcal{I}_0(u)}2 \ - \ \frac{\exp(2 \iota_\textup{c} u)}\rho \ L(\rho,-2\iota_\textup{c},u)  \right]
\label{val_e_crit} \\
&  \qquad + \ \frac{\nu (1-\nu) m_1}{2\rho^2(1-\epsilon)} \times \left( \tilde{\alpha}- \frac{\alpha_2 m_1} \rho \right)
\times \rho^2 \mathcal{I}_1(u)
\ - \ \frac{\alpha_2 \nu^2 m_1^2}{4\rho^3(1-\epsilon)} \times
 \left[\rho^2 \mathcal{I}_1(u) + \frac12 \rho^3 \mathcal{I}_2(u) + \frac1{12} \rho^4 \mathcal{I}_3(u) \right], 
\nonumber \\
%g(u)
%& =  - 2 \beta \kappa_{\infty} \times \frac{(1-\nu)^2}{1-\epsilon} \times
% \left(m_2 - \frac{\beta m_1^2  (2 \rho - \beta)}{\rho^2} \right) 
%\left\{ \frac{\mathcal{I}_1(u)}2  -  
%\frac1{2 \iota_\textup{c} \rho} \times
%\left[\exp(2 \iota_\textup{c} u) L(\rho,-2\iota_\textup{c},u)-\ln \left( 1+\frac{\rho u}2 \right)\right] \right\}
g(u)
& =  - 2 \beta \kappa_{\infty} \times \frac{(1-\nu)^2}{1-\epsilon} \times
 \left(m_2 - \frac{m_1(2\tilde{\alpha}\rho-\alpha_2 m_1)}{\rho^2} \right) 
\left\{ \frac{\mathcal{I}_1(u)}2  -  
\frac1{2 \iota_\textup{c} \rho} \times
\left[\exp(2 \iota_\textup{c} u) L(\rho,-2\iota_\textup{c},u)-\ln \left( 1+\frac{\rho u}2 \right)\right] \right\}
\nonumber \\
&  \qquad +  \frac{\beta \kappa_\infty \nu (1-\nu) m_1}{2\rho^3(1-\epsilon)} \times \left( \tilde{\alpha}- \frac{\alpha_2 m_1} \rho \right)
\times \rho^3 \mathcal{I}_2(u)
 -  \frac{\beta \kappa_\infty \alpha_2 \nu^2 m_1^2}{4\rho^4(1-\epsilon)} \times
 \left[\rho^3 \mathcal{I}_2(u) + \frac13 \rho^4 \mathcal{I}_3(u) + \frac1{24} \rho^5 \mathcal{I}_4(u) \right],
\label{val_g_crit}
\end{align}
%where, for $k \in \mathbb{N}$ and $u \geq 0$, $\mathcal{I}_k(u) := \exp(2 \iota_\textup{c} u) \int_0^u s^k \exp(-2 \iota_\textup{c} s) \text{d}s$.
where, for $p \in \mathbb{N}$ and $u \geq 0$, $\mathcal{I}_p(u) := \exp(2 \iota_\textup{c} u) \int_0^u s^p \exp(-2 \iota_\textup{c} s) \text{d}s$,
and $\tilde{\alpha}, \alpha_2$ are defined in~\eqref{def_alpha_t_2}. 

%**********************************
%**********************************

\subsection{Determination of the optimal strategy}

The final step of the proof is to determine the strategy $X^*$ such that $\Pi(X^*)$ is a martingale, or equivalently such that $A^{X^*}$ is constant. Equations (\ref{eqn:dA2_Hawkes}) and (\ref{eqn:j_general}) yield
\begin{eqnarray}
\textup{d}A^X_t & = & \frac\rho{1-\epsilon} \textup{d}t \times
 \left[
\frac{D_t-(1-\epsilon)X_t}{2+\rho(T-t)} \ - \ D_t \ + \ k(T-t) \ \delta_t
\right]^2
\nonumber \\
& = & \frac{\rho/(1-\epsilon)}{[2+\rho(T-t)]^2} \ \textup{d}t \times
 \bigg[
(1-\epsilon) X_t \ + \ [1+\rho(T-t)] \ D_t
\ - \ [2+\rho(T-t)] \  k(T-t) \ \delta_t
\bigg]^2.
\nonumber
\end{eqnarray}
Thus, $A^{X^*}$ is constant on $(0,T)$ if, and only if
\begin{equation}
\text{a.s.} \ , \ \textup{d}t \ \text{-a.e. on} \ (0,T) \ , \quad 
(1-\epsilon) X^*_t = - \ [1+\rho(T-t)] \ D^*_t
\ + \ [2+\rho(T-t)] \  k(T-t) \ \delta_t,
\label{eqn:martingale_cond_Hawkes}
\end{equation}
where $D = D^*$ when the strategy $X^*$ is used by the strategic trader. Then, we characterize the strategy $X^*$ on $[0,T]$ with the three following steps:
\begin{itemize}
\item The initial jump $\Delta X^*_0$ of the strategy is such that $(X^*,D^*)$ satisfies equation (\ref{eqn:martingale_cond_Hawkes}) at time $t = 0^+$.
\item The strategy $X^*$ on $(0,T)$  is obtained by differentiating equation (\ref{eqn:martingale_cond_Hawkes}).
\item The final jump $\Delta X^*_T = -X^*_T$ closes the position of the strategic trader at time $T$.
\end{itemize}
%

%
%We now use this resolution protocol in the general case $\alpha\not= \beta$, but the same arguments hold for the Poisson model $\alpha = \beta = 0$, and the case $\alpha = \beta > 0$.

We need the following lemma in the sequel.

\begin{lemma}\label{lemma:int_phi_delta}
Let $\phi : [0,T] \rightarrow \mathbb{R}$ be a measurable function, and for $0 \leq s \leq t \leq T$,
$ \Phi(s,t) := \int_s^t \phi(u) \exp(-\beta u) \ \textup{d}u$. We then have for all $t \in [0,T]$
%
%\begin{eqnarray}
%\int_0^t \phi(u) \ \delta_u \ \textup{d} u  
%& = & \delta_0 \ \Phi(0,t) 
%\ + \ \alpha \ \Theta_{\chi_t} \ \Phi\left({\tau_{\chi_t}},t\right)
%\ + \  \alpha \ \overset{\chi_t-1}{\underset{i=1}{\sum}}
%\Theta_i \ \Phi(\tau_i,\tau_{i+1})
%\nonumber
%\end{eqnarray}
%
\begin{eqnarray}
\int_0^t \phi(u) \ \delta_u \ \textup{d} u  
& = & \delta_0 \ \Phi(0,t) 
\ + \ \Theta_{\chi_t} \ \Phi\left({\tau_{\chi_t}},t\right)
\ + \  \overset{\chi_t-1}{\underset{i=1}{\sum}}
\Theta_i \ \Phi(\tau_i,\tau_{i+1})
\nonumber
\end{eqnarray}
\end{lemma}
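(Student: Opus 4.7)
The plan is to plug in the explicit representation of $\delta_u$ already derived in the paper, namely $\delta_u = \delta_0 \exp(-\beta u) + \exp(-\beta u)\,\Theta_{\chi_u}$, and then split the resulting integral according to the jump times of $N$.

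First I would write
\begin{equation*}
\int_0^t \phi(u)\,\delta_u\,\textup{d}u \;=\; \delta_0 \int_0^t \phi(u)\exp(-\beta u)\,\textup{d}u \;+\; \int_0^t \phi(u)\exp(-\beta u)\,\Theta_{\chi_u}\,\textup{d}u.
\end{equation*}
The first term is $\delta_0\,\Phi(0,t)$ by definition of $\Phi$.

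For the second term, the key observation is that $u\mapsto \Theta_{\chi_u}$ is piecewise constant, equal to $\Theta_i$ on $[\tau_i,\tau_{i+1})$ for $i=0,1,\dots,\chi_t-1$, and equal to $\Theta_{\chi_t}$ on $[\tau_{\chi_t},t]$ (with the convention $\tau_0=0$). Using the partition $[0,t] = [0,\tau_1)\cup[\tau_1,\tau_2)\cup\cdots\cup[\tau_{\chi_t-1},\tau_{\chi_t})\cup[\tau_{\chi_t},t]$ and measurability of $\phi$, I would split the integral as
\begin{equation*}
\int_0^t \phi(u)\exp(-\beta u)\,\Theta_{\chi_u}\,\textup{d}u \;=\; \sum_{i=0}^{\chi_t-1} \Theta_i\,\Phi(\tau_i,\tau_{i+1}) \;+\; \Theta_{\chi_t}\,\Phi(\tau_{\chi_t},t).
\end{equation*}
Since $\Theta_0=0$ by definition, the $i=0$ term vanishes, and combining with the first piece yields the claimed identity.

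There is essentially no obstacle here: the result is a purely pathwise, deterministic statement, valid $\omega$ by $\omega$, once the explicit representation of $\delta$ as the sum of a deterministic exponential plus a càdlàg piecewise-constant process times $\exp(-\beta u)$ is invoked. The only mild care needed is to handle the boundary case $\chi_t=0$ (where the sum is empty and $\Theta_{\chi_t}=0$, so both sides reduce to $\delta_0 \Phi(0,t)$) and to note that $\phi$ being merely measurable is enough since the integrand is bounded by a constant times $|\phi(u)|$ on the finite interval $[0,t]$.
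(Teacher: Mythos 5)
Your proof is correct and follows essentially the same route as the paper's: both substitute the explicit representation $\delta_u = \delta_0\exp(-\beta u) + \exp(-\beta u)\,\Theta_{\chi_u}$ and split the integral over the intervals $[\tau_i,\tau_{i+1})$ where $\Theta_{\chi_u}$ is constant. Your additional remarks on the $i=0$ term vanishing via $\Theta_0=0$ and the empty-sum case $\chi_t=0$ are correct and merely make explicit what the paper leaves implicit.
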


\begin{proof}
The proof is straightforward since for 
%$u \in [\chi_t,t], \quad \delta_u = \delta_0 \ \exp(-\beta u) + \alpha \exp(-\beta u) \ \Theta_{\chi_t}$
$u \in [\chi_t,t], \quad \delta_u = \delta_0 \ \exp(-\beta u) + \exp(-\beta u) \ \Theta_{\chi_t}$
 and for $i \in \{0,\cdots,\chi_t-1\}$ and $u \in [\tau_i,\tau_{i+1})$, 
%$\delta_u = \delta_0 \ \exp(-\beta u) + \alpha \exp(-\beta u) \ \Theta_i$.
$\delta_u = \delta_0 \ \exp(-\beta u) + \exp(-\beta u) \ \Theta_i$.
%
%\begin{eqnarray}
%\int_0^t \phi(u) \ \delta_u \ \textup{d} u  
%& = & \int_{\tau_{\chi_t}}^t \phi(u) \ \delta_u \ \textup{d} u 
%\ + \ \overset{\chi_t-1}{\underset{i=0}{\sum}} \int_{[\tau_i,\tau_{i+1})} \phi(u) \ \delta_u \ \textup{d} u  
%\nonumber \\
% & = & \delta_0 \ \Phi(0,t) 
%\ + \ \alpha \ \int_{\tau_{\chi_t}}^t \phi(u) \ \exp(-\beta u) \
%\Theta_{\chi_t} \ \textup{d} u 
%\nonumber \\
%& & \qquad \qquad + \ \alpha \ \overset{\chi_t-1}{\underset{i=0}{\sum}} \int_{[\tau_i,\tau_{i+1})} 
%\phi(u) \ \exp(-\beta u) \ \Theta_i \ \textup{d} u
%\nonumber \\
%& & \nonumber \\
%& = & \delta_0 \ \Phi(0,t) 
%\ + \ \alpha \ \Theta_{\chi_t} \ \Phi\left({\tau_{\chi_t}},t\right)
%\ + \  \alpha \ \overset{\chi_t-1}{\underset{i=1}{\sum}}
%\Theta_i \ \Phi(\tau_i,\tau_{i+1})
%\nonumber
%\end{eqnarray}
%
\end{proof}

To determine the optimal strategy, only the function $k$ given by~\eqref{eqn:k_aneqb} comes into play, thus the cases $\eta = 0$ and $\eta \neq 0$ can be treated simultaneously. We also note that 
\begin{equation}
\frac{\dd}{\dd u}[u^2 \ \omega(\eta u)] = u \zeta(\eta u)
\quad \text{ and } \quad
 \frac{\dd}{\dd u}[u \ \zeta(\eta u)] = \exp(-\eta u) 
\nonumber
\end{equation}
hold for for all $u \geq 0$ and $\eta \in \mathbb{R}$.
We use Equations~\eqref{eqn:k_aneqb} and~\eqref{eqn:martingale_cond_Hawkes} to obtain the following characterization of the strategy $X^*$: a.s., $\textup{d}t$-a.e. on $(0,T)$,
\begin{equation}
(1-\epsilon) X^*_t = - \ [1+\rho(T-t)] \ D^*_t
\ + \ \frac{m_1}{2\rho} \times \left[2+\rho (T-t) \times \left\{ 1+ \zeta(\eta (T-t))+ \nu \rho (T-t) \ \omega(\eta(T-t)) \right\}\right] \ \delta_t.
\label{eqn:martingale_cond_Hawkes_noncrit}
\end{equation}
The initial jump of $X^*$ at $t=0$ is such that (\ref{eqn:martingale_cond_Hawkes_noncrit}) is verified for $t=0^+$:
\begin{equation}
(1-\epsilon) (x_0 + \Delta X^*_0)  =  -  [1 + \rho T ] \left( D_0 + (1-\epsilon) \Delta X^*_0 \right)
 +  \frac{m_1}{2\rho} \times \left[2+\rho T \times \left\{ 1+ \zeta(\eta T)+ \nu \rho T \ \omega(\eta T) \right\}\right] \ \delta_0,
\end{equation}
which gives the initial trade at time $0$ as given in Appendix~\ref{appendix:formulas_opt_strat}.

%We differentiate Equation (\ref{eqn:martingale_cond_Hawkes_noncrit}) and use $\textup{d}\delta_t = -\beta \ \delta_t \ \textup{d}t + \alpha \ \textup{d}J_t$ to get
We differentiate Equation (\ref{eqn:martingale_cond_Hawkes_noncrit}) to get
\begin{eqnarray}
(1-\epsilon) \textup{d}X^*_t & = & \rho D^*_t \textup{d}t 
\ - \ [1 + \rho (T-t) ] \ \textup{d} D^*_t
\ - \ \frac{m_1}2 \times \left[ 1 +\exp(-\eta(T-t))+\nu \rho (T-t) \zeta(\eta(T-t)) \right] \ \delta_t \ \textup{d}t
\nonumber \\ 
& & \qquad + \ \frac{m_1}{2\rho} \times \left[2+\rho (T-t) \times \left\{ 1+ \zeta(\eta (T-t))+ \nu \rho (T-t) \ \omega(\eta(T-t)) \right\}\right] \ \textup{d}\delta_t.
\nonumber
\end{eqnarray}
This yields, using $\textup{d}\delta_t = -\beta \ \delta_t \ \textup{d}t + \textup{d}I_t$,
%
%\begin{eqnarray}
%(1-\epsilon) \textup{d} X^*_t
%& = & \rho D^*_t \ \textup{d}t 
%\ - \ m_1 \ \phi_\eta(t) \ \delta_t \ \textup{d}t
%\ + \  \frac{1 + \rho (T-t)}{2 + \rho (T-t)}
% \left\{\frac {\alpha m_1} \rho \ \textup{d} J_t - (1-\nu) \ \textup{d} N_t \right\}
%\label{intermed_Xstar_cont} \\
%& & + \ \frac{\alpha m_1}{2\rho} \times \frac{\rho (T-t) \times \left\{\zeta(\eta (T-t)) - 1 + \nu \rho [1-\zeta(\eta (T-t))]/\eta \right\}}{2+\rho(T-t)} \ \textup{d}J_t,
%\nonumber
%\end{eqnarray}
%
\begin{eqnarray}
(1-\epsilon) \textup{d} X^*_t
& = & \rho D^*_t \ \textup{d}t 
\ - \ m_1 \ \phi_\eta(t) \ \delta_t \ \textup{d}t
\ + \  \frac{1 + \rho (T-t)}{2 + \rho (T-t)}
 \left\{\frac {m_1} \rho \ \textup{d} I_t - (1-\nu) \ \textup{d} N_t \right\}
\label{intermed_Xstar_cont} \\
& & + \ \frac{m_1}{2\rho} \times \frac{\rho (T-t) \times \left\{\zeta(\eta (T-t)) - 1 + \nu \rho (T-t) \ \omega(\eta(T-t)) \right\}}{2+\rho(T-t)} \ \textup{d}I_t,
\nonumber
\end{eqnarray}
where for $t \in [0,T]$
\small
\begin{equation}
\phi_\eta(t) \ := \ \frac12 \times 
\frac {1 +\exp(-\eta(T-t))+\nu \rho (T-t) \zeta(\eta(T-t)) + \frac\beta\rho
 \left[2+\rho (T-t) \times \left\{ 1+ \zeta(\eta (T-t))+ \nu \rho (T-t) \ \omega(\eta(T-t)) \right\}\right]}
{2 + \rho (T-t)}
\nonumber
\end{equation}
\normalsize
and
%$\delta_t = \delta_0 \ \exp(-\beta t) + \alpha \underset{0<\tau\leq t}{\sum} \exp(-\beta (t-\tau)) \ \Delta J_\tau$.
$\delta_t = \delta_0 \ \exp(-\beta t) + \underset{0<\tau\leq t}{\sum} \exp(-\beta (t-\tau)) \ \Delta I_\tau$.
For $t \in (0,T)$,
\begin{eqnarray}
\textup{d} D^*_t  & = &  - \rho D^*_t \textup{d} t \ + \ (1-\epsilon) \textup{d} X^*_t \ + \ (1-\nu) \textup{d} N_t \nonumber \\
& = & \ - \ m_1 \ \phi_\eta(t) \ \delta_t \ \textup{d}t
\nonumber \\ 
%& & \qquad + \  \frac{(1-\nu) \ \textup{d} N_t}{2 + \rho (T-t)}
%\ + \ \frac{\alpha m_1}{2\rho} \times \frac{2+\rho (T-t) \times \left\{ 1+ \zeta(\eta (T-t))+ \nu \rho [1-\zeta(\eta (T-t))]/\eta \right\}}{2+\rho(T-t)} \ \textup{d}J_t,
& & \qquad + \  \frac{(1-\nu) \ \textup{d} N_t}{2 + \rho (T-t)}
\ + \ \frac{m_1}{2\rho} \times \frac{2+\rho (T-t) \times \left\{ 1+ \zeta(\eta (T-t))+ \nu \rho (T-t) \ \omega(\eta(T-t)) \right\}}{2+\rho(T-t)} \ \textup{d}I_t,
\nonumber
\end{eqnarray}
and we have
\begin{eqnarray}
D^*_{0^+} & = & D_0 + (1-\epsilon) \Delta X^*_0 \ = \ \frac {D_0 - (1-\epsilon) x_0} {2 + \rho T}
\ + \ \frac{m_1}{2\rho} \times \frac {2+\rho T \times \left\{ 1+ \zeta(\eta T)+ \nu \rho T \ \omega(\eta T) \right\}} {2 + \rho T} \ \delta_0
\nonumber \\
& & \nonumber \\
\int_{(0,t]} \textup{d} D^*_u & = & 
- \ m_1 \int_{(0,t]} \phi_\eta(u) \ \delta_u \ \textup{d} u
\ + \ \underset{0<\tau\leq t}{\sum} \frac{(1-\nu) \ \Delta N_\tau}{2 + \rho (T-\tau)}
\nonumber \\ 
%& & \qquad + \ \frac{\alpha m_1}{2\rho} \times \underset{0<\tau\leq t}{\sum} 
%\frac{2+\rho (T-\tau) \times \left\{ 1+ \zeta(\eta (T-\tau))+ \nu \rho [1-\zeta(\eta (T-\tau))]/\eta \right\}}{2+\rho(T-\tau)}\ \Delta J_\tau.
& & \qquad + \ \frac{m_1}{2\rho} \times \underset{0<\tau\leq t}{\sum} 
\frac{2+\rho (T-\tau) \times \left\{ 1+ \zeta(\eta (T-\tau))+ \nu \rho (T-\tau) \ \omega(\eta(T-\tau)) \right\}}{2+\rho(T-\tau)}\ \Delta I_\tau.
\nonumber
\end{eqnarray}
We define
$ \ \Phi_\eta(s,t) := \int_s^t \phi_\eta(u) \exp(-\beta u) \ \textup{d}u \ $ for $0 \leq s \leq t \leq T$.
Lemma \ref{lemma:int_phi_delta} yields for $t \in [0,T]$
%
%\begin{equation}
%\int_0^t \phi_\eta(u) \ \delta_u \ \textup{d} u  
%\ = \  \delta_0 \ \Phi_\eta(0,t) 
%\ + \ \alpha \ \Theta_{\chi_t} \ \Phi_\eta\left({\tau_{\chi_t}},t\right)
%\ + \  \alpha \ \overset{\chi_t-1}{\underset{i=1}{\sum}}
%\Theta_i \ \Phi_\eta(\tau_i,\tau_{i+1}).
%\nonumber
%\end{equation}
%
\begin{equation}
\int_0^t \phi_\eta(u) \ \delta_u \ \textup{d} u  
\ = \  \delta_0 \ \Phi_\eta(0,t) 
\ + \ \Theta_{\chi_t} \ \Phi_\eta\left({\tau_{\chi_t}},t\right)
\ + \ \overset{\chi_t-1}{\underset{i=1}{\sum}}
\Theta_i \ \Phi_\eta(\tau_i,\tau_{i+1}).
\nonumber
\end{equation}
We obtain the expression of $D^*_t$ for $t \in (0,T)$
\begin{eqnarray}
D^*_t & = & \frac {D_0 - (1-\epsilon)x_0} {2+\rho T} \ + \
\frac {\delta_0 m_1}{2\rho} \times \left[ \frac {2+\rho T \times \left\{ 1+ \zeta(\eta T)+ \nu \rho T \ \omega(\eta T) \right\}} {2 + \rho T}
\ - \ 2 \rho \ \Phi_\eta(0,t) \right] \nonumber \\
%& & \quad - \ \alpha m_1 \left[ \Theta_{\chi_t} \ \Phi_\eta\left({\tau_{\chi_t}},t\right)
%\ + \ \overset{\chi_t-1}{\underset{i=1}{\sum}}
%\Theta_i \ \Phi_\eta(\tau_i,\tau_{i+1})  \right]
%\ + \ \underset{0<\tau\leq t}{\sum} \frac{(1-\nu) \ \Delta N_\tau}{2 + \rho (T-\tau)} 
& & \quad - \ m_1 \left[ \Theta_{\chi_t} \ \Phi_\eta\left({\tau_{\chi_t}},t\right)
\ + \ \overset{\chi_t-1}{\underset{i=1}{\sum}}
\Theta_i \ \Phi_\eta(\tau_i,\tau_{i+1})  \right]
\ + \ \underset{0<\tau\leq t}{\sum} \frac{(1-\nu) \ \Delta N_\tau}{2 + \rho (T-\tau)} 
\nonumber \\
%& & \quad 
%+ \ \frac{\alpha m_1}{2\rho} \times \underset{0<\tau\leq t}{\sum} 
%\frac{2+\rho (T-\tau) \times \left\{ 1+ \zeta(\eta (T-\tau))+ \nu \rho [1-\zeta(\eta (T-\tau))]/\eta \right\}}{2+\rho(T-\tau)}\ \Delta J_\tau.
& & \quad 
+ \ \frac{m_1}{2\rho} \times \underset{0<\tau\leq t}{\sum} 
\frac{2+\rho (T-\tau) \times \left\{ 1+ \zeta(\eta (T-\tau))+ \nu \rho (T-\tau) \ \omega(\eta(T-\tau)) \right\}}{2+\rho(T-\tau)}\ \Delta I_\tau.
\nonumber
\end{eqnarray}
From~\eqref{intermed_Xstar_cont}, the strategy $X^*$ on $(0,T)$ is as given in Appendix~\ref{appendix:formulas_opt_strat}. By using again~\eqref{eqn:martingale_cond_Hawkes_noncrit}, we also get the final trade at time $T$.

We determine the function $\Phi_\eta$ in the case $\eta \neq 0$ (similar and simpler calculations yield the result for $\eta = 0$). We write
\begin{eqnarray}
\exp(-\eta(T-t)) \times \exp(- \beta t) &= & \exp(-\beta T) \times \exp(\alpha(T-t)),
\nonumber \\
(T-t) \zeta(\eta(T-t)) \times \exp(- \beta t) & = & \frac{\exp(-\beta T)}\eta \times [\exp(\beta (T-t)) - \exp(\alpha(T-t))].
\nonumber
\end{eqnarray}
Thus,
$
\phi_\eta(t) \times \exp(\beta (T-t))
$
is equal to
\small
\begin{equation}
\frac\beta2 \left( \frac1\rho+\frac\nu\eta\right) \times \exp(\beta(T-t))
\ + \ \left[ \frac12+\frac{\nu(\rho-2\beta)}{2\eta}+\frac\beta{2\eta}\left(1-\frac{\nu\rho}\eta\right) \right] \frac{\exp(\beta(T-t))}{2+\rho(T-t)}
\ + \ \left[ \frac12-\frac{\nu\rho}{2\eta}-\frac\beta{2\eta}\left(1-\frac{\nu\rho}\eta\right) \right] \frac{\exp(\alpha(T-t))}{2+\rho(T-t)},
\nonumber
\end{equation}
\normalsize
which yields for $0 \leq s \leq t \leq T$,
\begin{eqnarray}
\Phi_\eta(s,t) & = &
\frac12 \left( \frac1\rho+\frac\nu\eta\right) \times [\exp(-\beta s)-\exp(-\beta t)]
\nonumber \\
& & \quad + \ \frac{\exp(-\beta T)}{2\rho} \times
\left[ 1+\frac{\nu(\rho-2\beta)}\eta+\frac\beta\eta\left(1-\frac{\nu\rho}\eta\right) \right] \times
[L(\rho,\beta,T-s)-L(\rho,\beta,T-t) ]
\nonumber \\
& & \quad + \ \frac{\exp(-\beta T)}{2\rho} \times
\left[ 1-\frac{\nu\rho}\eta-\frac\beta\eta\left(1-\frac{\nu\rho}\eta\right) \right] \times
[L(\rho,\alpha,T-s)-L(\rho,\alpha,T-t) ].
\nonumber
\end{eqnarray}
with $\eta = \beta - \alpha \neq 0$.

\section{Proof of Theorem~\ref{Thm_Pmg}}\label{appendix:proof_Thm_Pmg} Let $X$ be an admissible strategy. We introduce the following processes: $S^N_t=S_0+\frac \nu q (N_t-N_0)$, $S^X_t=\frac \epsilon q (X_t-X_0)$,
$$ \textup{d}D^N_t  = - \rho  D^N_t  \textup{d}t
 +  \frac {1-\nu}{q}  \textup{d}N_t \text{ and }    \textup{d}D^X_t  = - \rho  D^X_t  \textup{d}t
 +  \frac {1-\epsilon}{q}  \textup{d}X_t,$$
with $D^N_0=D_0$ and $D^X_0=0$. Thus, we have $S=S^N+S^X$, $D=D^N+D^X$ and thus $P=P^N+P^X$, where $P^N=S^N+D^N$ and $P^X=S^X+D^X$. From~\eqref{CostX}, we have 
$$C(X)=\int_{[0,T)} P^N_u \ \textup{d}X_u - P_T^N X_T + C^{\text{OW}}(X),$$
where 
$$C^{\text{OW}}(X)= \int_{[0,T)} P^X_u \ \textup{d}X_u 
\ + \ \frac1{2q} \underset{0\leq\tau<T}{\sum} (\Delta X_\tau)^2
 \ - \ P_T^X X_T \ + \ \frac1{2q} \ X_T^2$$
is a deterministic function of~$X$ that corresponds to the cost when $N \equiv 0$, which is the Obizhaeva and Wang model.
We now make an integration by parts as in Remark~\ref{model_plus_mg} and get that
$$\int_{[0,T)} P^N_u \ \textup{d}X_u - P_T^N X_T =-\int_{[0,T)} X_u \ \textup{d}P^N_u.$$
When  $P^N$ is a martingale, this term has a null expectation.
Therefore, the optimal execution strategy is the same as in the Obizhaeva and Wang model, see Gatheral, Schied and Slynko~\cite{GSS}, Example 2.12, and there is no PMS.
Otherwise, we can find $0 \le s <t\le T$ such that $\E[P^N_t|\mathcal{F}_s]$ and $P^N_s$ are not almost surely equal. In this case, we consider the strategy $X_u= \E[P^N_t-P^N_s|\mathcal{F}_s] \mathbf{1}_{u\in (s,t]}$ that is a round-trip, i.e. $X_0=X_{T+}=0$. We then get
$$\E \left[-\int_{[0,T)} X_u \ \textup{d}P^N_u \right]=-\E[ (P^N_t-P^N_s) \E[P^N_t-P^N_s|\mathcal{F}_s]]=-\E[ \E[P^N_t-P^N_s|\mathcal{F}_s]^2]<0.$$
Since $C^{\text{OW}}(cX)=c^2C^{\text{OW}}(X)$, we can find $c$ small enough such that $E[C(cX)]=-c\E[ \E[P^N_t-P^N_s|\mathcal{F}_s]^2]+c^2C^{\text{OW}}(X)<0$, and therefore $cX$ is a PMS.

\bibliography{ref_Hawkes}
\bibliographystyle{plain}

\end{document}